\newcommand{\yslant}{0.20}
\newcommand{\xslant}{-0.6}
\newcommand     {\Cset}    {{\Bbb C}}
\newcommand     {\Rset}    {{\Bbb R}}
\newcommand     {\Id}   {\mathcal{I}}
\newtheorem{thm}{Theorem}[section]
\def\real{\mathbb{R}}
\newcommand	{\MRF}{TaCMM\xspace}
\newcommand{\MULTIJ}{MultiJaccard\xspace}
\newcommand{\NORM}{\propto}
\newcommand     {\rr}    {{\bf r}}
\renewcommand     {\AA}    {{\mathcal A}}
\newcommand     {\LL}    {{k}}
\newcommand     {\Zset}    {{\Bbb Z}}
\renewcommand\footnotetextcopyrightpermission[1]{} 
\begin{document}
\title{A New Framework for Centrality Measures in Multiplex Networks}


\author{Carlo Spatocco}
\affiliation{%
  \institution{Dip. Matematica - La Sapienza}
  \city{Rome} 
  \state{Italy} 
}
\email{spatocco@mat.uniroma1.it}

\author{Giovanni Stilo}
\affiliation{%
  \institution{Dip. Informatica - La Sapienza}
  \city{Rome} 
  \state{Italy} 
}
\email{stilo@di.uniroma1.it}

\author{Carlotta Domeniconi}
\affiliation{%
  \institution{George Mason University}
  \city{Fairfax} 
  \state{USA} 
}
\email{cdomenic@gmu.edu}

\author{Alessandro D'Andrea}
\affiliation{%
  \institution{Dip. Matematica - La Sapienza}
  \city{Rome} 
  \state{Italy} 
}
\email{dandrea@mat.uniroma1.it}

\renewcommand{\shortauthors}{Spatocco et al.}

\begin{abstract}

Any kind of transportation system, from trains, to buses and flights, can be modeled as networks. In biology, networks capture the complex interplay between phenotypes and genotypes. More recently, people and organizations heavily interact with one another using several media (e.g. social media platforms, e-Mail, instant text and voice messages), giving rise to  correlated communication networks.



The non-trivial  structure of  such complex systems makes the analysis of their collective behavior a challenge. The problem is even more difficult when the information is distributed across networks (e.g., communication networks in different media); in this case, it becomes impossible to have a complete, or even partial picture, if situations are analyzed separately within each network due to sparsity.


A multiplex network is well-suited to model the complexity of this kind of systems by preserving the semantics associated with each network. Centrality measures are fundamental for the identification of key players, but existing approaches are typically designed to capture a predefined aspect of the system, ignoring or merging the semantics of the individual layers.

To overcome the aforementioned limitations, we present a Framework for Tailoring Centrality Measures in Multiplex networks (\MRF), which offers a flexible methodology that encompasses and generalizes previous approaches. The strength of \MRF is to enable the encoding of specific dependencies between the subnets of multiplex networks to define semantic-aware centrality measures.


We develop a theoretically sound iterative method, based on Perron-Frobenius theory, designed to be effective also in high-sparsity conditions. We formally and experimentally prove its convergence for ranking computation. We provide a thorough investigation of our methodology against existing techniques using different types of subnets in multiplex networks. The results clearly show the power and flexibility of the proposed framework.

\end{abstract}

%
%
\begin{CCSXML}
<ccs2012>
<concept>
<concept_id>10002950.10003714.10003715.10003719</concept_id>
<concept_desc>Mathematics of computing~Computations on matrices</concept_desc>
<concept_significance>500</concept_significance>
</concept>
<concept>
<concept_id>10002950.10003705.10003707</concept_id>
<concept_desc>Mathematics of computing~Solvers</concept_desc>
<concept_significance>300</concept_significance>
</concept>
<concept>
<concept_id>10002951.10003227.10003241.10003244</concept_id>
<concept_desc>Information systems~Data analytics</concept_desc>
<concept_significance>500</concept_significance>
</concept>
<concept>
<concept_id>10002951.10003227.10003351</concept_id>
<concept_desc>Information systems~Data mining</concept_desc>
<concept_significance>500</concept_significance>
</concept>
<concept>
<concept_id>10002951.10003260.10003277</concept_id>
<concept_desc>Information systems~Web mining</concept_desc>
<concept_significance>300</concept_significance>
</concept>
<concept>
<concept_id>10003752.10003809.10003635</concept_id>
<concept_desc>Theory of computation~Graph algorithms analysis</concept_desc>
<concept_significance>500</concept_significance>
</concept>
<concept>
<concept_id>10010147.10010178</concept_id>
<concept_desc>Computing methodologies~Artificial intelligence</concept_desc>
<concept_significance>100</concept_significance>
</concept>
</ccs2012>
\end{CCSXML}

\ccsdesc[500]{Mathematics of computing~Computations on matrices}
\ccsdesc[300]{Mathematics of computing~Solvers}
\ccsdesc[500]{Information systems~Data analytics}
\ccsdesc[500]{Information systems~Data mining}
\ccsdesc[300]{Information systems~Web mining}
\ccsdesc[500]{Theory of computation~Graph algorithms analysis}
\ccsdesc[100]{Computing methodologies~Artificial intelligence}

\keywords{Multiplex Networks, Centrality Measures, Social Networks, Complex Systems}

\maketitle


\section{Introduction}\label{sec:introduction}
Networks are present in all aspects of our world and constitute the backbone of many utilities, such as gas, electricity, and water. Networks permeate any kind of transportation systems (train, buses, air-flights, and naval). In biology, networks model the complex interplay between phenotypes and genotypes. 
People, states, and organizations heavily interact with one another on a daily basis using several types of media,  giving rise to many correlated communication networks. As a result, the participating entities become heavily interconnected through several social platforms (Facebook, Twitter, Instagram, etc.), telephone, short text and voice messages (e.g., e-Mail, Messenger, SMS, and WhatsApp).


The non-trivial  structure of  such complex systems makes the analysis of their collective behavior a challenge. The problem is even more difficult when the information is distributed across networks (e.g., communication networks in different media); in this case, it becomes impossible to have a complete, or even partial picture, if situations are analyzed separately within each network due to sparsity.

A multiplex network is well-suited to model the complexity of this kind of systems by preserving the semantics associated with each network. Centrality measures are fundamental for the identification of key players, but existing approaches are typically designed to capture a predefined aspect of the system, ignoring or merging the semantics of the individual layers.
 
As an example, consider a scenario with three companies; in reality, each one of them could be fragmented in sub-companies, and different kinds of media could be used to communicate.
The first one is the \textit{Wayne Enterprises, Inc.}, \textbf{\textit{WayneCorp}}, which owns mining companies, oil drilling and refineries, and also has business in technology, biotechnology, pharmaceuticals, and health-care. 
The second one is \textbf{\textit{LexCorp}}, an international conglomerate with interests in utilities, waste management, industrial manufacturing, chemicals, bio-engineering, weapons, pharmaceuticals, oil, and more. 
The last one is the consulting company \textbf{\textit{E-Corp.}}, which actively provides solutions in severals fields to the \textit{WayneCorp} and the \textit{LexCorp}.

Suppose there is a suspicion of leaking confidential information involving various sectors from the \textit{WayneCorp} to the \textit{LexCorp}. The main suspects are among the \textit{E-Corp} employees.
Our aim is to identify the employees responsible for the information leakage.
To this end,  we need to find the users that have the following characteristics: (1) The employee works for both the \textit{WayneCorp} and for the \textit{E-Corp.}; (2) The position within the \textit{WayneCorp} network allows the employee to collect  various information; this role must be enforced by the capacity of collecting information in high quantity and quality; (3) The employee must also have the capacity to spread the collected information  directly or indirectly to employees  in higher positions in the \textit{LexCorp} company.

A natural way of modeling the semantics of the described scenario is to use a multiplex network, where each layer collects the interactions that involve the employees of a given company. The first layer collects the interactions among the employees of the \textit{WayneCorp}, and the interactions among the employees of the \textit{WayneCorp} and of the \textit{E-Corp}. The second layer collects the interactions among the employees  of the \textit{LexCorp}, and the interactions of the employees of the \textit{LexCorp} and of the \textit{E-Corp}.
The last layer, contains all the interactions between the employees of the \textit{E-Corp}.

Even if our model of the real-world scenario is accurate and preserves the wanted 
semantics, no existing approach in the literature can assist us in achieving the aims stated above.  The problem is even more difficult because in our scenario the information is distributed across networks.
The sparsity of the networks is a big challenge. In our example, the three networks have a certain degree of local density, but many nodes are disconnected (e.g., in the first layer, the nodes corresponding to the \textit{LexCorp} employees). One possible solution is to collapse all the layers in one, but in this case we'd loose semantics, and retaining semantics is paramount for our goal. Another solution is to  apply a specific (centrality) measure to each layer of the multiplex network, and then combine the results. But this trivial solution  does not consider all the interactions simultaneously. Furthermore, the standard centrality measures are not well-suited for our scope, and no existing method enables a flexible environment to  define different centrality measures.

To overcome the aforementioned limitations, we present a Framework for Tailoring Centrality Measures in Multiplex networks (\MRF), which offers a flexible methodology that encompasses and generalizes previous approaches. The strength of \MRF is to enable the encoding of specific dependencies between the subnets of multiplex networks to define semantic-aware centrality measures.

We develop a theoretically sound iterative method, based on Perron-Frobenius theory, designed to be effective also in high-sparsity conditions. We formally and experimentally prove its convergence for ranking computation. We provide a thorough investigation of our methodology against existing techniques using different types of subnets in multiplex networks. The results clearly show the power and flexibility of the proposed framework.

The rest of the paper is organized as follows. Section \ref{related} discusses related work. In Section \ref{sec:TaCMM}, we formally define the problem, and discuss our proposed framework, the iterative solution, its convergence and implementation.  Section \ref{sec:exp} presents our extensive empirical evaluation and analysis. Section \ref{conclusion} concludes the paper.

\section{Related Work}
\label{related}
The concept of centrality in networks has always been fundamental for  understanding  the system being modeled. The first centrality measure introduced in the literature,  \textit{degree centrality}, simply assigns to each node the number of  incident edges. Bavelas  \cite{bavelas1948mathematical} introduced \textit{closeness centrality} for undirected graphs, defined as the inverse of the sum of the length of all the paths from a node to every other node in the network. Lin \cite{lin1976foundations} modified the concept of closeness centrality for directed networks, by taking into account the unreachable pairs of nodes. Based on the notion of shortest paths, Anthonisse \cite{anthonisse1971rush} first, and  Freeman  \cite{freeman1977set} later, introduced the concept of \textit{betweenness centrality} as the probability that a random shortest path goes through a node. 

Another approach to the problem of measuring centrality is focusing on the properties of the adjacency matrix, rather then studying the network combinatorial properties. Several studies are based on the principal eigenvector of the adjacency matrix \cite{seeley1949net}, \cite{wei1952algebraic}, and \cite{claude1966theorie}. These measures typically find the left principal eigenvector of a matrix, and this is possible, as in our case, thanks to the Perron-Frobenius theorem that ensures the existence and uniqueness of this vector under the hypotheses of irreducibility and aperiodicity.  
The main idea  is to model the network as a Markov chain and associate probabilities to nodes. Many well-known centrality measures built upon this idea, e.g. \textit{PageRank} \cite{page1999pagerank}, \textit{HITS} \cite{Kleinberg:1999}, and \textit{SALSA} \cite{lempel2001salsa}.
Both \textit{PageRank} and \textit{HITS} were conceived to rank web pages. \textit{SALSA} is similar to \textit{HITS}, but uses normalized matrices.  \textit{PageRank} assigns a single score value to each node, representing the  probability of finding a net surfer on a certain web page after an infinite number of clicks on links, starting from a random web page. \textit{HITS} and \textit{SALSA}, instead, give two score values. They use the adjacency matrix and its transpose, and  iteratively compute the score vectors. The returned scores capture the importance of a web page (its \textit{authority}), measured by considering the web pages that point to it, and the  \textit{hubness}, measured by considering the pages linked to by the page under consideration.

Multiplex PageRank \cite{halu2013multiplex}  extends PageRank  to a multiplex network. The main idea is to let  the centrality of a node in a layer be influenced by  the centrality of the same node in different layers. Starting from a fixed sequence of layers, the method computes the classic PageRank over the first layer, and uses it to calculate the  PageRank scores for the second one, and so on. The result is a score vector defined as the limit on the number of iterations. The formulation also considers some parameters to express the nature of the interactions between the layers. Varying those parameters, different kinds of Multiplex PageRank can be obtained: additive, multiplicative, combined, and neutral. 
This solution provides a single score vector and consequently causes loss of information. In contrast, our method produces several score vectors, relative to the layers, hence more information is captured by the rankings. It's up to the user whether and how the resulting rankings should be aggregated.

The exponential growth of data in the last decades has also increased the complexity of the resulting networked data. As a result, more sophisticated structures, e.g. multilayer or multiplex networks, capable of capturing more than one relation between nodes have been developed. This also raised the need for new centrality measures. Many approaches  extended the classic centrality measures to the case of multilayer or multiplex networks. In particular, an elegant tensor-based formulation was introduced to extend all the standard centrality measures for multilayer networks \cite{de2013mathematical,dedomenico2015ranking}.

A tensor $M=M_{j\beta}^{i\alpha}$ can be thought as a four-dimensional matrix with  positive real entries representing the weight of the edge between the node $i$ in  layer $\alpha$ and the node $j$ in the layer $\beta$. Using this new language, all the classic centrality measures are extended to multilayer graphs.
Score vectors (tensors) are computed considering the whole structure of the system. This is achieved by considering all the incident edges of a fixed node, including those across layers. The structure of the used tensor differentiates the  measures being calculated. Once the tensor structure is fixed, the resulting measures are also fixed. Our approach is fundamentally different.  We enable a framework that encompasses and generalizes all previous proposed measures, and allows the user to choose the most suitable setting based on the problem under investigation. 




\sloppy
\section{A Framework for Tailoring Centrality Measures}\label{sec:TaCMM}
In this section, we present \MRF, a Framework for Tailoring Centrality Measures in Multiplex networks. \MRF offers a flexible methodology that encompasses and generalizes previous approaches.
We start this section by formally describing the problem setting.
We then show how  \MRF can encode specific dependencies between  subnets of multiplex networks to define semantic-aware centrality measures.
We  present an iterative method to compute the rankings, based on Perron-Frobenius theory, and designed to be effective also in high-sparsity conditions. Furthermore, we present a proof of convergence, and finally we give  implementation details of the \MRF framework.

\subsection{Problem Definition}
\label{sec:problem_def}
Let $\mathcal{M}$ be a multiplex network (or multigraph) composed of $\mathcal{L}$ direct graphs $G_\ell=(V,E_\ell)$, with $0 \le \ell < \mathcal{L}$ (see Figure \ref{fig:multiplex}). Each graph $G_\ell$  contains the same set of vertices $V$, where  $|V|=n$. $E_\ell$ is the set of direct edges of graph $G_\ell$.
Let $A_\ell$ be the $n \times n$ out-link matrix of $G_\ell$, where $A_\ell (ij)=w(i,j) > 0$ if a direct edge from node $i$ to node $j$ exists, and $0$ otherwise.
 $A^T_\ell$ is the transpose of $A_\ell$ and represents the in-link matrix of $G_\ell$. 
 
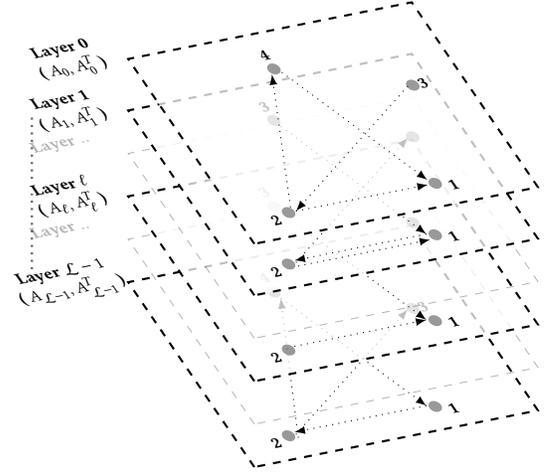
\begin{figure}[ht]
\centering
\begin{tikzpicture}[scale=0.65,on grid]
	\begin{scope}[
		yshift=-130,
		every node/.append style={yslant=\yslant,xslant=\xslant},
		yslant=\yslant,xslant=\xslant
	]
		\fill[white,fill opacity=.75] (1,1.5) rectangle (6.8,5.8); 
		\draw[black, dashed,thick] (1,1.5) rectangle (6.8,5.8); 
		
		\draw[fill=black,opacity=0.4, thick] 
			(5,2) node(111){} circle (.1) 
			(2,2) circle (.1)
			(5.8,4.1) circle (.1)
			(3.5,5) circle (.1);

		\draw[-latex, thin,dotted]
			(3.55,4.85) to (4.85,2.05);
			
		\draw[-latex, thin,dotted]
			(2.15,1.92) to (3.55,4.85); 			
			
		\draw[-latex, thin,dotted]
			(2.15,2) to (5.7,4.1);
				
		\draw[-latex, thin,dotted]
			(4.85,2.05) to (2.15,2.05);

		\fill[black]
			(-0.1,6.0) node[above, scale=.7] {\textbf{Layer $\mathcal{L}-1$}}
			(-0.1,5.5) node[above, scale=.7] {\textbf{(} $A_{\mathcal{L}-1}$, $A^T_{\mathcal{L}-1}$ \textbf{)}}
			(5.1,1.9) node[right,scale=.7]{\bf 1}
			(1.9,1.9) node[left,scale=.7]{\bf 2}
			(5.8,4.1)  node[right,scale=.7]{\bf 3}
			(3.5,5.1) node[above,scale=.7]{\bf 4}; 
			
	\end{scope} 
	
	\begin{scope}[
		yshift=-105,
		every node/.append style={yslant=\yslant,xslant=\xslant},
		yslant=\yslant,xslant=\xslant	]	
		\draw[black, dashed, very thin,opacity=0.3] (1,1.5) rectangle (6.8,5.8);  	
		\fill[black]
			(-0.1,6.0) node[above, scale=.7,opacity=0.3] {\textbf{Layer $..$}};
	\end{scope}

	\begin{scope}[
		yshift=-80,
		every node/.append style={yslant=\yslant,xslant=\xslant},
		yslant=\yslant,xslant=\xslant	]	
		\fill[white,fill opacity=.75] (1,1.5) rectangle (6.8,5.8); 
		\draw[black, dashed, thick] (1,1.5) rectangle (6.8,5.8);  
			
		\draw[fill=black,opacity=0.4, thick]  
			(5,2) node(111){} circle (.1) 
			(2,2) circle (.1)
			(5.8,4.1) circle (.1)
			(3.5,5) circle (.1);
			
			\draw[-latex, thin,dotted]
			(3.55,4.85) to (4.85,2.05);
			\draw[-latex, thin,dotted]
			(2.15,2.05) to (4.85,2.05);
		
		\fill[black]
			(-0.1,6.0) node[above, scale=.7] {\textbf{Layer $\ell$}}
			(-0.1,5.5) node[above, scale=.7] {\textbf{(}  $A_\ell$, $A_\ell^T$ \textbf{)} }
			(5.1,1.9) node[right,scale=.7]{\bf 1}
			(1.9,1.9) node[left,scale=.7]{\bf 2}
			(3.5,5.1) node[above,scale=.7]{\bf 3};
	\end{scope}
	
	\begin{scope}[
		yshift=-55,
		every node/.append style={yslant=\yslant,xslant=\xslant},
		yslant=\yslant,xslant=\xslant	]	
		\draw[black, dashed, very thin,opacity=0.3] (1,1.5) rectangle (6.8,5.8);  	
		
		\fill[black]
			(-0.1,6.0) node[above, scale=.7,opacity=0.3] {\textbf{Layer $..$}};
		
	\end{scope}

	\begin{scope}[
		yshift=-30,
		every node/.append style={yslant=\yslant,xslant=\xslant},
		yslant=\yslant,xslant=\xslant
	] 
		\fill[white,fill opacity=.75] (1,1.5) rectangle (6.8,5.8); 
		\draw[black, dashed, thick] (1,1.5) rectangle (6.8,5.8); 
		
		\draw[fill=black,opacity=0.4, thick]  
			(5,2) node(111){} circle (.1) 
			(2,2) circle (.1)
			(5.8,4.1) circle (.1)
			(3.5,5) circle (.1);
		 
		\draw[-latex, thin,dotted]
			(3.55,4.85) to (4.85,2.05);
		\draw[-latex, thin,dotted]
			(2.15,1.92) to (4.85,1.92);

		\draw[-latex, thin,dotted]
			(2.15,2) to (5.7,4.1);
				
		\draw[-latex, thin,dotted]
			(4.85,2.05) to (2.15,2.05); 
		\fill[black]
			(-0.1,6.0) node[above, scale=.7] {\textbf{Layer 1}}	
			(-0.1,5.5) node[above, scale=.7] {\textbf{(}  $A_1$, $A_1^T$ \textbf{)} }
			(5.1,1.9) node[right,scale=.7]{\bf 1}
			(1.9,1.9) node[left,scale=.7]{\bf 2}
			(3.5,5.1) node[above,scale=.7]{\bf 3};	
	\end{scope}
	
	\draw[black, dotted, thick, decoration={ segment length=1mm, amplitude=0.6mm},opacity=0.65] (-4.45,1) to (-4.45, 2.55);
	\draw[black, dotted, thick, decoration={ segment length=1mm, amplitude=0.6mm},opacity=0.65] (-4.45,2.85) to (-4.45, 4.2);

	\begin{scope}[
		yshift=0,
		every node/.append style={yslant=\yslant,xslant=\xslant},
		yslant=\yslant,xslant=\xslant
	]
		\fill[white,fill opacity=.75] (1,1.5) rectangle (6.8,5.8); 
		\draw[black, dashed, thick] (1,1.5) rectangle (6.8,5.8); 
		
		\draw [fill=black,opacity=0.4, thick]
			(5,2) node(111){} circle (.1)
			(2,2) circle (.1)
			(5.8,4.1) circle (.1)
			(3.5,5) circle (.1);

		\draw[-latex, thin,dotted]
			(3.6,4.9) to (4.9,2.1);
		\draw[-latex, thin,dotted]
			(2.15,2) to (4.85,2);
			
		\draw[-latex, thin,dotted]
			(5.7,4.1) to (2.15,2);
			
		\draw[-latex, thin,dotted]
			(2.1,2.1) to (3.4,4.9);

		\fill[black]
			(-0.1,6.0) node[above, scale=.7] {\textbf{Layer 0}}
			(-0.1,5.5) node[above, scale=.7] {\textbf{(} $A_0$, $A_0^T$ \textbf{)} }
			(5.1,1.9) node[right,scale=.7]{\bf 1}
			(1.9,1.9) node[left,scale=.7]{\bf 2}
			(5.8,4.1)  node[right,scale=.7]{\bf 3}
			(3.5,5.1) node[above,scale=.7]{\bf 4}; 
			
	\end{scope} 
\end{tikzpicture}
\caption{Multiplex Network with $\mathcal{L}$ layers
    \label{fig:multiplex}}
\end{figure}

We want to design a  formal method which allows to define flexible cyclic rankings over a multiplex network.
Then our goal is to iteratively compute those cyclic ranking of normalized\footnote{Here and below, ``$\propto \bf{r}$'' means ``$= {\bf{r}}/||{\bf{r}}||_1$''.}  score vectors ${\bf{r}}_s \in [0,1]^n$.

As an example, let us consider the case of two layers, that is $\mathcal{L} = 2$.  If we want to compute HITS-like rankings, we need to formalize 
the dependencies as follows:\\
$${\bf{r}}^{t}_0 \NORM A_0^T{\bf{r}}^{t}_1 $$
$${\bf{r}}^{t}_1 \NORM A_0{\bf{r}}^{t}_2$$
$${\bf{r}}^{t}_2 \NORM A_1^T{\bf{r}}^{t}_3$$
$${\bf{r}}^{t}_3 \NORM A_1{\bf{r}}^{t-1}_0$$
and each score vector ${\bf{r}}_s^t$ can be computed as follows:
$${\bf{r}}^{t}_0 \NORM A_0^TA_0A_1^TA_1 {\bf{r}}^{t-1}_0$$
$${\bf{r}}^{t}_1 \NORM A_0A_1^TA_1A_0^T {\bf{r}}^{t-1}_1$$
$${\bf{r}}^{t}_2 \NORM A_1^TA_1A_0^TA_0 {\bf{r}}^{t-1}_2$$
$${\bf{r}}^{t}_3 \NORM A_1A_0^TA_0A_1^T {\bf{r}}^{t-1}_3$$

\subsection{Semantic-aware centrality measures}\label{subsec:tailored}
Here we introduce a graphical representation (given in Figure \ref{fig:frmwGraph1} (c)) that captures the cyclic dependencies among the semantic-aware HITS-like rankings ${\bf{r}}_0$, ${\bf{r}}_1$, ${\bf{r}}_2$, and ${\bf{r}}_3$, as computed above. The resulting graph is a ring with rankings as nodes, and direct edges labeled with the matrices that specify the dependencies among pairs of rankings.

We call this ring a {\textit{configuration}}. A configuration $c$ induces an order among the matrices, as defined by the dependencies. The configuration in Figure \ref{fig:frmwGraph1} (c) induces the ordered sequence $A_0 A_1^T A_1 A_0^T$, where we adopt the convention of starting the sequence from the lowest indexed matrix. Since the dependencies are cyclic, 
Figure \ref{fig:frmwGraph1} (c) equally represents the additional three {\textit{equivalent}} sequences -- $A_1^T A_1 A_0^T A_0$, $A_1 A_0^T A_0 A_1^T$, and $A_0^T A_0 A_1^T A_1$ -- obtained by shifting the initial sequence by $h$ positions, where $0 \le h < |c|$.
Thus, a configuration identifies an {\textit{equivalence class}} of cyclic ordered sequences, and we use the lowest index order convention to select the representative one. For the configuration in Figure \ref{fig:frmwGraph1} (c), the corresponding equivalent class is $\{ A_0 A_1^T A_1 A_0^T,\quad A_1^T A_1 A_0^T A_0, \quad A_1 A_0^T A_0 A_1^T,\quad A_0^T A_0 A_1^T A_1  \}$. Using the \textit{shift} $h=3$ yields the HITS-like formula $A_0^T A_0 A_1^T A_1$. Henceforth, we refer to ${\mbox{shift}}_h$ as the set of members of any configuration that are shifted by $h$ position.

\begin{figure}[ht]
\centering
\begin{tikzpicture}
\begin{scope}[every node/.style={circle,thick,draw}]
    \node (A)[minimum size=0.6cm] at (4.5,2.5) {$r_0$};
    \node (B)[minimum size=0.6cm] at (7.5,2.5) {$r_1$};
    \node (C)[minimum size=0.6cm] at (7.5,0) {$r_2$};
    \node (D)[minimum size=0.6cm] at (4.5,0) {$r_3$}; 
    \node[draw=none] at (6,-1.2) {$(c)$};
 \end{scope}

\begin{scope}[>={Stealth[black]},
              every node/.style={fill=white,circle},
              every edge/.style={draw=black, thick}]
    \path [->] (B) edge[bend right=0] node {$A_0^T$} (A);
    \path [->] (C) edge[bend right=0] node {$A_0$} (B);
    \path [->] (D) edge[bend right=0] node {$A_1^T$} (C);
    \path [->] (A) edge[bend right=0] node {$A_1$} (D);
    \end{scope}

\begin{scope}[every node/.style={circle,thick,draw}]
    \node (E)[minimum size=0.6cm] at (0,0) {$a$};
    \node (F)[minimum size=0.6cm] at (3,0) {$h$};
     \node[draw=none] at (1.5,-1.2) {$(b)$};
    \end{scope}
\begin{scope}[>={Stealth[black]},
              every node/.style={fill=white,circle},
              every edge/.style={draw=black, thick}]
    \path [->] (F) edge[bend right=30] node {$A_0^T$} (E);
    \path [->] (E) edge[bend right=30] node {$A_0$} (F);    
\end{scope}

\begin{scope}[every node/.style={circle,thick,draw}]
    \node (G)[minimum size=0.6cm] at (1.5,2.5) {$p_r$};
     \node[draw=none] at (1.5,1.6) {$(a)$};
      \end{scope}
\begin{scope}[>={Stealth[black]},
              every node/.style={fill=white,circle},
              every edge/.style={draw=black, thick}]
    \path [->] (G) edge[out=30,in=100,looseness=6] node {$M$} (G);   
\end{scope}

\end{tikzpicture}
\caption{(a) PageRank; (b) HITS; and (c) Multiplex HITS-like rankings in \MRF Framework
    \label{fig:frmwGraph1}}
\end{figure}
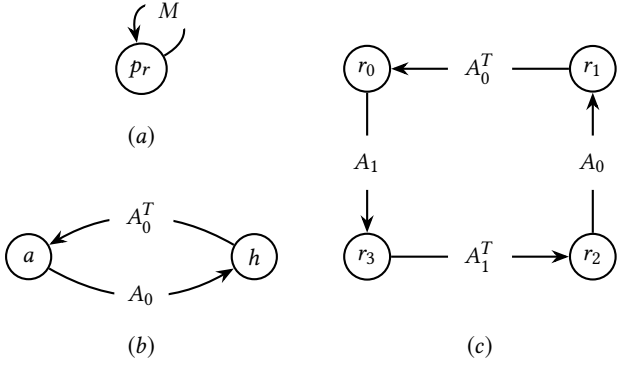

We want to generalize the above setting to enable the computation of {\textit{all}} configurations, and associated rankings, involving any subset of out-link and in-link matrices, and in any possible order. To this end, let us consider the set of all out-link and in-link matrices associated to a multiplex network with $\mathcal{L}$ layers:
$\mathcal{A} = \{ A_0, A_0^T,  A_1, A_1^T,  \dots, A_{\mathcal{L}-1}, A_{\mathcal{L}-1}^T \}$.
As discussed above, a {\textit{configuration}} is an equivalence class of non-empty ordered sequences of elements  of $\mathcal{A}$. Each matrix may occur more than once.
Given $|\mathcal{A}|=2\mathcal{L}$, the corresponding total number of  configurations (without repetitions) is:

\begin{equation}
 \sum_{k=1}^{2\mathcal{L}} \binom{2\mathcal{L}}{k} (k-1)! 
\end{equation}

Figure \ref{fig:all_conf}  lists all possible configurations of four matrices, where \textit{configuration} $(f)$ identifies the one depicted in Figure \ref{fig:frmwGraph1} (c).

\begin{center}
\begin{figure}[h!]
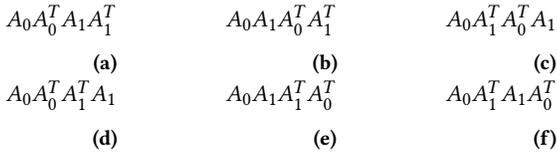

  \begin{subfigure}[b]{.31\linewidth}
    $A_0 A_0^T A_1 A_1^T$  
    \caption{}
    \label{fig:conf_a}
  \end{subfigure}
  \hspace*{\fill}
    \begin{subfigure}[b]{.31\linewidth}    
      $A_0  A_1 A_0^T A_1^T$   
    \caption{}
    \label{fig:conf_b}
  \end{subfigure}
  \hspace*{\fill}
    \begin{subfigure}[b]{.31\linewidth}    
      $A_0  A_1^T A_0^T A_1  $  
    \caption{}
    \label{fig:conf_c}
  \end{subfigure}
  
     \begin{subfigure}[b]{.31\linewidth}
    $A_0 A_0^T A_1^T A_1 $  
    \caption{}
    \label{fig:conf_d}
  \end{subfigure}
  \hspace*{\fill}
    \begin{subfigure}[b]{.31\linewidth}    
      $A_0  A_1 A_1^T A_0^T $   
    \caption{}
    \label{fig:conf_e}
  \end{subfigure}
  \hspace*{\fill}
    \begin{subfigure}[b]{.31\linewidth}    
      $A_0  A_1^T A_1 A_0^T   $  
    \caption{}
    \label{fig:conf_f}
  \end{subfigure}
  
    \caption{All possible configurations (without repetitions) for a multiplex network of two layers
    \label{fig:all_conf}}
\end{figure}
\end{center}

Our definition of configurations allows repetitions of matrices, and therefore an infinite number of total configurations.
We observe that this kind of formalization is very flexible and can be applied  to any kind of centrality measures, which are defined by a cyclic dependency.
For example, it's also possible to formalize and compute two classic centrality measures, such as PageRank and HITS, by using the \MRF Framework, as shown in Figures \ref{fig:frmwGraph1}(a) and \ref{fig:frmwGraph1}(b).


\subsection{Iterative Method with Perturbation}\label{subsec:iterative-method}
Our setup will be slightly more general than that described in Section \ref{sec:problem_def}.
Using a given configuration $c$ and the set of all the adjacency matrices $\AA$ (whose entries are nonnegative real numbers)  of a multiplex $\mathcal{M}$, it possible to select the ordered set of size $k$ composed by the $n\times n$ matrices $M_s,\, s \in \Zset_\LL$.
We may get back to the examples in Section \ref{sec:problem_def} where  $\LL = 2{\mathcal L}$ and $c$ is the configuration reported in Figure \ref{fig:all_conf}(f). 


We aim to find score vectors $\rr_s$, all with nonnegative entries, such that
\begin{equation}\label{scorevectors}
\rr_s = u_s M_s \rr_{s+1}, \quad \mbox{ for all } s\in \Zset_\LL.
\end{equation}
In general, each graph $G_\ell$ will be sparsely connected, but we require that their superposition, i.e., the graph
$$G = \left(V, \bigcup_{0 \le \ell < \mathcal{L}} E_\ell\right),$$
satisfies all irreducibility and aperiodicity assumptions that are required in Perron-Frobenius theory. Since the description is invariant under cyclic permutation of indices, we need to determine, without loss of generality, only the value of $\rr_0$, which satisfies
$$\rr_0 = u_0 u_1 \dots u_{\LL-1} M_0 M_1 \dots M_{\LL-1} \rr_0,$$
i.e., an eigenvector, relative to the eigenvalue $\lambda = u_0 u_1 \dots u_{\LL-1}$, of the composition matrix $M = M_0 M_1 \dots M_{\LL-1}$, whose entries are certainly nonnegative. Using Perron-Frobenius theory, this is usually achieved by choosing any given nonnegative entry vector $\rr_0^0 \in [0,1]^n$ and iterating
$$\rr_0^{t+1} = \frac{M \rr_0^t}{\parallel M \rr_0^t \parallel_1}, \quad t>0,$$
thus obtaining a sequence of vectors of norm $1$ that converges to the (normalized) principal eigenvector of $M$.
However, it is easy to find instances in which the product $M$ fails to satisfy irreducibility and aperiodicity. For instance, using the multiplex network corresponding to Figure \ref{fig:ringGraph}, one obtains $M = A_0^T A_0 A_1^T A_1 = 0$. This shows that the principal eigenvalue of $M$ may fail to be simple, and that $M \rr$ may happen to vanish even when $\rr$ is a vector with non-negative entries.

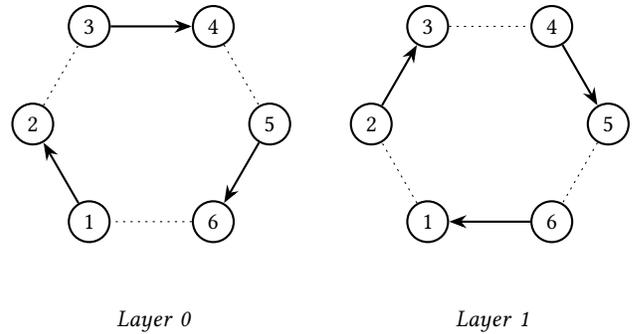
\begin{figure}[h]
\centering

\begin{tikzpicture}[scale=0.75,on grid]
\begin{scope}[every node/.style={circle,thick,draw}]				
    \node (A)[minimum size=0.2cm] at (2.5,1.767949192) 	{$1$};
    \node (B)[minimum size=0.2cm] at (1.5,3.5) 			{$2$};
    \node (C)[minimum size=0.2cm] at (2.5,5.232050808) 	{$3$};
    \node (D)[minimum size=0.2cm] at (4.7,5.232050808) 	{$4$}; 
    \node (E)[minimum size=0.2cm] at (5.7,3.5) 			{$5$}; 
    \node (F)[minimum size=0.2cm] at (4.7,1.767949192) 	{$6$}; 
    \node[draw=none] at (3.65,0) {$Layer$ \textit{0}};
    
    \node (A1)[minimum size=0.2cm] at (8.5,1.767949192) 	{$1$};
    \node (B1)[minimum size=0.2cm] at (7.5,3.5) 			{$2$};
    \node (C1)[minimum size=0.2cm] at (8.5,5.232050808) 	{$3$};
    \node (D1)[minimum size=0.2cm] at (10.7,5.232050808) 	{$4$}; 
    \node (E1)[minimum size=0.2cm] at (11.7,3.5) 			{$5$}; 
    \node (F1)[minimum size=0.2cm] at (10.7,1.767949192) 	{$6$}; 
    \node[draw=none] at (9.65,0) {$Layer$ \textit{1}};

 \end{scope}

\begin{scope}[>={Stealth[black]},
              every node/.style={fill=white,circle},
              every edge/.style={draw=black, thick}]
    \path [->] (A) edge[bend right=0]  			 (B);
    \path [-] (B) edge[bend right=0,dotted,thin] (C);
    \path [->] (C) edge[bend right=0]  			 (D);
    \path [-] (D) edge[bend right=0,dotted,thin] (E);
    \path [->] (E) edge[bend right=0]  			 (F);
    \path [-] (F) edge[bend right=0,dotted,thin] (A);
    
    \path [-] (A1) edge[bend right=0,dotted,thin]  			 (B1);
    \path [->] (B1) edge[bend right=0] (C1);
    \path [-] (C1) edge[bend right=0,dotted,thin]  			 (D1);
    \path [->] (D1) edge[bend right=0] (E1);
    \path [-] (E1) edge[bend right=0,dotted,thin]  			 (F1);
    \path [->] (F1) edge[bend right=0] (A1);
    
    \end{scope}

\end{tikzpicture}
\caption{Ring of six nodes split between a two-layer multiplex network.
    \label{fig:ringGraph}}
\end{figure}

Thus, in principle, there may exist several different nontrivial choices for score vectors; furthermore, it might be infeasible to find them by iterating (and normalizing) the action of $M$ on a given positive coefficient vector.

The approach we present employs the perturbed composition $M(\tau) = (M_0 + \tau \Id)(M_1 + \tau \Id) \dots (M_{\LL-1} + \tau \Id)$, where $\Id$ is the $n \times n$ identity matrix, which has nonnegative entries and satisfies irreducibility and aperiodicity for each strictly positive real choice of $\tau$. The matrix $M(\tau)$ has a unique positive real principal eigenvalue, which can be easily showed to depend continuously (and even analytically) on $\tau \in \Rset_+$. When $\tau$ approaches $0$, $M(\tau)$ tends to $M$ and the principal eigenvalues $\lambda(\tau) \in \Rset_+$ of $M(\tau)$ converge to some real nonnegative eigenvalue $\lambda(0) := \lim_{\tau \to 0^+} \lambda(\tau)$ of $M$. As Perron-Frobenius theory cannot be applied to $M$, the eigenvalue $\lambda(0)$ may fail to be simple. However, by continuity, its norm bounds from above the norm of all other eigenvalues.

When $\tau\in \Rset$ is strictly positive, we may find a unique normalized principal eigenvector $\rr_0(\tau)$ of $M$, with nonnegative coefficients. In next section, we prove that the limit $\rr_0(0) := \lim_{\tau \to 0^+} \rr_0(\tau)$ exists, so that $\rr_0(0)$ is an eigenvector of $M$ relative to the real dominant eigenvalue $\lambda(0)$.

One important fact to stress is that from
$$\rr_s(\tau) = u_s(\tau) M_s(\tau) \rr_{s+1}(\tau), \quad s\in \Zset_\LL,$$
which is the perturbed version of \eqref{scorevectors}, follows that the $\rr_s(\tau)$ may be recovered from knowledge of  $\rr_0(\tau)$ by inductively 	setting
$$\rr_{s-1}(\tau) = M_{s-1}(\tau) \rr_s(\tau)\big/ \parallel M_{s-1}(\tau) \rr_s(\tau)\parallel_1.$$
This fails to hold in general when we take $\lim_{\tau\to 0^+}$, as $u_s(\tau)$ may tend to $0$; nevertheless, $\rr_s(0)$ stays proportional to $M_s \rr_{s+1}(0)$, which may however vanish, for all $s$.

Let us discuss now practical implementations of the above strategy. For each positive value of $\tau$, and any nonzero choice of $\rr_0^0(\tau) \in [0,1]^n$, it is possible to iteratively run a sequence of 
\begin{equation}\label{iteration}
\rr_0^t(\tau) = M(\tau) \rr_0^{t-1}(\tau) \big/ \parallel M(\tau) \rr_0^{t-1}(\tau)\parallel_1, \quad t>0,
\end{equation}
which will approach $\rr_0(\tau)$ to the desired precision. We may then compute $\rr_0(\tau)$ for smaller and smaller positive real values of $\tau$ until the desired convergence to $\rr_0(0)$ is achieved. In practice, we will run a finite number $\delta$ of iterations of \eqref{iteration} for a given value of $\tau$, then halve the value of $\tau$ and run $\delta$ more iterations of \eqref{iteration}, until the desired precision is achieved. Here $\delta$ must be fine-tuned with the geometry of the problem, which depends on the distance of the principal eigenvalue of $M(\tau)$ from the other eigenvalues as a function of $\tau$.

\subsubsection{Proof}\label{subsubsec:proof}
We keep the same setting as from last section, so that 
$M_s$ are $n\times n$ matrices with non-negative entries, and we consider the product $M(\tau) = (M_0 +\tau\Id)\cdot(M_1 +\tau\Id)\cdot \ldots \cdot (M_{\LL-1} + \tau\Id)$ which satisfies irreducibility and aperiodicity for each positive real choice of $\tau$.

We denote by $v(\tau), \tau \in \Rset_+,$ the principal eigenvector of $M(\tau)$ normalized so that its (nonnegative) entries sum to $1$. Recall that the  spectral projector $P(\tau)$ associated to the principal eigenvalue is a matrix with positive coefficients so that $P(\tau)(1, 1, \dots, 1)$ is a positive multiple of $v(\tau)$.

\begin{thm}
The analytic function $\Rset_+ \ni \tau \mapsto v(\tau) \in [0,1]^n$ extends with continuity to $\tau = 0$.
\end{thm}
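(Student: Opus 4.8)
\emph{Plan.} For $\tau>0$ I would exhibit $v(\tau)$ as an explicit \emph{algebraic} function of $\tau$, expand it in a Puiseux series around $\tau=0$, and read the limit off the leading term; the normalization $\mathbf{1}^{T}v(\tau)=1$ is exactly what rules out negative exponents and therefore forces the limit to exist. Concretely: for $\tau>0$ the matrix $M(\tau)=(M_{0}+\tau\Id)\cdots(M_{\LL-1}+\tau\Id)$ is irreducible and aperiodic, so Perron--Frobenius makes $\mu(\tau):=\rho(M(\tau))>0$ a simple eigenvalue with strictly positive right and left eigenvectors, $v(\tau)$ being the right one scaled by $\mathbf{1}^{T}v(\tau)=1$. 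The spectral projector $P(\tau)$ has strictly positive entries, so $P(\tau)\mathbf{1}$ is a positive multiple of $v(\tau)$ and $v(\tau)=P(\tau)\mathbf{1}/(\mathbf{1}^{T}P(\tau)\mathbf{1})$. Since $\mu(\tau)$ is a \emph{simple} root of $\chi(z,\tau):=\det(zI-M(\tau))$, residue calculus gives $P(\tau)=\operatorname{adj}(\mu(\tau)I-M(\tau))/\partial_{z}\chi(\mu(\tau),\tau)$; the scalar cancels, so with $c(\tau):=\operatorname{adj}(\mu(\tau)I-M(\tau))\mathbf{1}$ one gets
\[
v(\tau)=\frac{c(\tau)}{\mathbf{1}^{T}c(\tau)},\qquad \tau>0,
\]
with nonzero denominator (it equals $\partial_{z}\chi(\mu(\tau),\tau)$ times a positive number). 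Incidentally this also re-proves analyticity of $v$ on $\Rset_{+}$.

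\emph{Puiseux expansion.} Next I would expand the right-hand side near $0$. The polynomial $\chi(z,\tau)$ is monic in $z$ with coefficients polynomial in $\tau$, so its roots are algebraic functions of $\tau$; by Newton--Puiseux each is, near $\tau=0$, a Puiseux series in $\tau^{1/N}$ (for a suitable $N\in\Nset$) convergent on some interval $(0,\varepsilon)$. Now $\mu(\tau)=\rho(M(\tau))$ is continuous at $\tau=0$ with $\mu(0)=\rho(M)$ (continuity of the spectral radius) and real-analytic on $(0,\infty)$; hence on $(0,\varepsilon)$ it coincides with one of those branches, with only non-negative exponents since it stays bounded. The entries of $\operatorname{adj}(\mu(\tau)I-M(\tau))$ are polynomials in the (polynomial-in-$\tau$) entries of $M(\tau)$ and in $\mu(\tau)$, hence convergent Puiseux series on $(0,\varepsilon)$, and they stay bounded as $\tau\to0^{+}$ because $\mu(\tau)I-M(\tau)\to\rho(M)I-M$; thus $c(\tau)=\sum_{k\ge0}c_{k}\tau^{k/N}$ with non-negative exponents only, and likewise $\mathbf{1}^{T}c(\tau)=\sum_{k\ge0}(\mathbf{1}^{T}c_{k})\tau^{k/N}$.

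\emph{Conclusion.} Put $k_{1}:=\min\{k:c_{k}\neq0\}$, so $c(\tau)=\tau^{k_{1}/N}\big(c_{k_{1}}+o(1)\big)$ with $c_{k_{1}}\neq0$. Since $v(\tau)=c(\tau)/\mathbf{1}^{T}c(\tau)$ lies in the probability simplex for every $\tau>0$, the denominator cannot vanish to higher order than $c(\tau)$; as its order is $\ge k_{1}/N$ in any case, it is exactly $k_{1}/N$, i.e.\ $\mathbf{1}^{T}c_{k_{1}}\neq0$. Hence $v(\tau)=\big(c_{k_{1}}+o(1)\big)\big/\big(\mathbf{1}^{T}c_{k_{1}}+o(1)\big)\to c_{k_{1}}/(\mathbf{1}^{T}c_{k_{1}})=:v(0)$ as $\tau\to0^{+}$. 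Being a limit of simplex points, $v(0)\in[0,1]^{n}$ with $\mathbf{1}^{T}v(0)=1$; passing to the limit in $M(\tau)v(\tau)=\mu(\tau)v(\tau)$ shows $Mv(0)=\lambda(0)v(0)$ with $\lambda(0)=\rho(M)$. Declaring $v(0)$ to be the value at $0$ yields the asserted continuous extension.

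\emph{Main obstacle.} If $\rho(M)$ happened to be a simple eigenvalue of $M$ one could take $N=1$: all the series above become ordinary power series, $P(\tau)$ is itself analytic at $0$, and the statement reduces to the classical perturbation theory of a simple eigenvalue. The substance of the theorem is precisely the degenerate case that Perron--Frobenius excludes for $M$ itself — e.g.\ the ring of Figure~\ref{fig:ringGraph}, where $M=A_{0}^{T}A_{0}A_{1}^{T}A_{1}=0$, $\rho(M)=0$ is maximally non-simple, $M$ admits a whole cone of non-negative $\rho(M)$-eigenvectors, and the projectors $P(\tau)$ need not converge — and there one genuinely has to exploit that $v(\tau)$ is algebraic in $\tau$ together with the $\ell^{1}$-normalization to single out the limit. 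I expect the two delicate points to be (i) justifying that the dominant eigenvalue $\mu(\tau)$ really has a Puiseux expansion at $0$ (it is algebraic over $\Rset(\tau)$, then Newton--Puiseux), and (ii) the leading-order bookkeeping in the Conclusion step, where boundedness of $v(\tau)$ is exactly what forces numerator and denominator to vanish to the same order.
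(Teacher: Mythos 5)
Your proof is correct and follows essentially the same route as the paper: both arguments reduce the claim to showing that the normalized principal eigenvector is, near $\tau=0$, a convergent fractional-power (Puiseux) series in $\tau^{1/p}$, and then use the boundedness of its entries in $[0,1]$ to exclude negative exponents, so that the limit as $\tau\to 0^{+}$ exists. The only difference is presentational: where the paper invokes Kato's perturbation theory to get meromorphy of the spectral projector of $M(\tau^{p})$ and rules out a pole at $0$ by the same boundedness argument, you open that black box with the explicit adjugate representation $v(\tau)=\mathrm{adj}\bigl(\mu(\tau)I-M(\tau)\bigr)\mathbf{1}\big/\mathbf{1}^{T}\mathrm{adj}\bigl(\mu(\tau)I-M(\tau)\bigr)\mathbf{1}$ together with Newton--Puiseux applied to the characteristic polynomial.
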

\begin{proof}
The matrix $M(\tau)$ depends polynomially on $\tau$, hence it makes sense for complex values of $\tau$, and results from \cite[Sect. 1]{kato1980} apply. We argue that up to replacing $M(\tau)$ with $M(\tau^p)$ for a suitable choice of $p$, its spectral projectors $P(\tau)$ are meromorphic functions in an opportunely small neighbourhood of $0 \in \Cset$, having $0$ as only possible singularity.

The eigenvector $V(\tau) = P(\tau)(1, 1, \dots, 1)$ then depends meromorphically on $\tau\in \Cset$, and its entries add to a meromorphic function of $\tau$ which is certainly non-zero, hence non-trivial, as it is strictly positive on the positive real half-line. After dividing $V(\tau)$ by this function, we obtain a meromorphic function $F(\tau)$ which restricts to $v(\tau^p)$ for positive real values of $\tau$. However, all entries of $v(\tau^p)$ are positive and bounded by $1$ on any positive real neighbourhood of $\tau = 0$, so that $F(\tau)$ cannot have a pole in $\tau=0$. We conclude that $F(\tau) = v(\tau^p)$ extends analytically, hence continuously to a complex neighbourhood of $0$. 
This proves the statement, as $\tau \mapsto v(\tau)$ is obtained by composing the analytic function $F$ with $\tau \mapsto \tau^{1/p}$.
\end{proof}

The case $p>1$ is exceptional, and when using actual data gathered from real networks one may assume it never occurs.

\subsection{Implementation details}\label{subsec:impl}
Here we present a detailed description of the implementation of the methodology.

Let  $\tau \in \real, 0 < \tau < 1$, be the perturbation factor; $\mathcal{A}$ the set of all matrices of multiplex $\mathcal{M}$, and  $c \in \mathcal{C}$ one of the possible configurations.

Given the matrix $M_s$, we generate a perturbed version $M_s(\tau) = M_s + \tau\Id$. We then compute the ordered product $M$ of the sequence of matrices $\{\pi_c(\mathcal{A})\}$ selected by the configuration $c$, using the projection $\pi$. We iteratively compute the first rank $r_0$ and decrease $\tau$ by half, each time  the $L^1$-\textit{Norm} of the difference between the last two computed ranks is equal to zero, i.e. $\parallel {\bf {r}}_0^{t} - {\bf {r}}_0^{t+1}\parallel_1 = 0$.
The perturbed matrices $M_s(\tau)$ are updated before proceeding with the  next iteration.

The elements of  ${\bf {r}}_0^{0}$ are all initialized to the same value $\frac{1}{|V|}$ as shown in Algorithm \ref{alg:main}.
The method stops when the stationary point is reached. Using the desired formulation is possible to propagate (propagateScores() in the pseudocode) the computed rank ${\bf {r}}_0$ to the other ranks ${\bf {r}}_s$.

The iterative equations of the HITS-like example with $\mathcal{L} = 2$ are as follows:
$${\bf {r}}^{t}_0 \NORM A_0(\tau)^T A_0(\tau) A_1(\tau)^T A_1(\tau)   {\bf {r}}^{t-1}_0$$
$${\bf {r}}^{t}_1 \NORM A_0(\tau) A_1(\tau)^T A_1(\tau)A_0(\tau)^T   {\bf {r}}^{t-1}_1$$
$${\bf {r}}^{t}_2 \NORM A_1(\tau)^T A_1(\tau)A_0(\tau)^T A_0(\tau)  {\bf {r}}^{t-1}_2$$
$${\bf {r}}^{t}_3 \NORM A_1(\tau)A_0(\tau)^T A_0(\tau) A_1(\tau)^T  {\bf {r}}^{t-1}_3$$

Algorithm \ref{alg:main} presents the implementation to compute  ${\bf {r}}_0$ score.

\begin{algorithm}
  \caption{Iterative Method with Perturbation}
  \label{alg:main}
  \begin{algorithmic}[1]
  \Require The set $\mathcal{A}$ of all matrices of $\mathcal{M}$; a configuration $c \in \mathcal{C}$ and $\tau^0$.
  \Ensure  Set ${\bf {r}}_0^0 := \frac{1}{|V|}$, $t := 0$, $lst := 0$
 \Do
 	\State $ lst := t$
    \State $M :=  \prod_{0\le s < |\pi_c(\mathcal{A})|} \quad( M_s +\tau \Id)$
    \Do
     \State $\rr_0^{t+1} :\NORM  M \cdot  {\bf {r}}_0^{t}$
     \State $t := t+1$
    \doWhile{$ ( \parallel {\bf {r}}_0^{t} - {\bf {r}}_0^{t+1}\parallel_1 \neq 0 )$}
    \State $\tau := \tau/2$     
  \doWhile{$ ( \parallel {\bf {r}}_0^{t} - {\bf {r}}_0^{lst}\parallel_1 \neq 0 )$}
 
  \State \Return $propagateScores({\bf {r}}_0^t,\mathcal{A},c)$;
  \end{algorithmic}
\end{algorithm}


\section{Experiments}\label{sec:exp}
In this section, we present several experiments, designed to analyze the consistency and the behavior of the \MRF framework under different inputs and using different \textit{configurations}.

We performed four sets of experiments to investigate the following issues:
\begin{enumerate}
\item Understanding the relationship between the rankings computed by the \MRF framework (run using proper \textit{configurations} ), and those produced by   methods known in the literature;
\item Investigating the impact of different \textit{configurations} on the computed rankings; 
\item Empirical analysis of convergence speed of the proposed method;
\item Comparing the theoretical computational complexity of  the \MRF framework to that of known methods  in the literature.
\end{enumerate}

\subsection{Experimental environment}
To perform consistent and effective tests, we must be able to work in a controlled environment.  To this end, we  generated synthetic multiplexes and chose graph types that are well suited to our purpose.  In the following, we also define the measures we used to compare the results obtained in different experiments and from various methods. 

\subsubsection{Creation of synthetic multiplexes}\label{exp:env:syn-mult}
We designed a multiplex generator to create synthetic and controlled multiplexes. The generator starts from a given graph $Gen$, and creates a multiplex version of it, $\mathcal{M}$, where each layer is a modified version of $Gen$, and the degree of changes can be modulated.
In particular, the generator assigns each edge of $Gen$ to a layer of the multiplex network $\mathcal{M}$, according to a probability $p$.
Let ${p_0,...,p_\ell}$ be the probabilities assigned to the layers $\ell \in L(M)$ (the probabilities can be independent, or they can sum to one). Then:\\
$$\forall_{e\in E_{Gen}},\quad\forall_{ \ell \in L(\mathcal{M})}: e\cup E_\ell \quad according \quad to \quad p_\ell$$

\subsubsection{Graph Generators}\label{exp:env:gen} 
To perform our experiments we also need graph generators (to initialize $Gen$) that expose certain characteristics. We experimented with two different types of graphs: the first has no particular structure, and the second  contains communities. In both cases, we  chose a network model that belongs to the family of \textit{random graphs}\cite{newman2010networks}.

The first chosen model is  \textit{Erdos-Renyi $G(N, p)$ }\cite{erdos1960evolution}. The constructed random graph  is  obtained by connecting $N$ nodes randomly. Each of the $\binom{N}{2}$ possible edges is included in the graph with probability $p$ (called \textit{edge probability}) independently of the other edges. The resulting graphs are  characterized by an absence of sub-structure, as shown  in Figure \ref{fig:graph_types}.

The second graph generator is the  \textit{Stochastic Block Model, (SBM)}. The natural number $n$  of  the $SBM(n,\lambda,P)$ function  corresponds to the number of nodes, $\lambda = (\lambda_1,\ldots,\lambda_r)$ is a partition of $n$, representing the \textit{communities}, and $P$ is a matrix $r\times r$ where $r = |\lambda|$. The graph is built by taking $n$ vertices and by partitioning them according to $\lambda$. An edge between a vertex $v \in \lambda_i$ and a vertex $u \in \lambda_j$  is added with probability $P_{ij}$. SBM has the capability of mimicking communities, as shown in Figure \ref{fig:graph_types}.

\begin{center}
\begin{figure}[h!]
\includegraphics[width=0.98\linewidth]{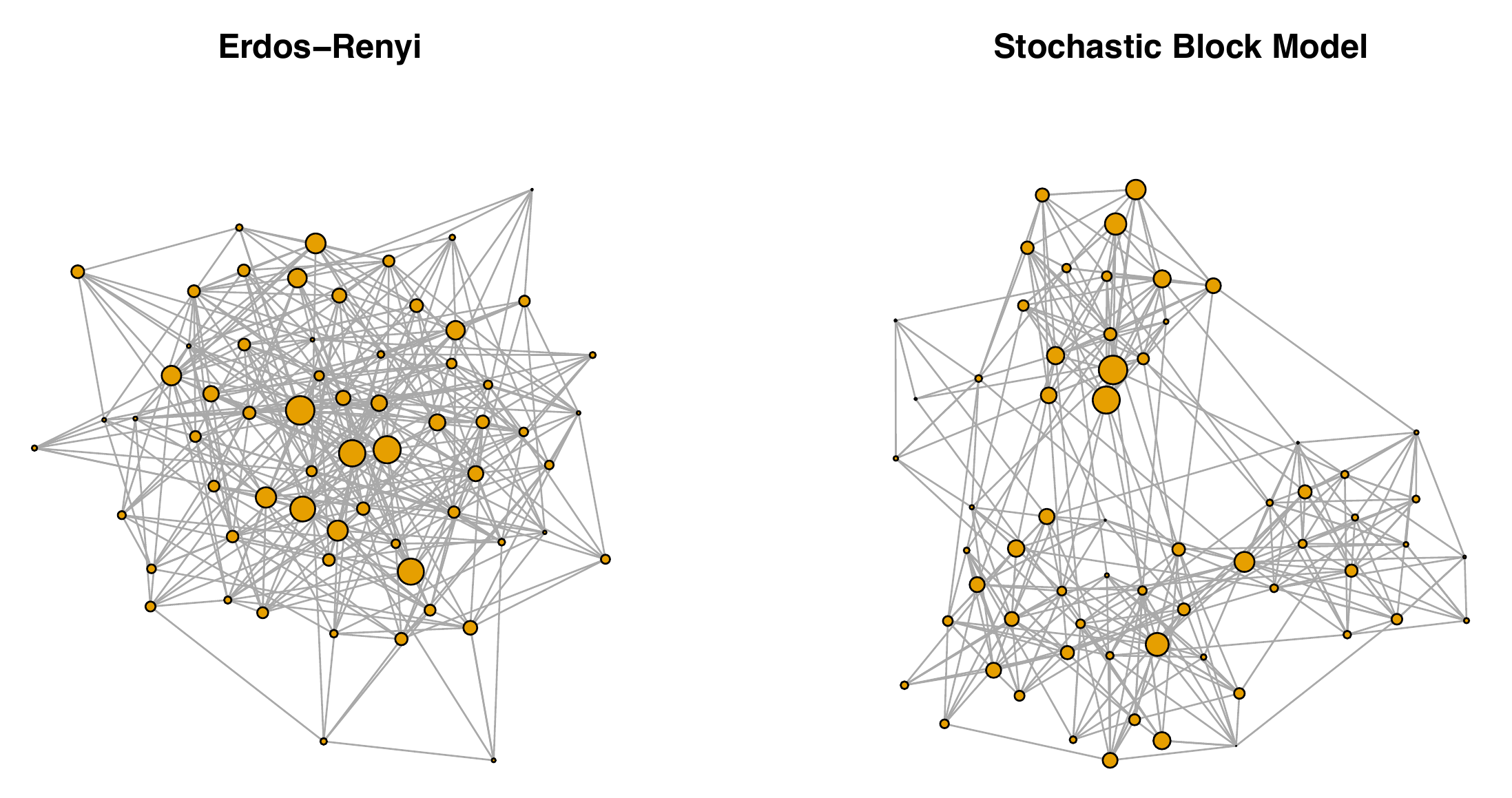}
 \caption{Examples of graphs generated using the Erdos-Renyi Model (left) and the Stochastic Block Model (right)
              \label{fig:graph_types}}
\end{figure}
\end{center}

\subsubsection{Adopted Measures}\label{exp:env:measures}
Here we discuss the measures we have used to compare results across different experiments and methods.

As proposed by Vigna in \cite{vigna2015weighted}, the weighted Kendal tau correlation measure $\tau_w$ is the best choice to compare two  rankings. This coefficient is a variation of the Kendall's tau measure, which was created to overcome problems caused by the presence of ties in rankings. In particular, we use the weighted Kendal tau to compare rankings computed by the \MRF framework, as well as to compare rankings produced by our framework and those produced by other methods. We observe that, even if several rankings are produced by a given method, it's possible to compute (without any loss of precision, or generality) the weighted tau coefficient of two rankings, each obtained by concatenating all the rankings produced by each method being compared.

To measure the amount of overlap between pairs of layers in a generated multiplex network, we define the $\MULTIJ$ coefficient, an extended version of the Jaccard coefficient:
\begin{equation}
 \MULTIJ(\mathcal{M}) = \frac{\sum_{\ell \in L(\mathcal{M})} \sum_{\ell' \in L(\mathcal{M}), \ell \ne \ell'} \frac{E_\ell \cap E_{\ell'}}{E_\ell \cup E_{\ell'}}}{|L(\mathcal{M})| \cdot (|L(\mathcal{M})| - 1)}
\end{equation}


Finally, for each experiment, we  provide the  \textit{confidence interval}. The interval is displayed in the plots as a grey area around the line that describes the average. It's computed according to the following formula:
\begin{equation}
I_c = (\bar{x}-t^*\frac{s}{\sqrt[]{N}},\bar{x}+t^*\frac{s}{\sqrt[]{N}})
\end{equation}
where $t^*$ is the value of the Student's t distribution related to the p-value 0.95, and $s$ is the unbiased estimation of the standard deviation obtained by: 
\begin{equation}
s= \sqrt[]{\frac{\sum_{k=1}^N(x_k-\bar{x})^2}{N-1}}
\end{equation}
$\bar{x}$ is the mean of the resulting values.

\subsection{Comparison between methods}\label{exp:batch1}
This set of experiments is designed to investigate the relationship  between the rankings produced by the \MRF framework and those computed by  \textit{PageRank} \cite{ilprints422}, \textit{HITS} \cite{Kleinberg:1999}, and \textit{Versatile} \cite{dedomenico2015ranking}.

To achieve a fair comparison, we setup the \MRF framework accordingly for each compared methods. 
To produce the relative rankings, we applied  \textit{PageRank} and \textit{HITS} to each layer. \textit{Versatile}  was directly executed on the whole multiplex network. For  \MRF,  \textit{PageRank-like} converted the adjacency matrices of the multiplex by applying the transformation presented in \cite{ilprints422}, and used the configuration $A_0A_1$. The \textit{HITS-like} method used the  adjacency matrices of the multiplex,  and the configuration $A_0^TA_0A_1^TA_1$. Finally, the \textit{Versatile-like} method used the  adjacency matrices of the multiplex, and the configuration $A_0A_1$.

The overall setting of this experiment is summarized in the following steps:
\begin{enumerate}

\item Graphs $Gen(V,E)$ are generated using  the graph generators presented in \ref{exp:env:gen}, and varying the number of nodes $|V| \in \{2^s, 6 \le s \le 10\}$. We set the \textit{edge probability} of the Erdos-Renyi model to 0.5, and  we use an intra-community probability of 0.5   and an inter-community probability of 0.2 for the SBM  graphs.

\item The synthetic multiplex network $\mathcal{M}$ with $\mathcal{L}=2$ layers is created according to the procedure described in  \ref{exp:env:syn-mult}. For our experiment, we set the same probability $p_\ell$ for all the layers, and we vary it from 0.0 to 1.0 with a step of 0.05.

\item We execute the original methods (\textit{PageRank}, \textit{HITS}, \textit{Versatile}), and the \MRF methods (\textit{PageRank-like}, \textit{HITS-like}, and \textit{Versatile-like}) to produce all the rankings.

\item The rankings are compared using $\tau_w$ and the \MULTIJ measures as discussed in \ref{exp:env:measures}.

\end{enumerate}

\begin{center}
\begin{figure}[ht]
  \begin{subfigure}[b]{0.48\linewidth}
  \includegraphics[width =\linewidth]{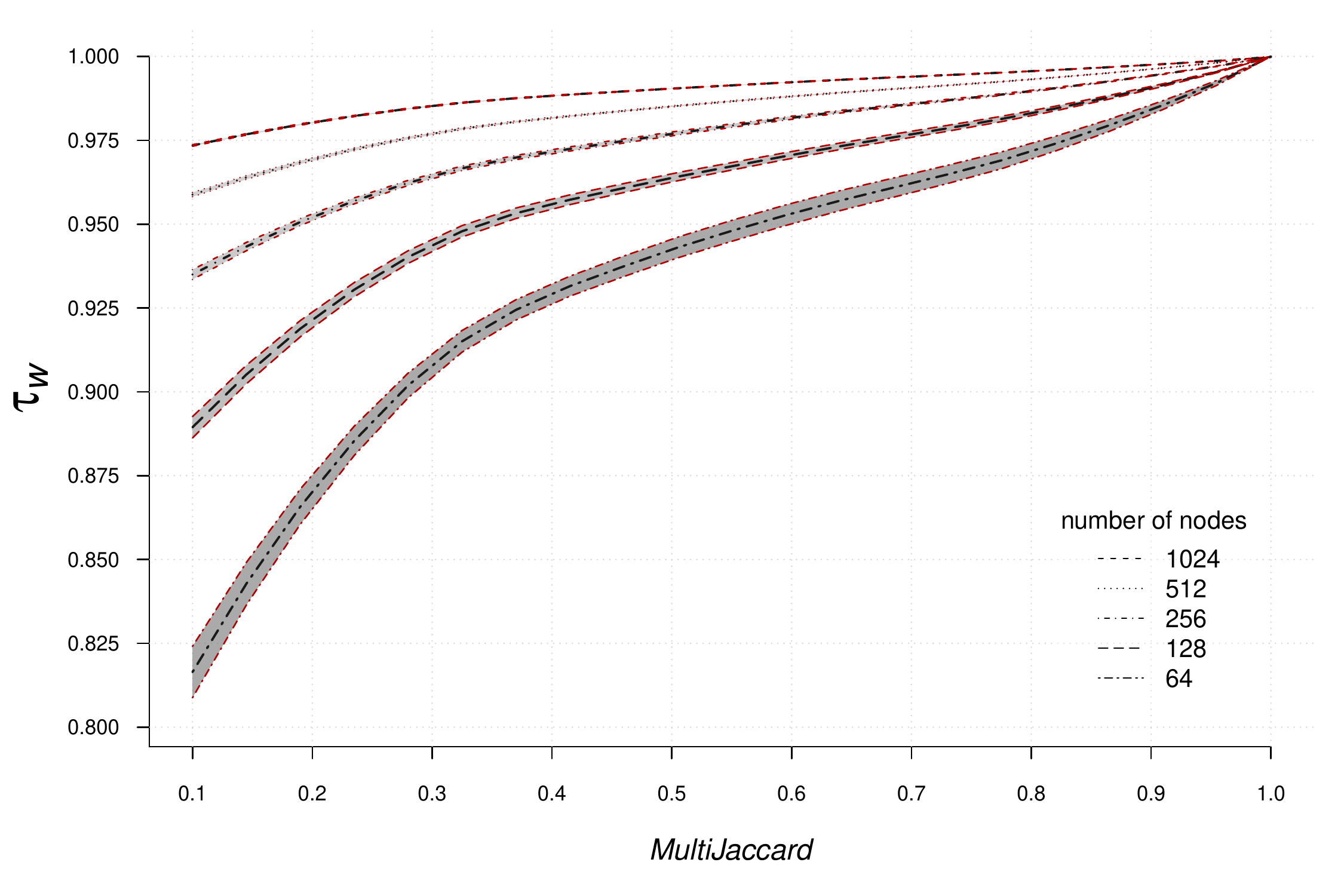}
  \caption{\label{fig:pr_exp1_erdos}}
  \end{subfigure}
  \begin{subfigure}[b]{0.48\linewidth}
  \includegraphics[width =\linewidth]{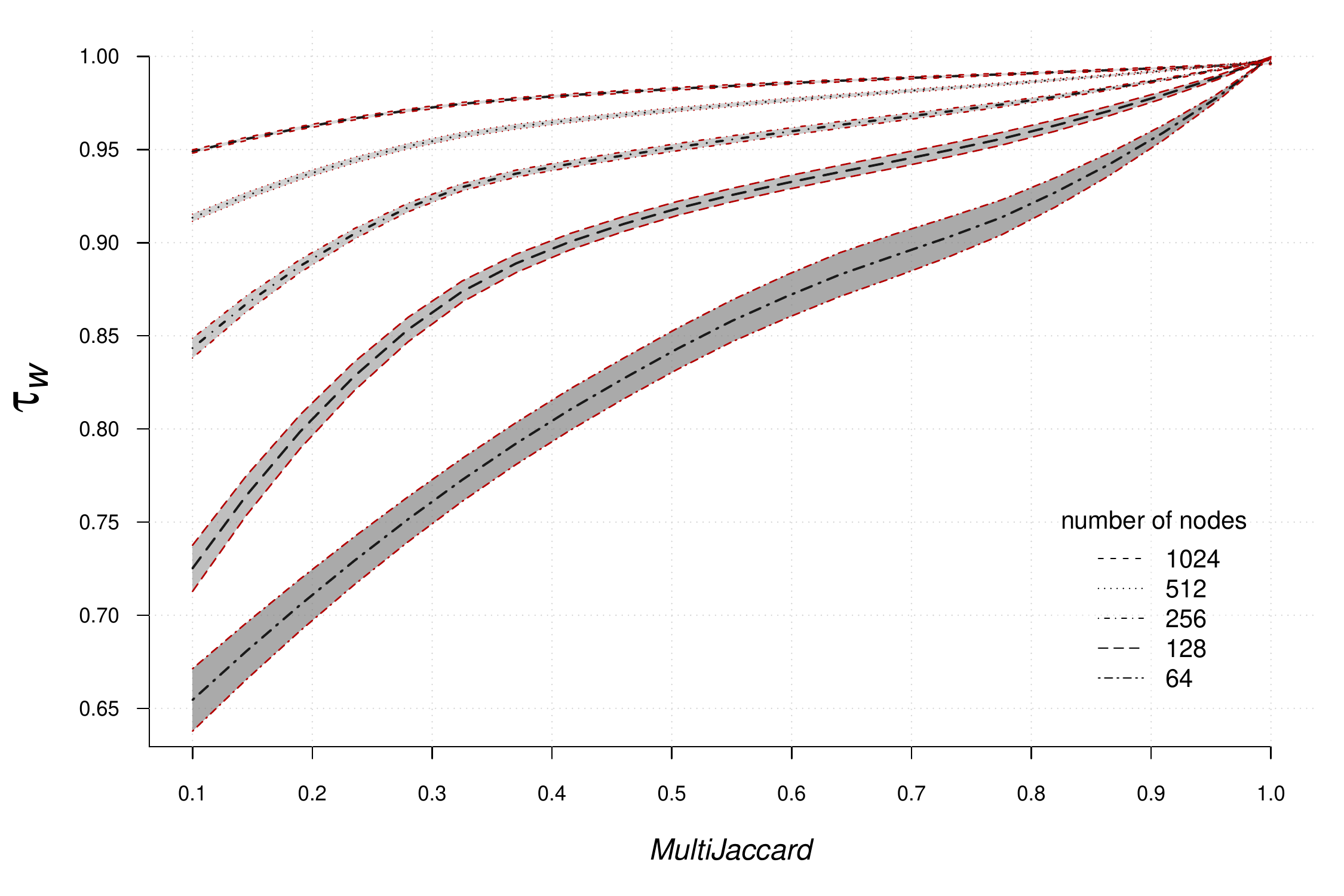}
  \caption{\label{fig:pr_exp1_sbm}}
  \end{subfigure}
  \\
  \begin{subfigure}[b]{0.48\linewidth}
  \includegraphics[width =\linewidth]{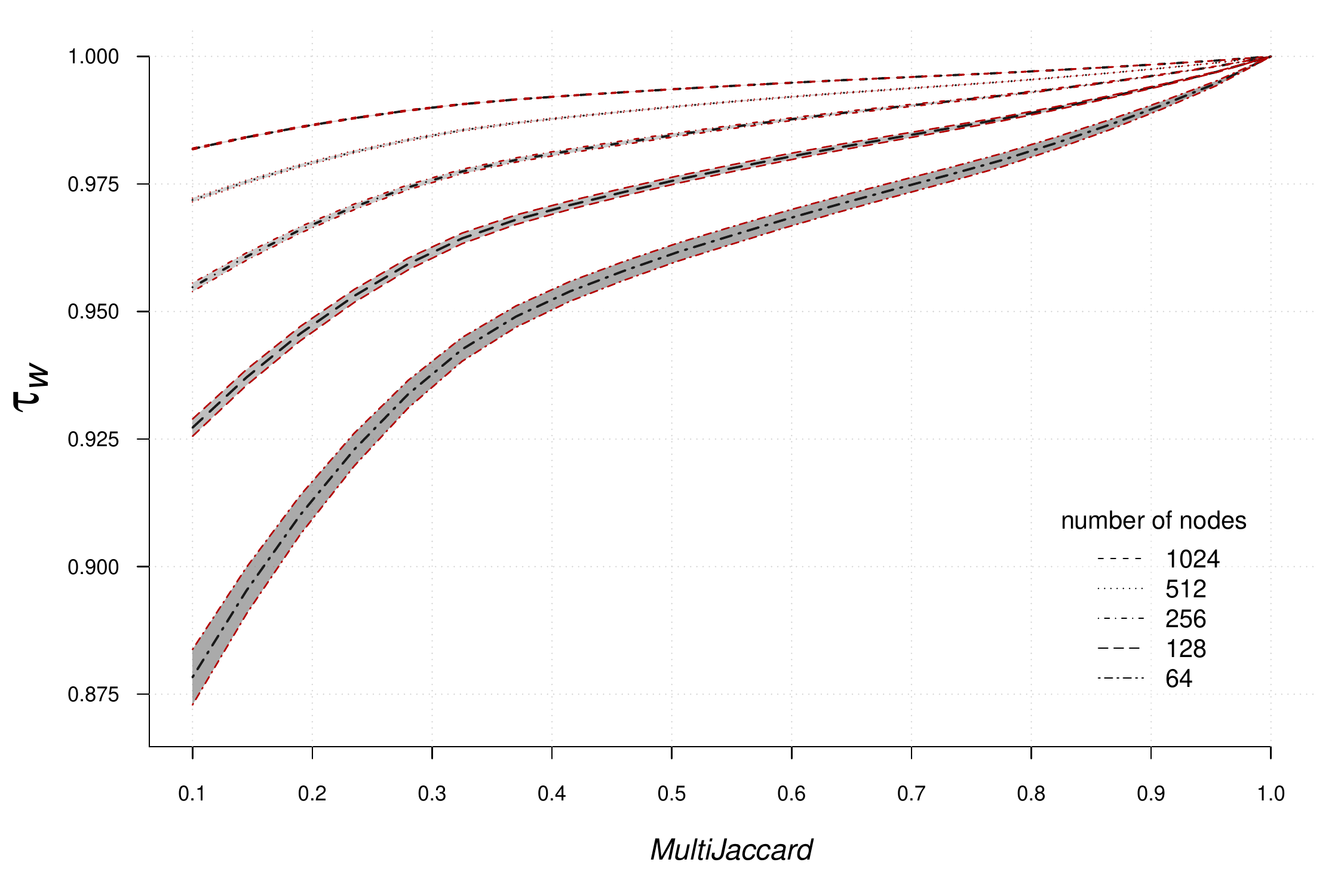}
  \caption{\label{fig:hits_exp1_erdos}}
  \end{subfigure}
  \begin{subfigure}[b]{0.48\linewidth}
  \includegraphics[width =\linewidth]{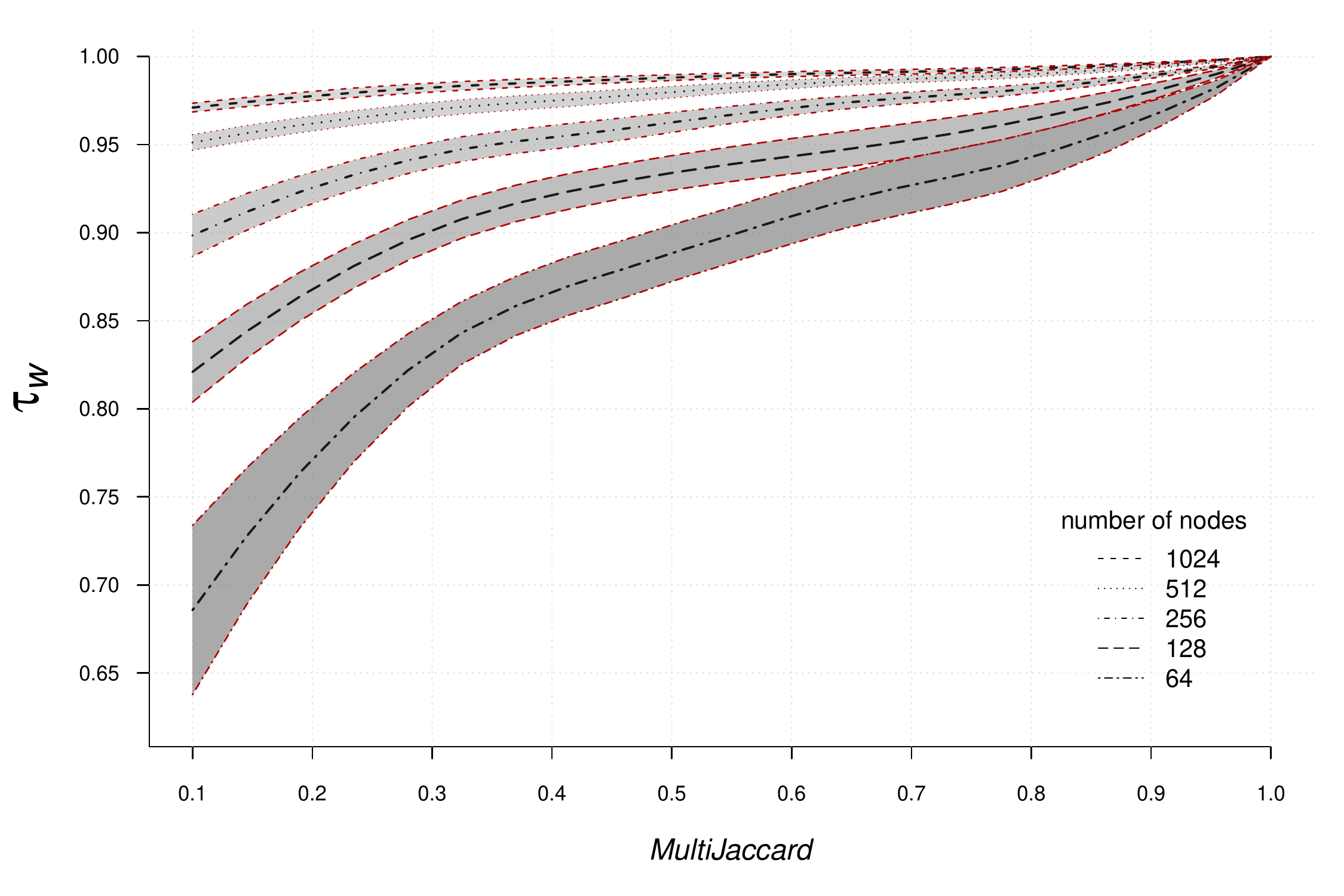}
  \caption{\label{fig:hits_exp1_sbm}}
  \end{subfigure}
  \\
  \begin{subfigure}[b]{0.48\linewidth}
  \includegraphics[width =\linewidth]{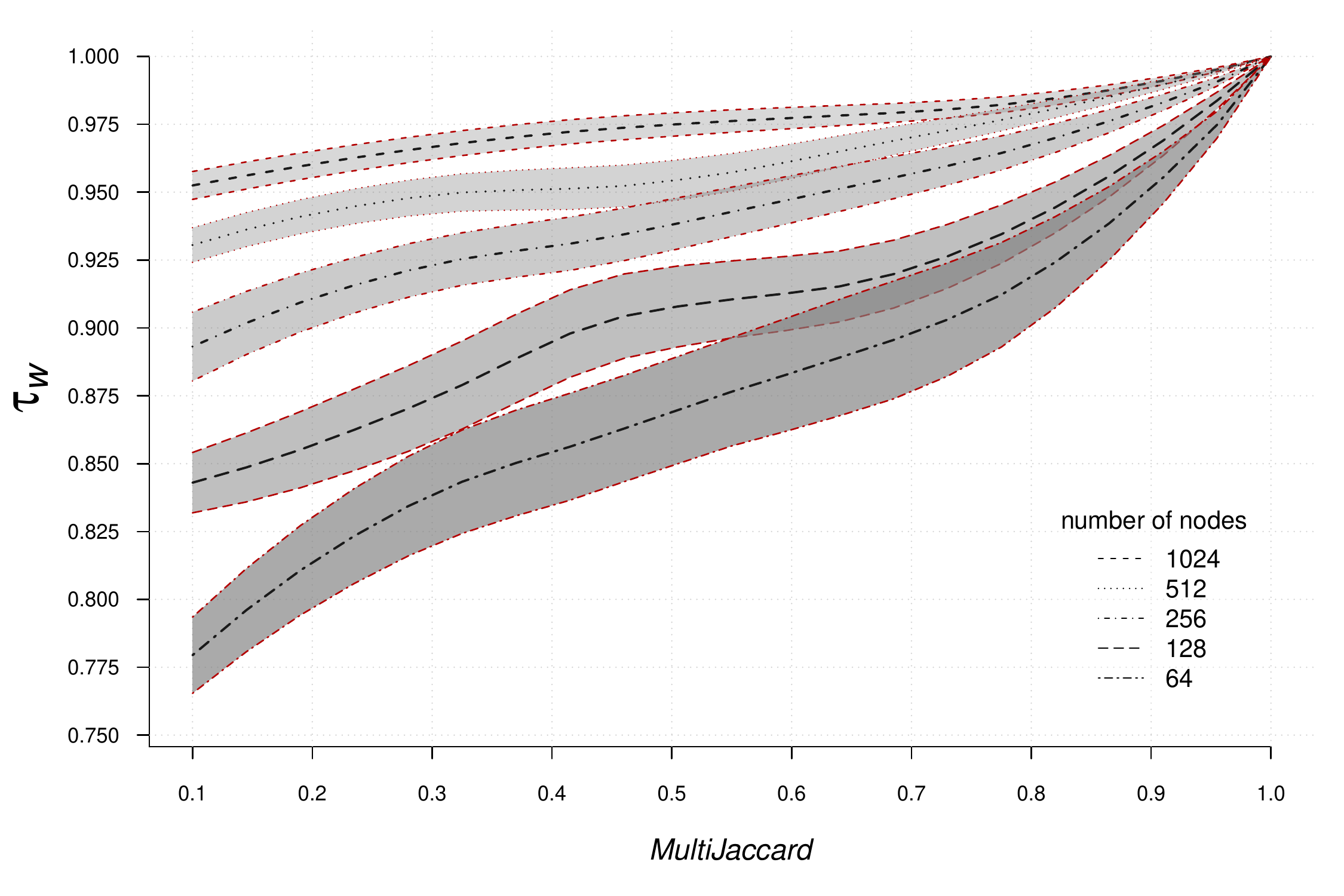}
  \caption{\label{fig:dd_exp1_erdos}}
  \end{subfigure}
  \begin{subfigure}[b]{0.48\linewidth}
  \includegraphics[width =\linewidth]{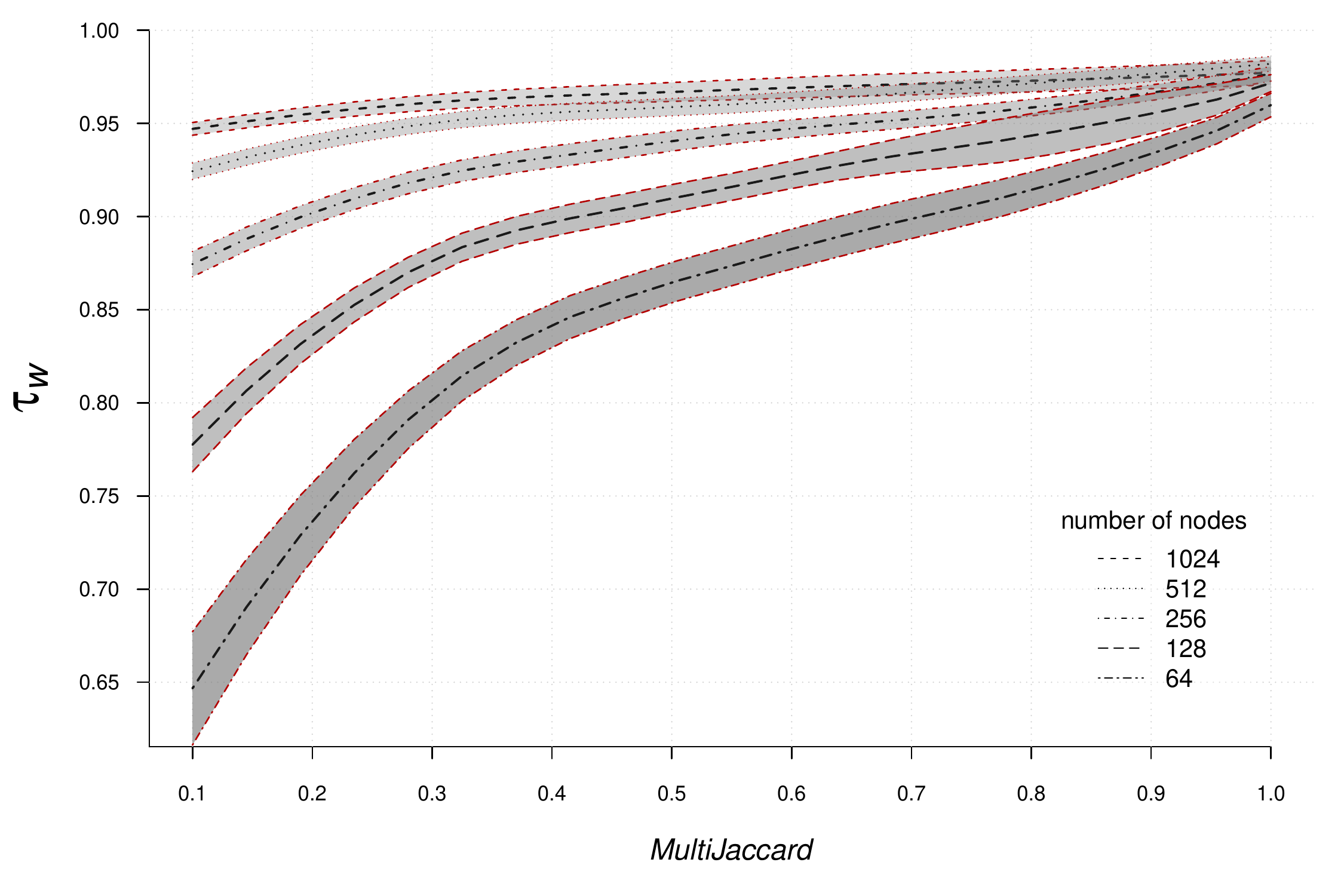}
  \caption{\label{fig:dd_exp1_sbm}}
  \end{subfigure}
  
  \caption{Similarity (Weighted tau) between rankings for varying MultiJaccard coefficients. (a) and (b): \textit{PageRank} vs \textit{PageRank-like}; (c) and (d): \textit{HITS} vs \textit{HITS-like}; (e) and (f): \textit{Versatile} vs \textit{Versatile-like}. (a), (c), and (e) correspond to an Erdos-Renyi graph generator; (b), (d), and (f) correspond to a Stochastic Block Model generator.
  \label{fig:e1}}
\end{figure}
\end{center}

Figure \ref{fig:e1} shows how ranking similarity changes for varying MultiJaccard coefficients. The averages (black dotted lines) are obtained executing all the experiments 32 times. The grey zone gives the confidence interval, as discussed in \ref{exp:env:measures}.
Figures \ref{fig:e1}(a) and \ref{fig:e1}(b)  compare \textit{PageRank} and \textit{PageRank-like};
Figure \ref{fig:e1}(c) and \ref{fig:e1}(d)  compare \textit{HITS} and \textit{HITS-like}; and
Figure \ref{fig:e1}(e) and \ref{fig:e1}(f)  compare \textit{Versatile} and \textit{Versatile-like}. Figures \ref{fig:e1}(a), \ref{fig:e1}(c), and \ref{fig:e1}(e) use  Erdos-Renyi as a generator, while Figures \ref{fig:e1}(b), \ref{fig:e1}(d), and \ref{fig:e1}(f) use SBM  as a generator.

Across all the experiments, the $\tau_w$ coefficient increases monotonically when the \MULTIJ coefficient increases. This behavior is expected: as  the \MULTIJ increases, so does the similarity among the layers, and when \MULTIJ reaches 1.0, all the layers contain the same graph.
Using Erdos-Renyi as generator produces a smaller range of  $\tau_w$ values ($\approx0.8-1.0$) compared to the range produced using an SBM generator ($\approx0.6-1.0$). A graph corresponding to the Erdos-Renyi model has an edge distribution closer to uniform (see Figure \ref{fig:graph_types}), and this is reflected in higher values of $w_\tau$, for a given \MULTIJ value (compared to SBM).

The confidence intervals present a similar trend when the \MRF framework is compared against \textit{PageRank} and \textit{HITS}. The plots related to the Erdos-Renyi generator (Figures \ref{fig:e1} (a) and (c)) have smaller confidence intervals  when compared to those related to the SBM generator (Figures \ref{fig:e1} (b) and (d)). The wider confidence intervals in Figures \ref{fig:e1} (b) and (d) are due to the more complex topology of the SBM  model. In contrast, the comparison between \textit{Versatile} and \textit{Versatile-like} manifests a different behavior. The Erdos-Renyi related  plot in Figure\ref{fig:e1} (c) has wider confidence intervals than the SBM one in Figure \ref{fig:e1} (d). Since the \textit{Versatile} method works directly on the whole multiplex, when the \MULTIJ is equal to 1.0 the  $w_\tau$ coefficient does not reach exactly the value of 1.0 in Figure \ref{fig:e1} (f), although it comes very close to it.

Another interesting aspect is the ``stratification'' of the data. The larger the number of nodes is, the smaller is the difference between the compared methods. This phenomenon might be due to the fact that the score of a node is less influential (when more nodes are present), and the two rankings become more similar even for lower values of the \MULTIJ coefficient.

\subsection{Configuration and Shift Impact}\label{exp:batch2}
This set of experiments is designed to investigate the impact that different \textit{configurations} and  \textit{shifts} in a multiplex of two layers have on the produced rankings.
To compare the results, we need to consider a reference ranking. To this end, we chose to use the rankings obtained by \textit{HITS} on every layer.
For this set of experiments we use the same type of generators, we fix the number of nodes of the generated multiplex to 256, and we measure the  results adopting the same measures (\MULTIJ and $w_\tau$) described in Section \ref{exp:batch1}.
All results are given in Figures \ref{fig:e2_conf_A}, \ref{fig:e2_conf_B}, and \ref{fig:e2_shift}. Each configuration and each member of the equivalence class produces different results. In general, we observe that increasing values of \MULTIJ  give larger $w_\tau$ values, although the latter does not always reach the value of one.  Henceforth, we discuss representative results.

\subsubsection{Impact of the configuration}\label{exp:batch2:conf}
Figure \ref{fig:e2_conf_short} shows all the members of the equivalence classes $A_0A_1^TA_1A_0^T$ ((a) and (b)) and $A_0A_0^TA_1A_1^T$ ((c) and (d)).  Figures (a) and (c) use an Erdos-Renyi generator, while Figures (b) and (d) use a stochastic block model as generator. 
If we consider the same member of the equivalence class and compare the results obtained with different generators, we can see that the trends are the same in both cases.
It's important to observe that when \MULTIJ is equal to one in Figures \ref{fig:e2_conf_short} (a) and (b), the members $A_0^TA_0A_1^TA_1$ and   $A_1^TA_1A_0^TA_0$  become equal to $A_0^TA_0A_0^TA_0$, and the produced ranking is exactly the same as the one given by \textit{HITS}. 
Similarly, in Figures \ref{fig:e2_conf_short}(c) and (d), we see that when the members $A_0^TA_1A_1^TA_0$ and $A_1^TA_1A_0^TA_0$  become equal to $A_0^TA_0A_0^TA_0$, the produced ranking is the same as the one given by \textit{HITS}.
For both \textit{configurations} $A_0A_1^TA_1A_0^T$ and $A_0A_0^TA_1A_1^T$, it's 
important to observe that, when the member of the class has the order between a matrix and a transposed matrix inverted ($A_0A_1^TA_1A_0^T$, $A_1A_0^TA_0A_1^T$ and $A_0A_0^TA_1A_1^T$, $A_1A_1^TA_0A_0^T$), then there is no correlation between the produced ranks and the one computed by \textit{HITS}. With an Erdos-Renyi generator, in particular, the average in the latter case is always close to zero.
For a complete overview of the experiments related to different \textit{configurations}, see Figures \ref{fig:e2_conf_A} and \ref{fig:e2_conf_B} of the Appendix. 

\subsubsection{Impact of the shift}\label{exp:batch2:shift}
As defined in Section \ref{sec:problem_def}, a \textit{configuration} identifies an equivalence class. It's then necessary to identify a member of the class by rotating the representative member by $h$ position. For a complete overview of the experiments that analyze changes in similarity due to each $\textit{shift}_h$, see Figure \ref{fig:e2_shift} of the Appendix.
In Figure  \ref{fig:e2_shift_short}, we present the analysis related to $\textit{shift}_1$ ((a) and (b)) and $\textit{shift}_3$ ((c) and (d)). Figures (a) and (c) use  an Erdos-Renyi generator, while Figures (b) and (d) use a stochastic block model generator. As expected, $\textit{shift}_1$ and $\textit{shift}_3$ are those shifts in which the configurations $A_0A_1^TA_1A_0^T$ and $A_0A_0^TA_1A_1^T$   converge to a $w_\tau$ equal to one when \MULTIJ reaches one. The plots clearly show that the configurations should be grouped by two, and the groups expose the same relative trends. 
In $\textit{shift}_1$, the identified groups are: 
$\{A_0^TA_1A_1^TA_0,A_1^TA_1A_0^TA_0\}$, $\{A_0^TA_1^TA_1A_0,A_1^TA_0^TA_1A_0\}$, $\{A_1A_0^TA_1^TA_0, A_1A_1^TA_0^TA_0\}$, as shown in Figures \ref{fig:e2_shift_short} (a) and (b).
In $shift_3$ the identified groups are: $\{A_0^TA_0A_1^TA_1,A_1^TA_0A_0^TA_1\}$, $\{A_0^TA_0A_1A_1^T,A_1^TA_0A_1A_0^T\}$, $\{A_1A_0A_0^TA_1^T, A_1A_0A_1^TA_0^T\}$, as shown in Figures \ref{fig:e2_shift_short} (c) and (d).
The first group of each shift has larger $w_\tau$ values. The remaining groups overlap with one other. Furthermore, for the set of experiments related to \textit{shift} analysis, we can see that increasing \MULTIJ  values produce larger $w_\tau$ values.

\begin{center}
\begin{figure}[ht]
  \begin{subfigure}[b]{0.48\linewidth}
  \includegraphics[width =\linewidth]{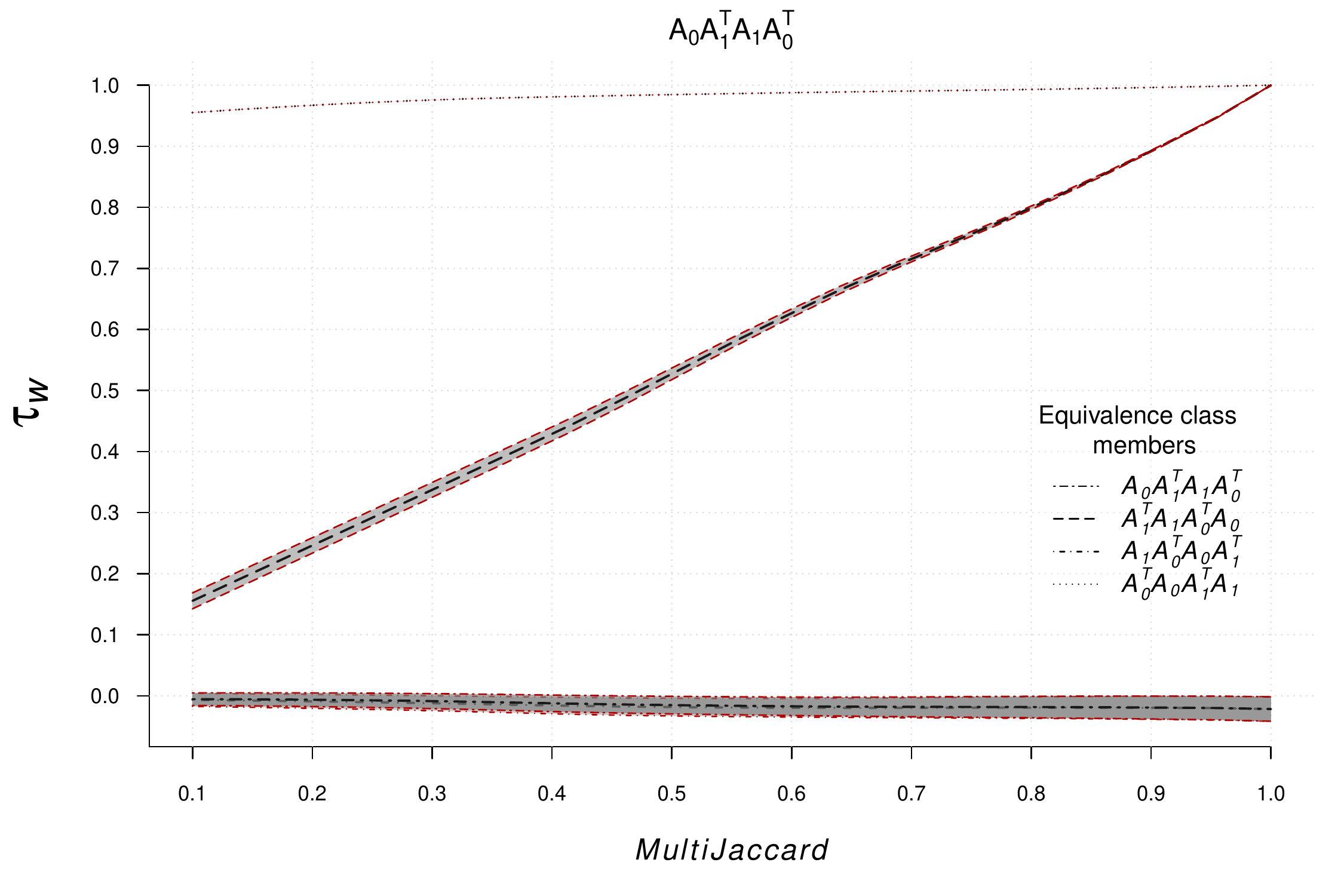}
  \caption{\label{fig:e2_s_erdos_c_A0A1TA1A0T}}
  \end{subfigure}
  \begin{subfigure}[b]{0.48\linewidth}
  \includegraphics[width =\linewidth]{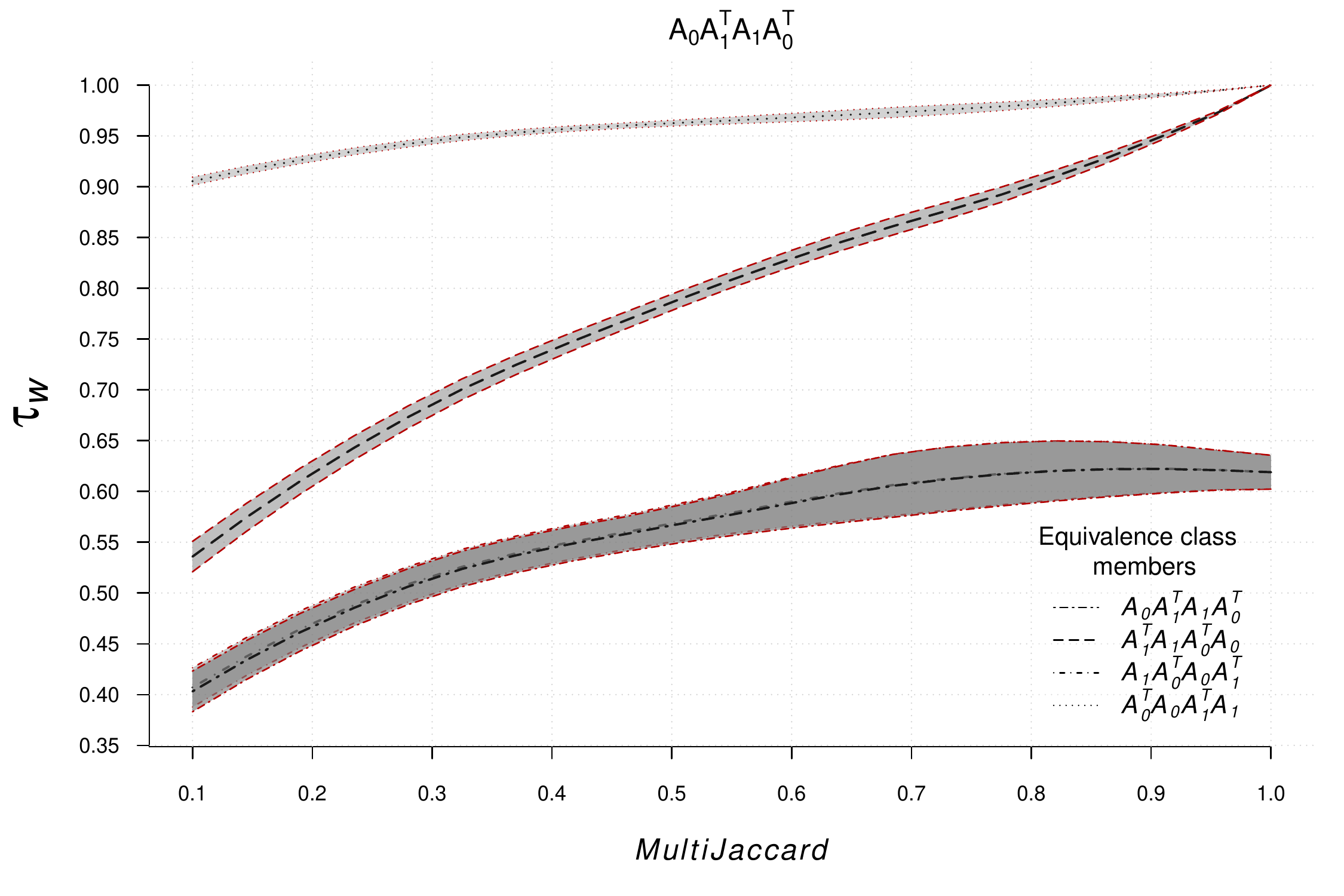}
  \caption{\label{fig:e2_s_sbm_c_A0A1TA1A0T}}
  \end{subfigure}
  \\
  \begin{subfigure}[b]{0.48\linewidth}
  \includegraphics[width =\linewidth]{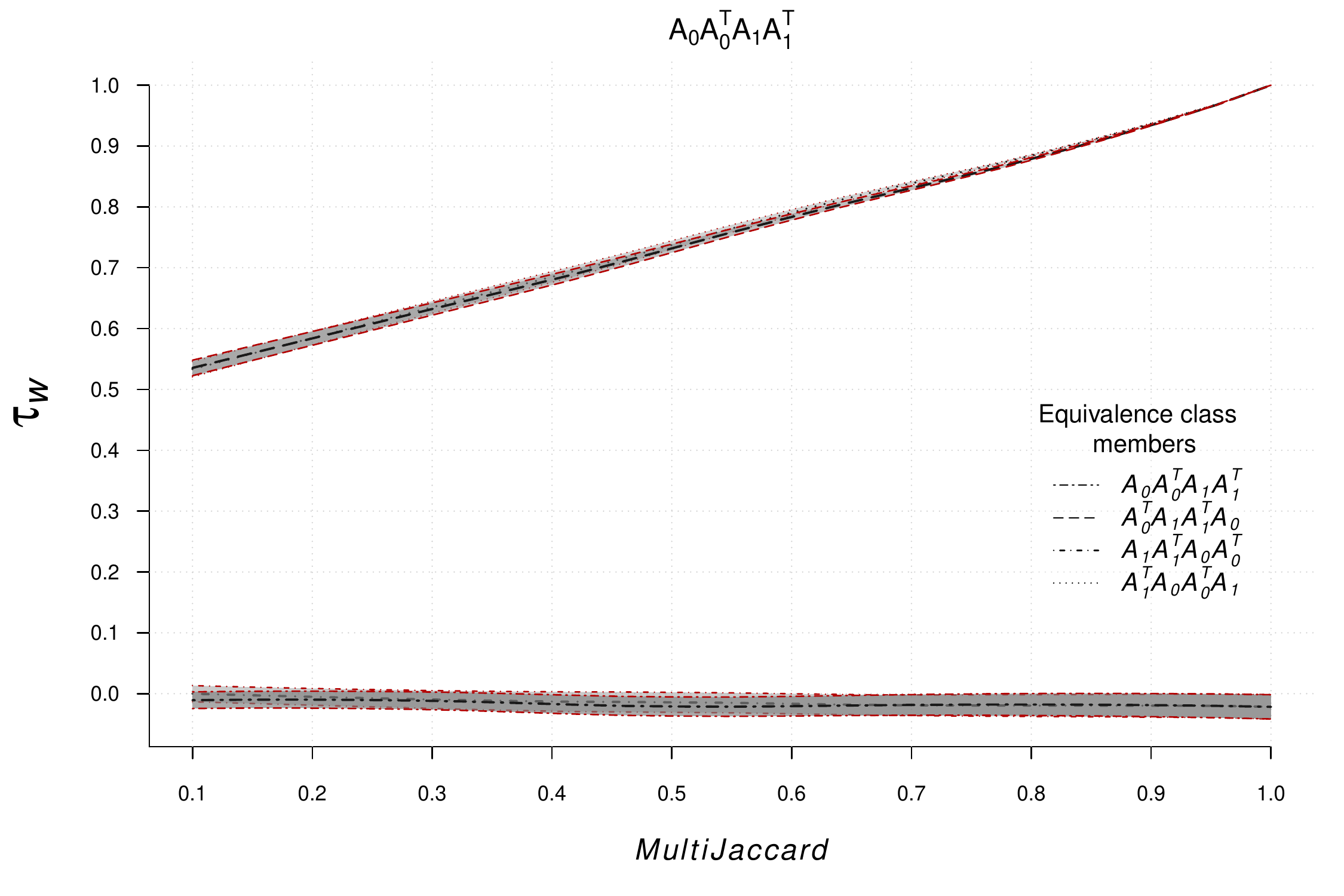}
  \caption{\label{fig:e2_s_erdos_c_A0A0TA1A1T}}
  \end{subfigure}
  \begin{subfigure}[b]{0.48\linewidth}
  \includegraphics[width =\linewidth]{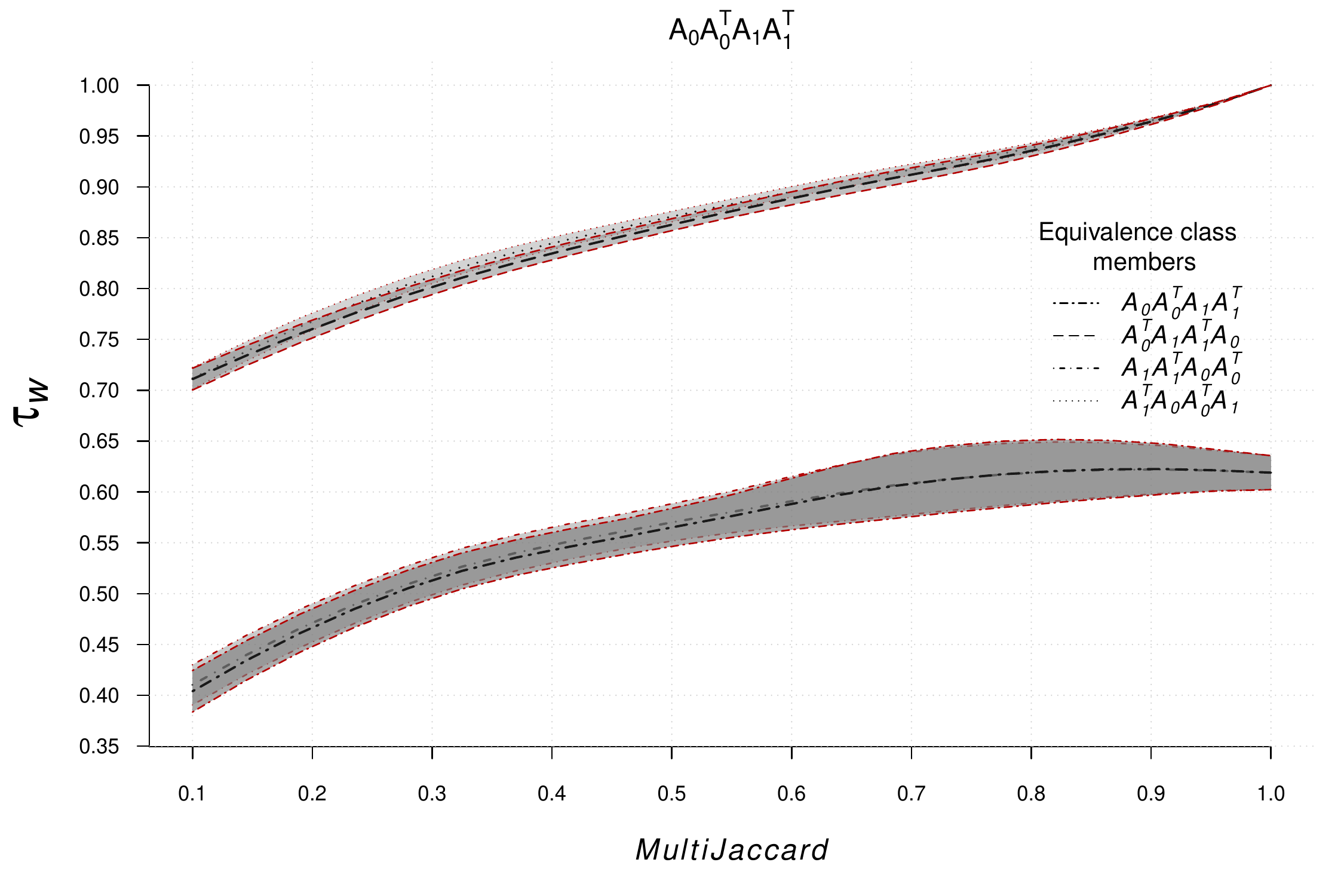}
  \caption{\label{fig:e2_s_sbm_c_A0A0TA1A1T}}
  \end{subfigure}
 
  \caption{Similarity between rankings produced by \textit{HITS} and \MRF for a multiplex network of two layers, considering all the members of the equivalence classes. (a) and (b):  $A_0A_1^TA_1A_0^T$; (c) and (d):  $A_0A_0^TA_1A_1^T$. (a) and (c) use an Erdos-Renyi graph as generator; (b) and (d) use stochastic block model graph as generator.
  \label{fig:e2_conf_short}}
\end{figure}
\end{center}

\begin{center}
\begin{figure}[ht]
    \begin{subfigure}[b]{0.48\linewidth}
  \includegraphics[width =\linewidth]{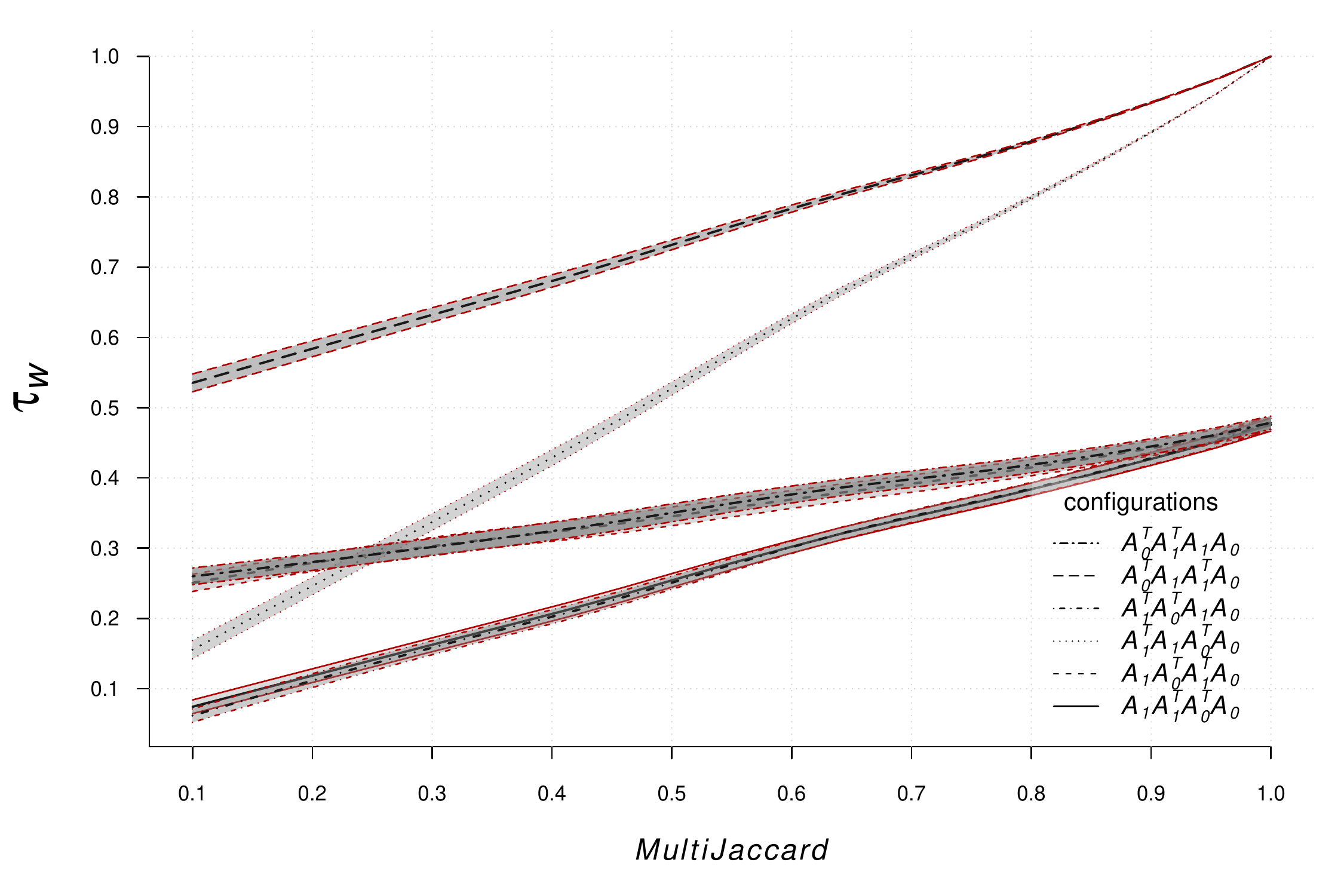}
  \caption{\label{fig:e2_s_erdos_shift_1}}
  \end{subfigure}
  \begin{subfigure}[b]{0.48\linewidth}
  \includegraphics[width =\linewidth]{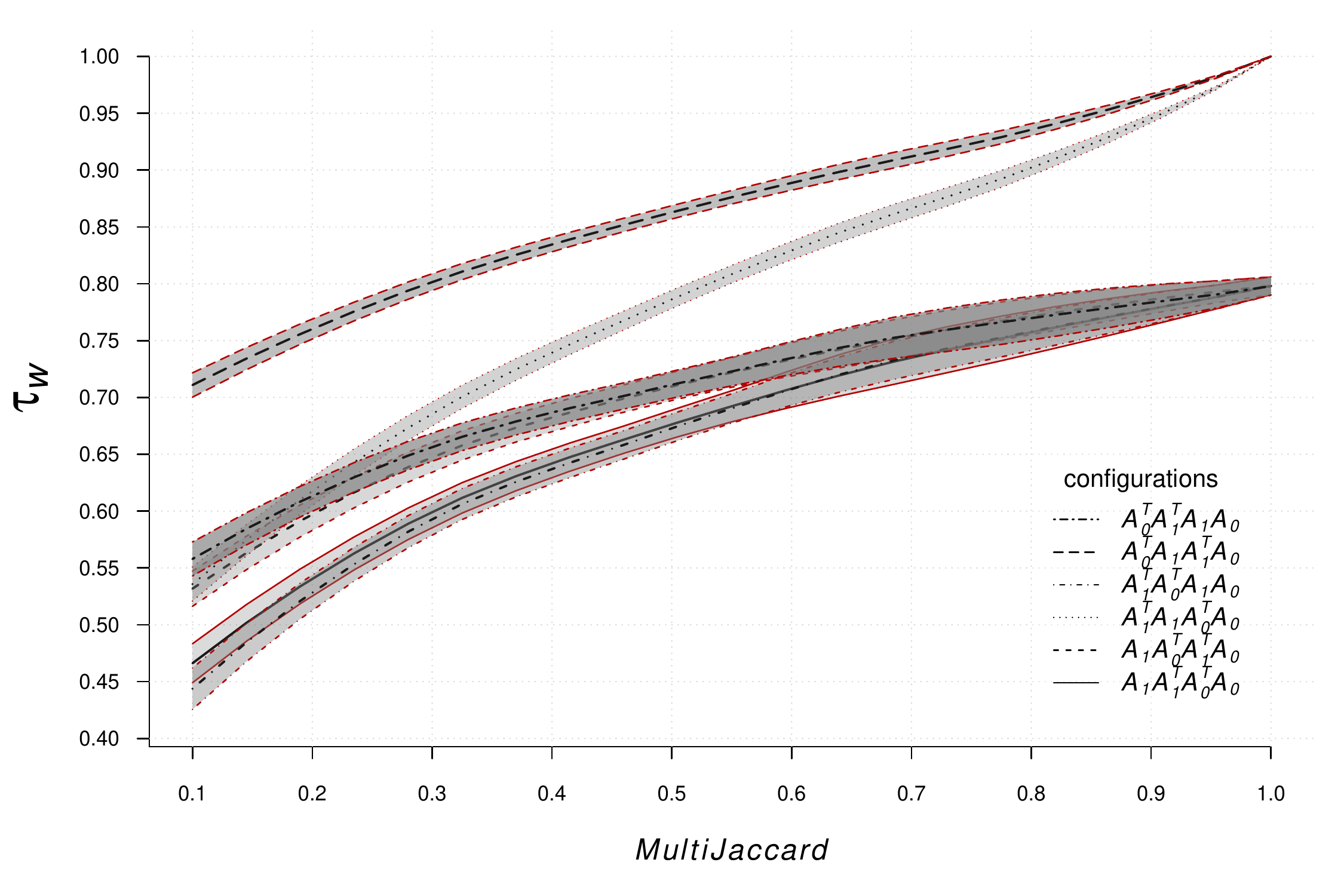}
  \caption{\label{fig:e2_s_sbm_shift_1}}
  \end{subfigure}
  \\
  \begin{subfigure}[b]{0.48\linewidth}
  \includegraphics[width =\linewidth]{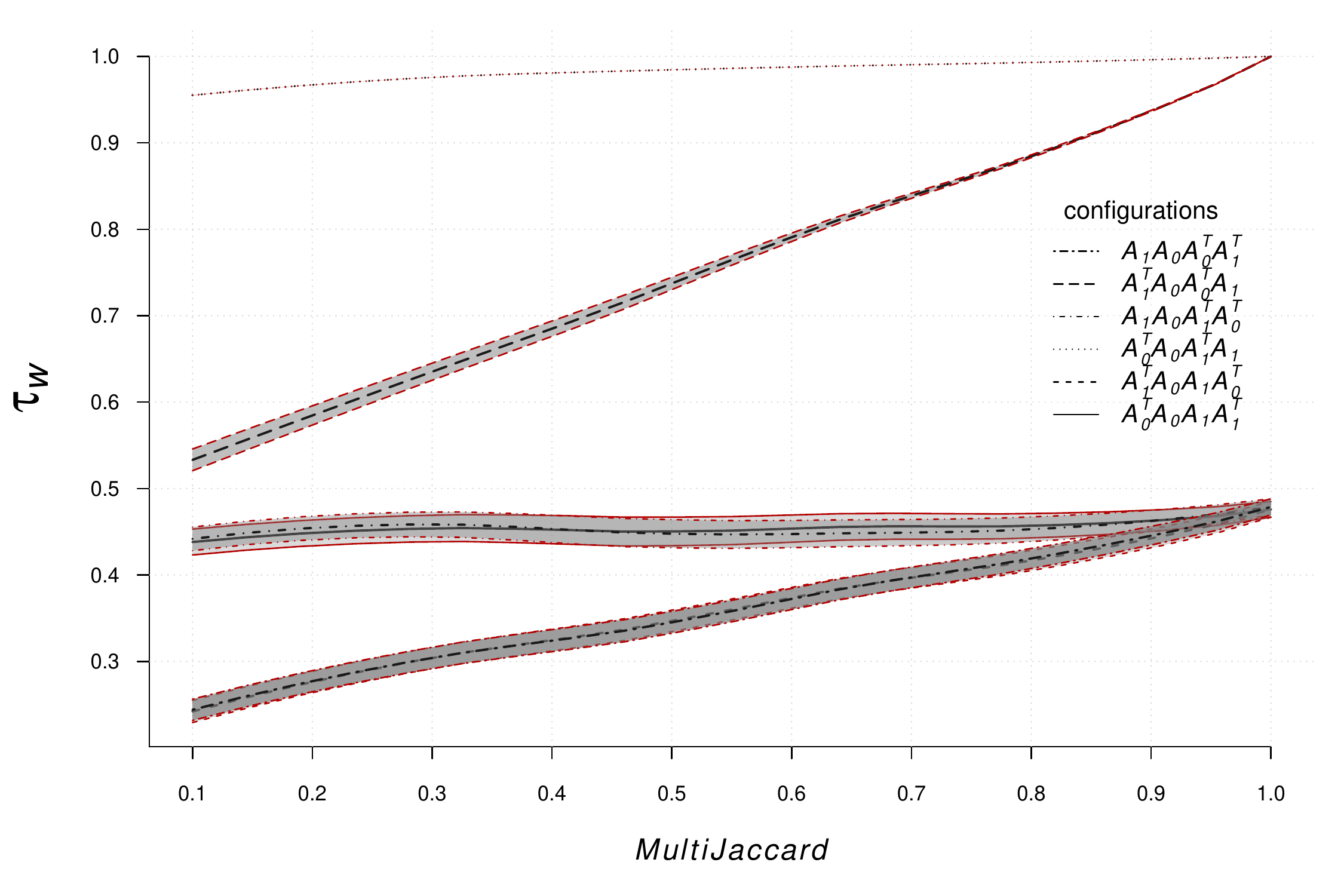}
  \caption{\label{fig:e2_s_erdos_shift_3}}
  \end{subfigure}
  \begin{subfigure}[b]{0.48\linewidth}
  \includegraphics[width =\linewidth]{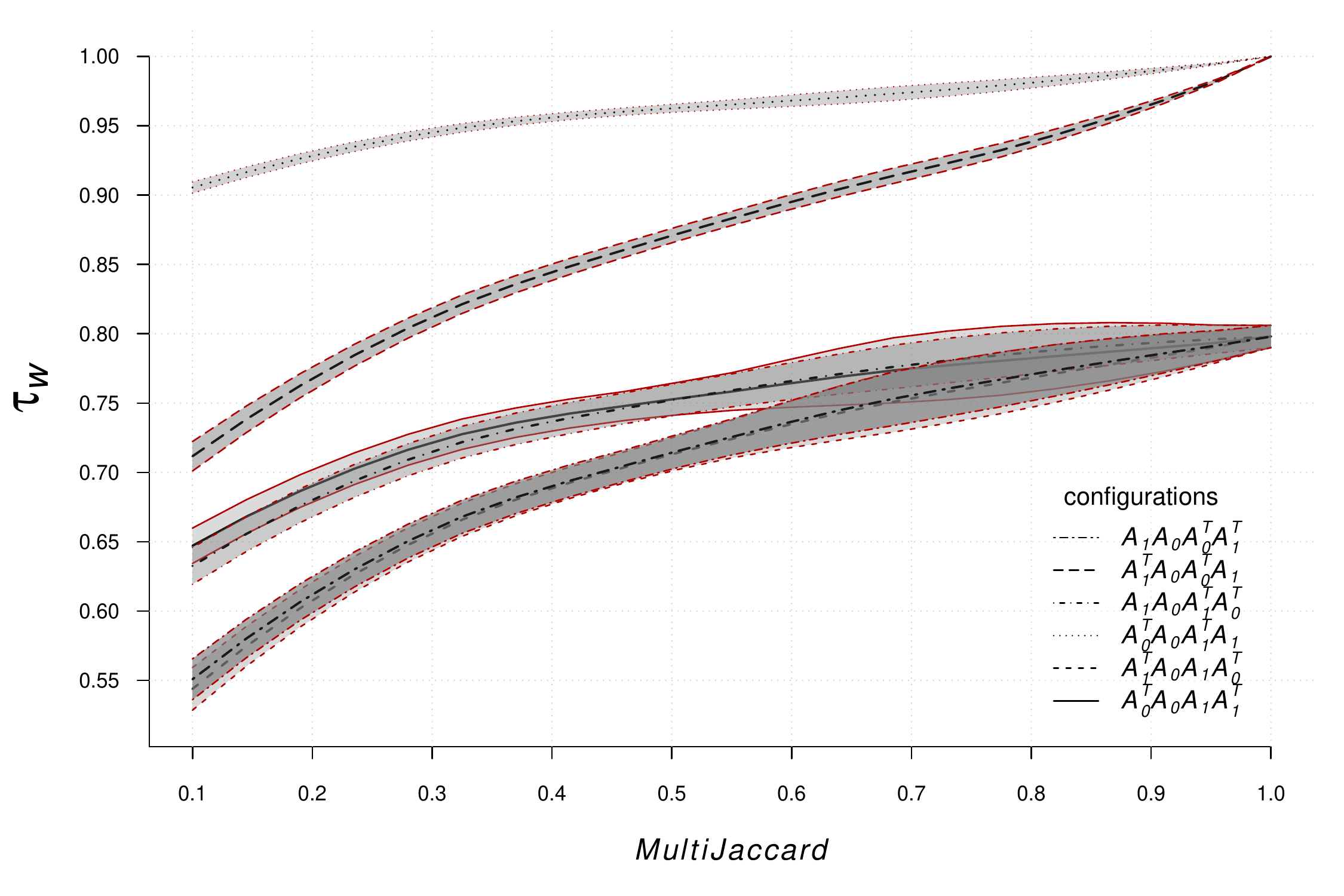}
  \caption{\label{fig:e2_s_sbm_shift_3}}
  \end{subfigure}

  \caption{Similarity between rankings produced by \textit{HITS} and \MRF for a multiplex network of two layers using all possible configurations. (a) and (b): $shift_1$; (c) and (d):  $shift_3$. (a) and (c) use  an Erdos-Renyi graph as generator; (b) and (d) use  a stochastic block model graph as generator.
  \label{fig:e2_shift_short}}
\end{figure}
\end{center}

\subsection{Convergence of the Method}\label{exp:batch3}

Let us now denote by $v(\tau)$ the ranking vector corresponding to the matrix $M(\tau)$ as in Section \ref{subsubsec:proof}, and set $v(0) = \lim_{\tau \to 0} v(\tau)$.

The function $\tau \mapsto v(\tau)$ is analytic for all (but finitely many) values of $\tau$, hence for all values of $\tau$ lying in a suitable neighborhood of $0$, with the possible exception of $0$. By the arguments in Section \ref{subsec:iterative-method}, it is also analytic in $0$ in real cases. Therefore, one has a first order expansion $v(\tau) \approx v(0) + {\mathbf{q}} \tau,$ which implies $||v(\tau) - v(0)||_1 \sim q \tau$ and also $||v(2\tau) - v(\tau)||_1 \sim q\tau,$ where $q = ||{\mathbf{q}}||_1$ is an opportune constant.

The method we outline in Section \ref{sec:exp} computes each $v(\tau)$ by iterated (renormalized) applications of $M = M(\tau)$. The speed of convergence of each iterated vector $\rr^t$ to $v(\tau)$ depends on (the maximal value of the complex norm of) the ratio between the principal eigenvalue of $M(\tau)$ and other eigenvalues. This ratio, as a function of $\tau$, will also tend to $1$ linearly when $\tau$ approaches $0$.\footnote{If this ratio stays away from $1$ when $\tau \to 0$, then convergence is much faster and the principal eigenvalue stays simple.} As the initial approximate value of the ranking vector we use is the one obtained in the previous iteration, i.e., $v(2\tau)$, the number $\delta$ of iterations that are needed in order to ensure convergence to $v(\tau)$ at each $\tau$-step is therefore (asymptotically, when $\tau \to 0$) constant.

We would like to stress the fact that in the examples that we have worked out $\delta$ is surprisingly small ($4 \lesssim \delta  \lesssim 8$ will typically suffice to yield machine precision convergence at each given value of $\tau$), showing effectiveness of our implementation. Also, the error between $v(\tau)$ and $v(0)$ is proportional to $\tau$; as at each step $\tau$ gets halved, we achieve exponential convergence to $v(0)$.

Figures \ref{fig:e3_s_conv} and \ref{fig:e3_s_const} show results from experiments run with two layers (hence four matrices) in a \textit{HITS-like} framework on graphs with sparse connectivity. Averages (black dotted lines) have been obtained by running each experiments 32 times and are decorated by the grey areas showing up confidence interval described in \ref{exp:env:measures}. Figures \ref{fig:e3_conv} and \ref{fig:e3_const} at the end of the paper are magnified versions of Figures \ref{fig:e3_s_conv} and \ref{fig:e3_s_const}.

\begin{center}
\begin{figure}[ht]
    \begin{subfigure}[b]{0.48\linewidth}
  \includegraphics[width =\linewidth]{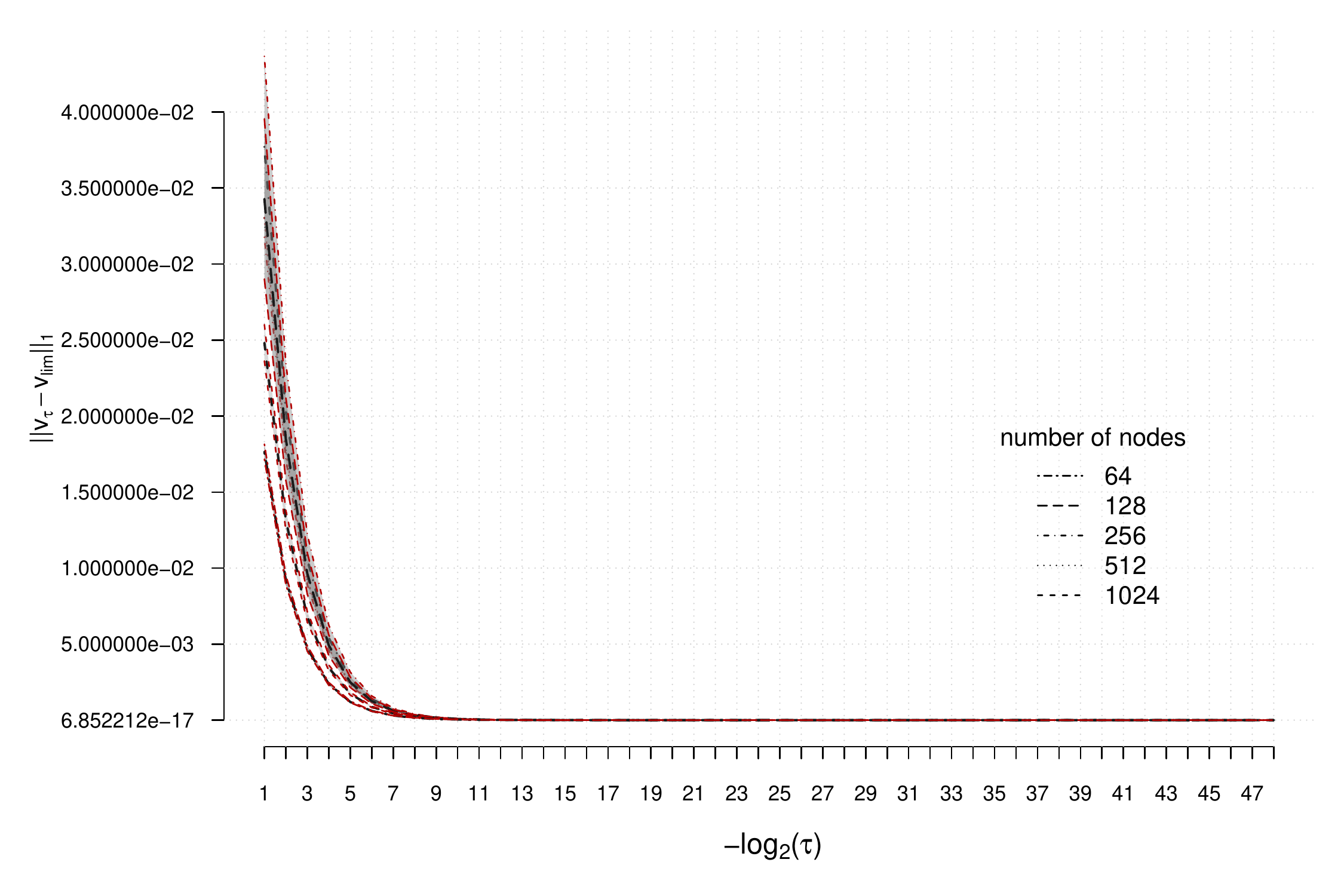}
  \caption{\label{fig:e3_s_erdos_conv}}
  \end{subfigure}
  \begin{subfigure}[b]{0.48\linewidth}
  \includegraphics[width =\linewidth]{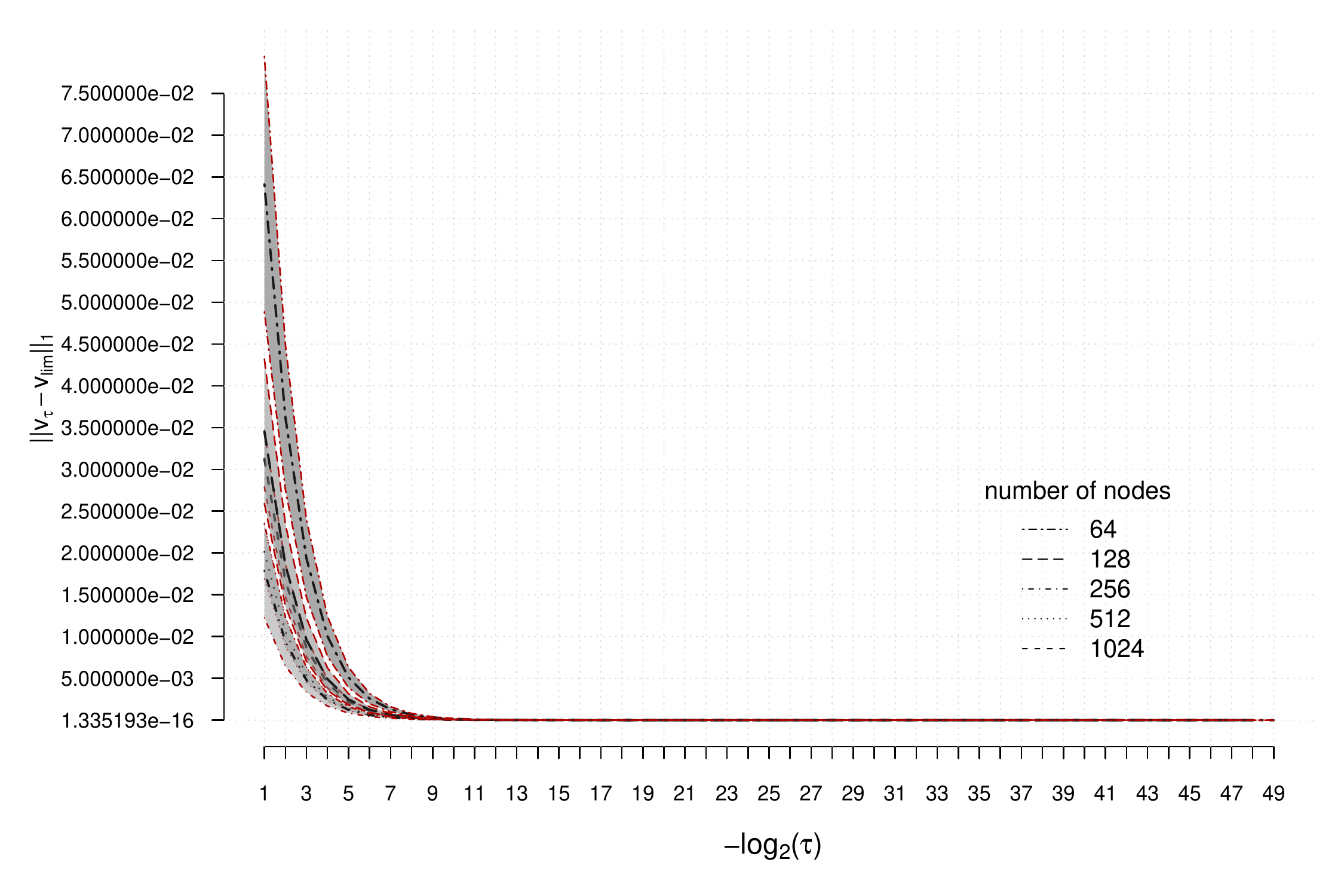}
  \caption{\label{fig:e3_s_sbm_conv}}
  \end{subfigure}  
  \caption{\label{fig:e3_s_conv} Convergence to the limit ranking vector using an Erdos-Renyi graph and a stochastic block model graph. The plots show $||v(\tau)-v(0)||_1$ against the number of times $\tau$ has been halved.}  
\end{figure}
\end{center}

\begin{center}
\begin{figure}[ht]
    \begin{subfigure}[b]{0.48\linewidth}
  \includegraphics[width =\linewidth]{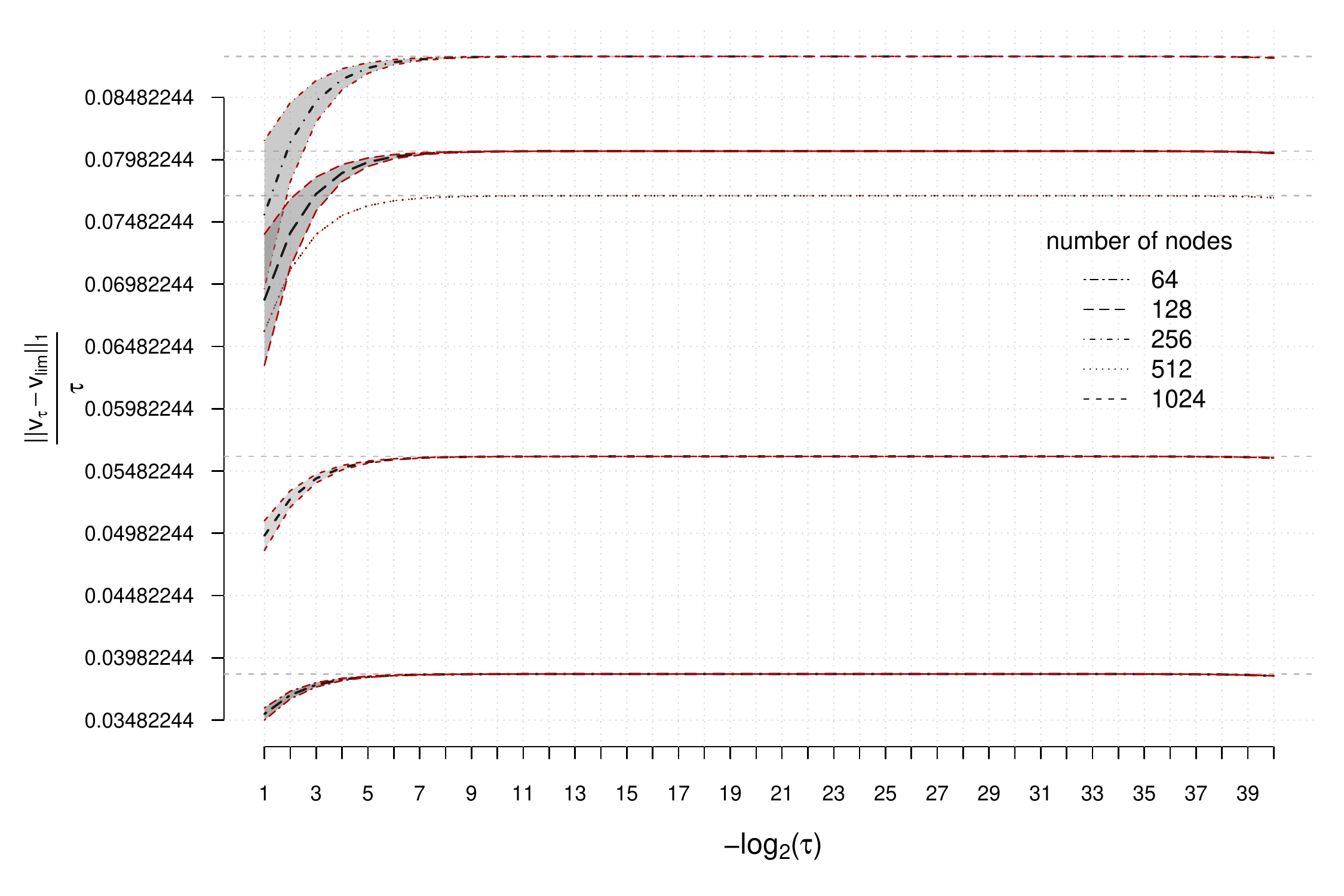}
  \caption{\label{fig:e3_s_erdos_const}}
  \end{subfigure}
  \begin{subfigure}[b]{0.48\linewidth}
  \includegraphics[width =\linewidth]{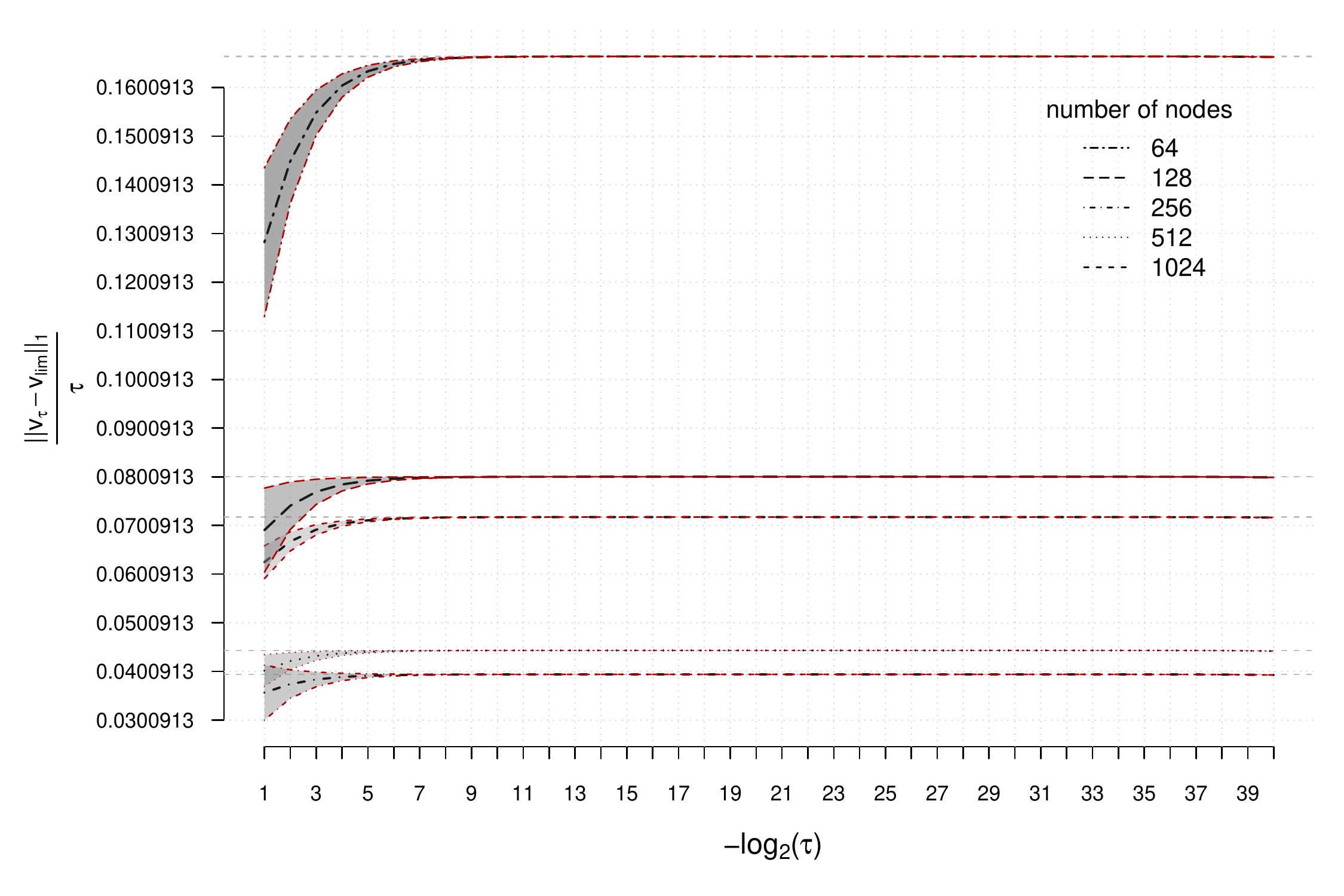}
  \caption{\label{fig:e3_s_sbm_const}}
  \end{subfigure}  
  \caption{\label{fig:e3_s_const} The quantity $||v(\tau) - v(0)||_1/\tau$ is plotted against $-\log_2\tau$, thus highlighting the (asymptotical) linear behavior of $||v(\tau)- v(0)||_1$ as a function of $\tau$.}  
\end{figure}
\end{center}

Figures \ref{fig:e3_s_erdos_conv} and \ref{fig:e3_s_sbm_conv} provide evidence of exponential convergence in experiments run in an Erdos-Renyi and SBM setting.

Linear dependence of $||v(\tau) - v(0)||_1$ on $\tau$ is highlighted in Figures \ref{fig:e3_s_erdos_const} and \ref{fig:e3_s_sbm_const} in an Erdos-Renyi and SBM setting. The behavior becomes fully linear after very few (about $10$) halvings of $\tau$. In both settings, the linearity constant appears to depend only on the number of nodes, yet in an erratic way (the dependence is neither increasing nor decreasing). The linearity constants stay, however, within the same order of magnitude in all experiments. 

The choice of multiplying $\tau$ by $1/2$ at each step is arbitrary, as any other positive constant smaller than $1$ would yield the same goal of achieving exponential decay of $\tau$. Constants that are too close to $1$, however, make the convergence $\tau \to 0$ more time consuming, and constants that are too close to $0$ may dramatically increase the number $\delta$ of iterations necessary at each step in order to ensure machine precision convergence.

Figure \ref{fig:e3_s_lay}, a larger version of which is Figure \ref{fig:e3_lay}, shows the total number of iterations of our implementation in experiments run in an Erdos-Renyi or stochastic block model setting for different values of the number of layers. The number of iterations needed for convergence, as a function of the number of layers, shows an overall decreasing trend along with smaller oscillations that appear to occur together with a widening of confidence intervals. It is likely that these irregularities are only transitory, as it becomes more evident in the SBM setting, and are due to some critical interaction between the number of nodes and the number of layers.

\begin{center}
\begin{figure}[ht]
    \begin{subfigure}[b]{0.48\linewidth}
  \includegraphics[width =\linewidth]{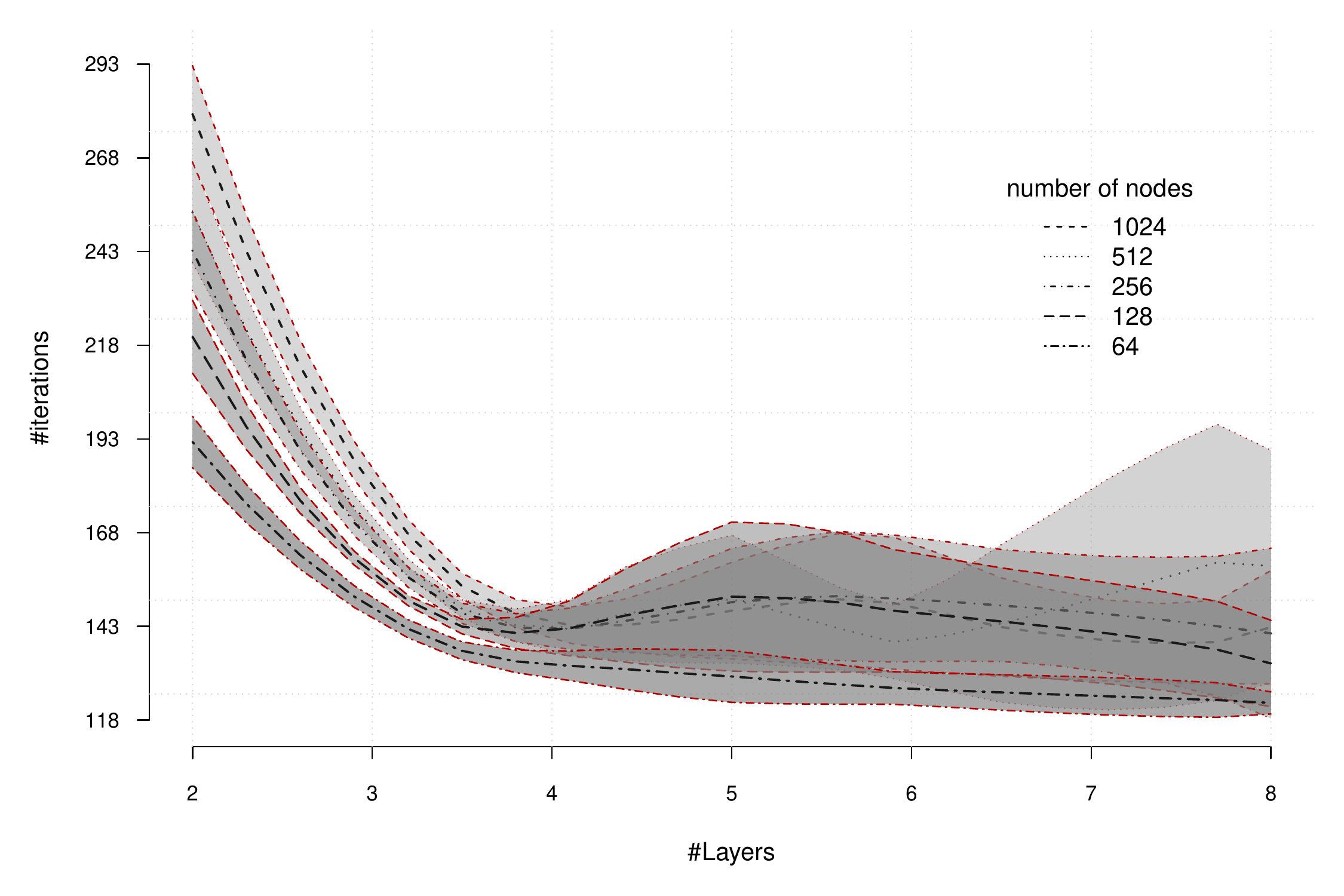}
  \caption{\label{fig:e3_s_erdos_lay}}
  \end{subfigure}
  \begin{subfigure}[b]{0.48\linewidth}
  \includegraphics[width =\linewidth]{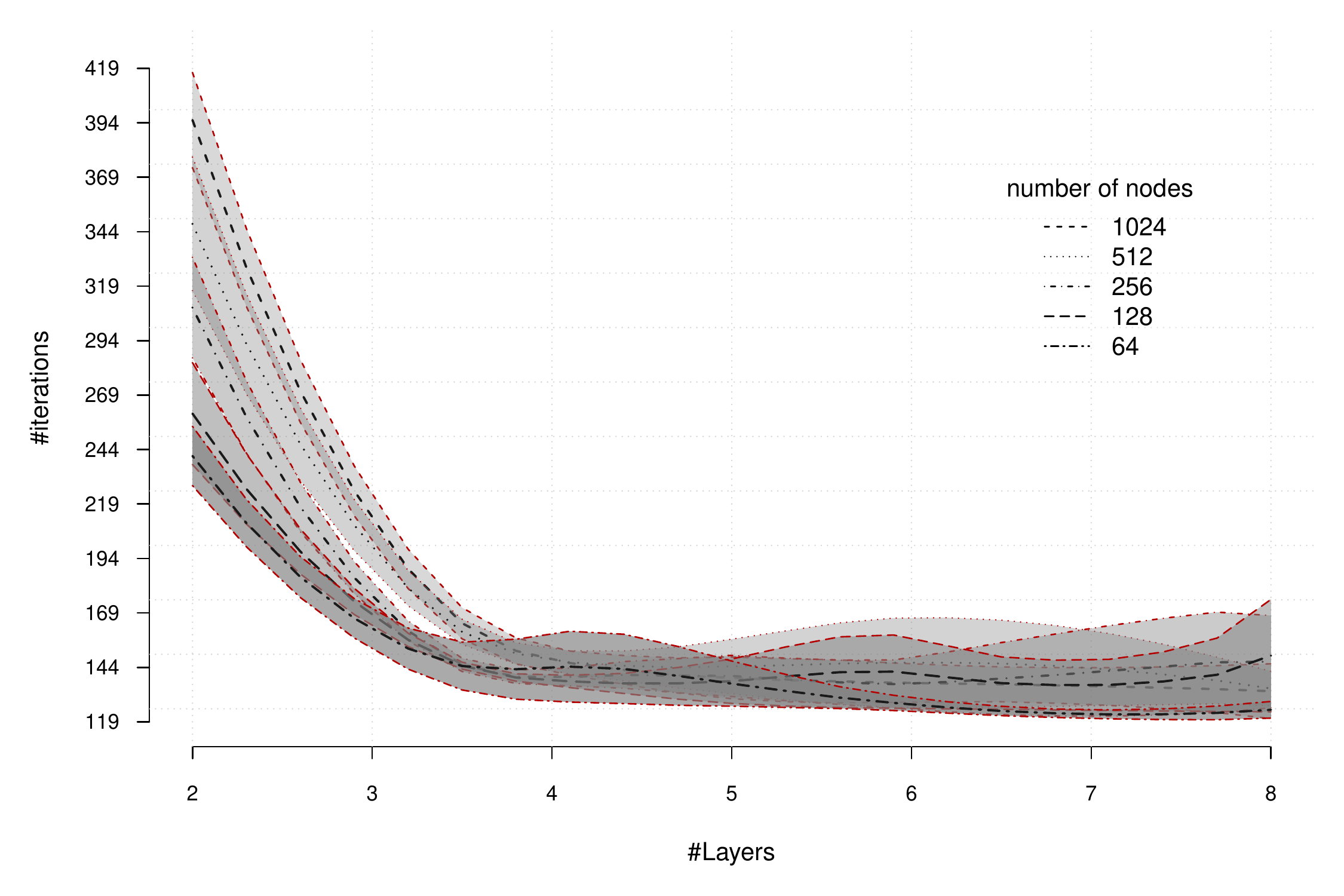}
  \caption{\label{fig:e3_s_sbm_lay}}
  \end{subfigure}  
  \caption{\label{fig:e3_s_lay} Total number of iterations performed in each Erdos-Renyi and SBM experiment for different values of the number of layers.} 
\end{figure}
\end{center}

\begin{table*}[h!t]
  \centering
    \begin{adjustbox}{width=0.98\linewidth}
	\begin{tabular}{r r r r r r r}
    $\mathbf{|V|}$&\multicolumn{1}{c}{\textbf{PageRank}}&\multicolumn{1}{c}{\textbf{PageRank-like}}&\multicolumn{1}{c}{\textbf{HITS}}&\multicolumn{1}{c}{\textbf{HITS-like}}&\multicolumn{1}{c}{\textbf{Versatile}}&\multicolumn{1}{c}{\textbf{Versatile-like}}\\
        \hline 
\textbf{64}&8,192&266,240&532,480&790,528&524,288&266,240\\
\textbf{128}&32,768&2,113,536&4,227,072&6,307,840&4,194,304&2,113,536\\
\textbf{256}&131,072&16,842,752&33,685,504&50,397,184&33,554,432&16,842,752\\
\textbf{512}&524,288&134,479,872&268,959,744&402,915,328&268,435,456&134,479,872\\
\textbf{1024}&2,097,152&1,074,790,400&2,149,580,800&3,222,274,048&2,147,483,648&1,074,790,400\\
\textbf{2048}&8,388,608&8,594,128,896&17,188,257,792&25,773,998,080&17,179,869,184&8,594,128,896\\
\textbf{4096}&33,554,432&68,736,253,952&137,472,507,904&206,175,207,424&137,438,953,472&68,736,253,952\\
     \hline
  \end{tabular}
  \end{adjustbox}  
  \caption{Simulation of theoretical running times for a multiplex network of two layers.\label{tab:theo-complexity}}  
  
  \subsubsection{}
\end{table*}

\subsection{Computational Complexity} \label{exp:batch4}
In this section we study the theoretical complexity for all the  methods  analyzed in Section \ref{exp:batch1}. 
Lets remember that $n$ is the number of  nodes of each network, and $\mathcal{L}$ is the number of layers that compose the multiplex $\mathcal{M}$. For simplicity, we analyze only the cost to compute one iteration of each method.

\textit{PageRank} \cite{ilprints422}:
The core of the method involves the multiplication of the PageRank matrix with the score vector of the previous iteration, which costs $n \cdot n$. The method is applied over all the layers of the multiplex, thus the total cost for  \textit{PageRank}  in a multiplex is of the order:
$$[\mathcal{L} \cdot \mathbf{(\, n \cdot n\, )}]$$

\textit{HITS} \cite{Kleinberg:1999}:
The HITS method involved the multiplication of the two matrices $A^T$ and $A$, which costs $n \cdot n \cdot n$. The obtained matrix must be multiplied by the score vector of the previous iteration $(\, n \cdot n\, )$. The method is applied over all the layers of the multiplex, and its total cost is of the order:
$$\mathcal{L} \cdot [ \mathbf{(\, n \cdot n \cdot n \,)} + (\, n \cdot n\, )]$$

\textit{Versatile} \cite{dedomenico2015ranking}:
Considering the formula $M_{i\alpha}^{j\beta} \Theta_{i\alpha} = \lambda_1 \Theta_{j\beta}$, presented in \cite{dedomenico2015ranking}, we can summarize the cost of the method by computing the multiplication between the tensor and the matrix at each iteration. This operation can be seen as the multiplication of a opportunely built matrix of size $n\times p$, where $p=(n\cdot\mathcal{L})$  encodes the tensor, with the score matrix of size  $n \times n$. The cost of this multiplication is $n \cdot ( \,p\, ) \cdot n)$. Then the total cost  is of the order:
$$[\,\mathbf{n} \cdot (\,\mathcal{L} \cdot \mathbf{n} \,)\cdot \mathbf{n}\,]$$

\textit{\MRF}:
\MRF multiplies the configuration matrices. The matrix multiplication is performed $(k-1)$ times, where $k$ is the length of the configuration. It follows that the computational cost of the core part of the algorithm is $((k-1)\, n \cdot n \cdot n \,)$. 
The obtained matrix must be multiplied by the score vector of the previous iteration, which costs $(\, n \cdot n\, )$. Thus, the total cost amounts to:
$$[(k-1) \cdot \mathbf{(\, n \cdot n \cdot n \,)} + (\, n \cdot n\, )]$$

When the configuration length $k$ grows,  the cost of  the computation will grow linearly in $k$, but a larger number of tailored rankings will be obtained. 

For  \textit{PageRank-like}, the length of the configuration $k$ depends on the number of layers, and the cost is $(\mathcal{L}-1) \cdot (\, n \cdot n \cdot n \,) + (\, n \cdot n\, )$. 
For \textit{HITS-like}, the length of the configuration $k$ is two times the number of layers $\mathcal{L}$, and the cost is $[(2 \cdot \mathcal{L})-1] \cdot (\, n \cdot n \cdot n \,) + (\, n \cdot n\, )$.  
The configuration of \textit{Versatile-like} is the same of \textit{PageRank-like}, and the cost is $(\mathcal{L}-1) \cdot (\, n \cdot n \cdot n \,) + (\, n \cdot n\, )$.

In summary, all the methods are asymptotically dominated by $\mathbf{(n \cdot n \cdot n)}$, with the exception of \textit{PageRank}, which is dominated by $\mathbf{(n \cdot n)}$. Table \ref{tab:theo-complexity} shows these trends, and we can see how simpler methods, such as \textit{PageRank}, are also computationally lighter in practice. We can also observe that \textit{Versatile-like}  is faster than  \textit{Versatile}, and \textit{HITS-like} has the same order of complexity as \textit{HITS}.


\section{Conclusion}
\label{conclusion}
In this paper, we presented a general methodology to iteratively compute the rankings $\rr_s$ associated to any possible configuration defined on a collection $\mathcal{A}$ of matrices of a multiplex network $\mathcal{M}$.
Our framework is flexible and can accommodate any kind of centrality measures, defined by a cyclic dependency. It can be tailored by the user to capture the semantics of the specific scenario at hand. We developed a theoretically sound iterative method, based on Perron-Frobenius theory, to compute the rankings. Our solution has guaranteed convergence. Our empirical evaluation confirms that our approach encompasses and generalizes a variety of standard measures. Its complexity is modulated by the number of tailored rankings one is interested in.
One aspect, that it is currently under investigation, is the possibility of modulating the importance that each component of the multiplex network has on the tailored rankings.

\bibliographystyle{ACM-Reference-Format}
\bibliography{dcp} 

\appendix

\begin{center}
\begin{figure*}[ht]
  \begin{subfigure}[b]{0.48\linewidth}
  \includegraphics[width =\linewidth]{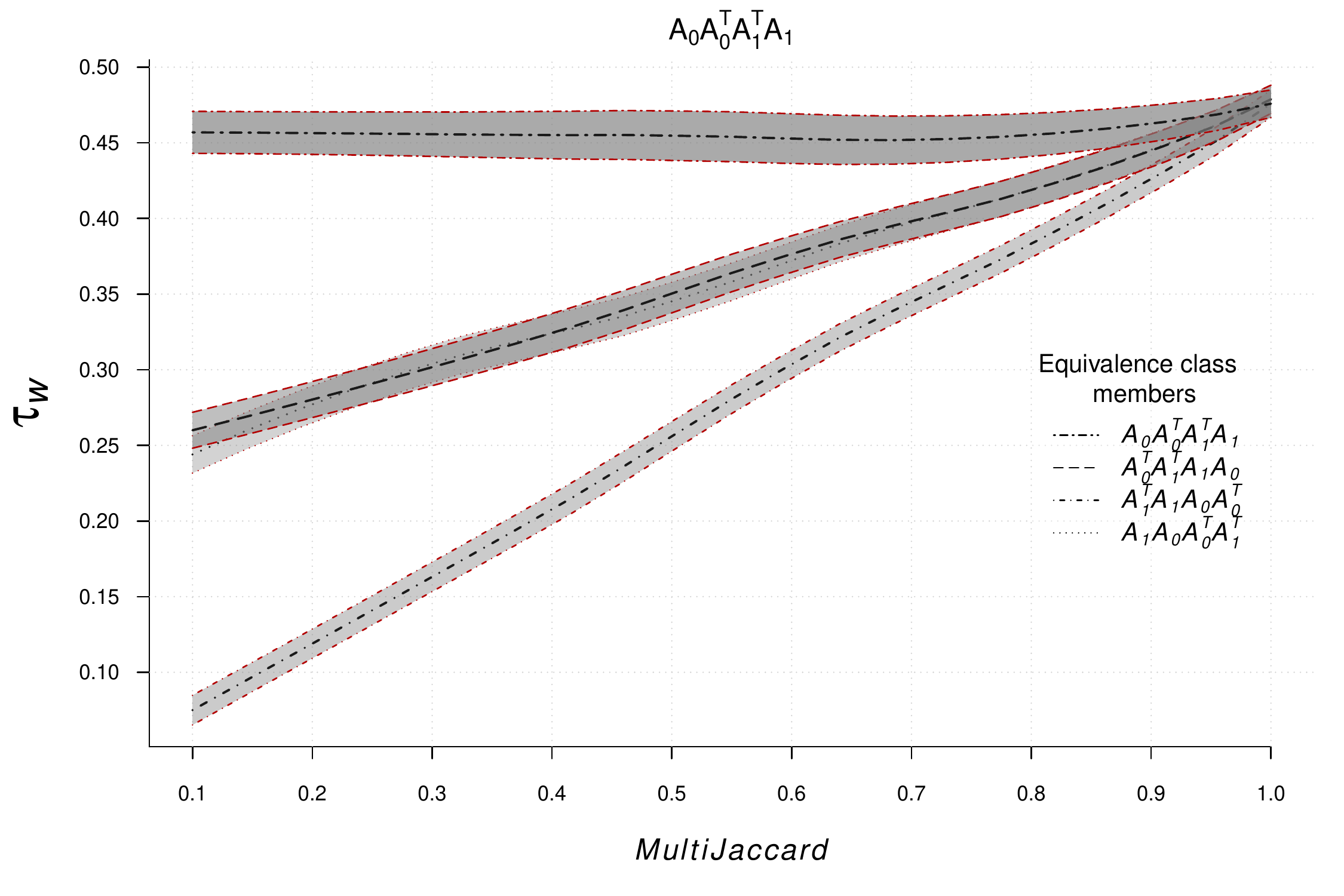}
  \caption{\label{fig:e2_erdos_c_A0A0TA1TA1}}
  \end{subfigure}
  \begin{subfigure}[b]{0.48\linewidth}
  \includegraphics[width =\linewidth]{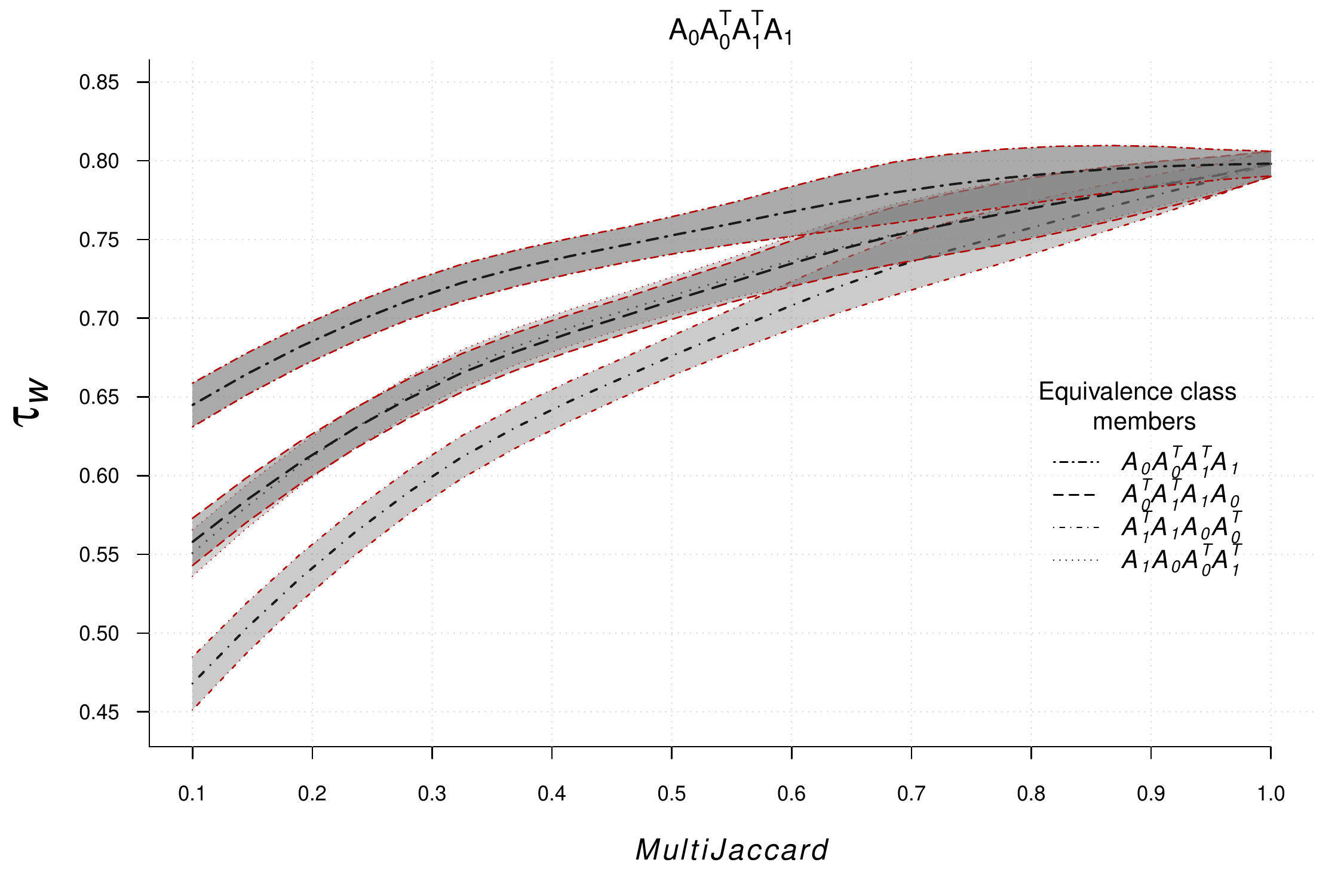}
  \caption{\label{fig:e2_sbm_c_A0A0TA1TA1}}
  \end{subfigure}
  \\
  \begin{subfigure}[b]{0.48\linewidth}
  \includegraphics[width =\linewidth]{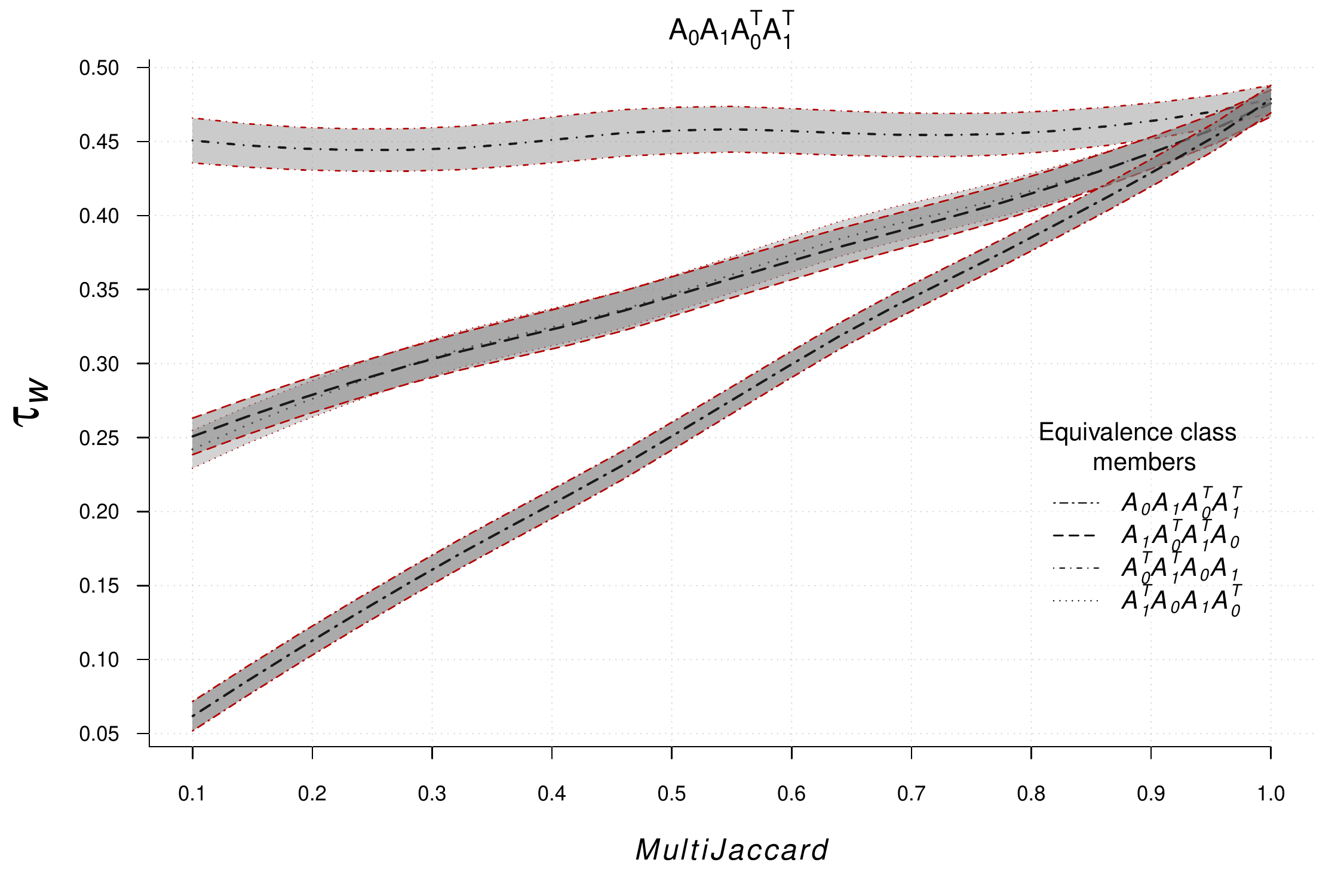}
  \caption{\label{fig:e2_erdos_c_A0A1A0TA1T}}
  \end{subfigure}
  \begin{subfigure}[b]{0.48\linewidth}
  \includegraphics[width =\linewidth]{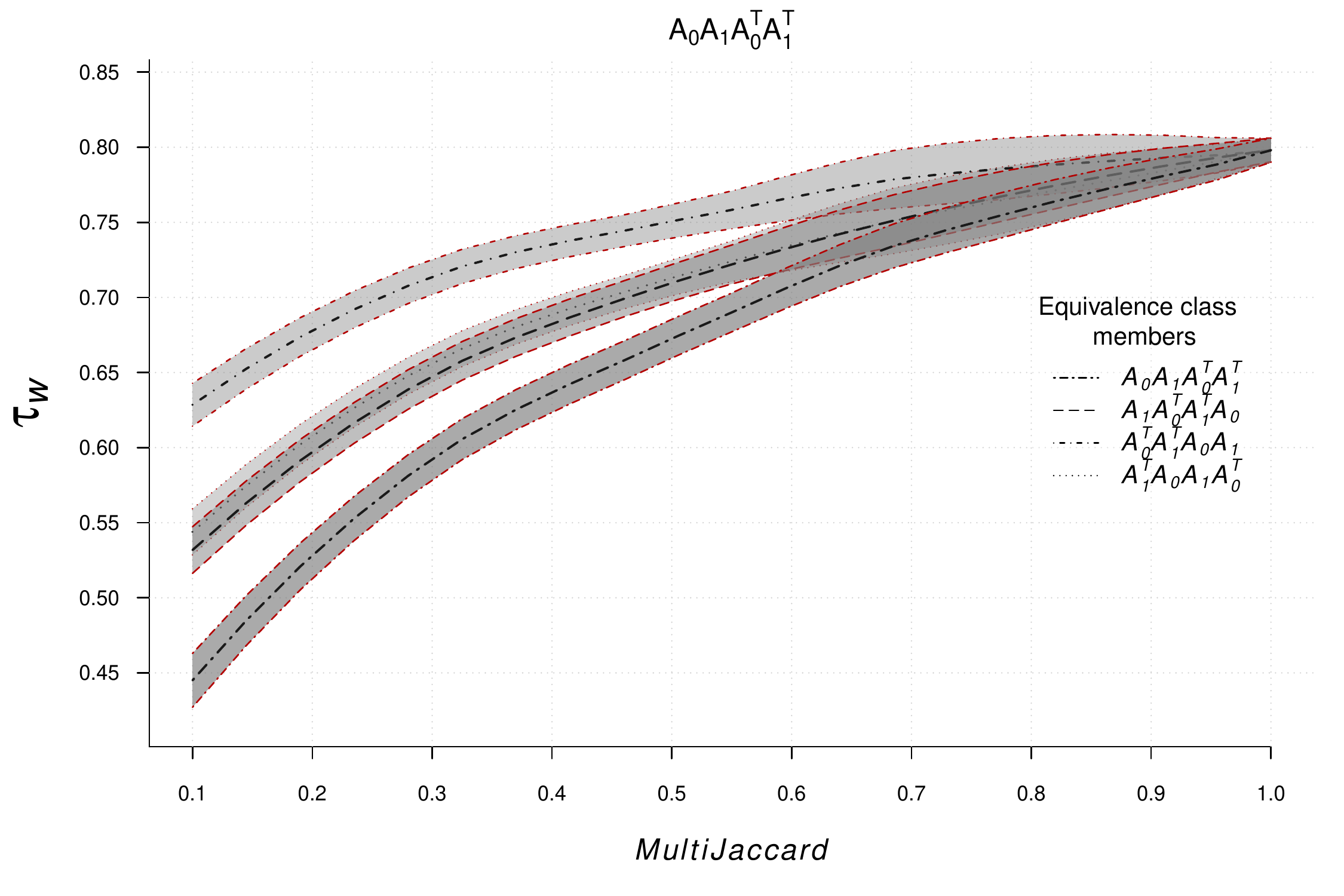}
  \caption{\label{fig:e2_sbm_c_A0A1A0TA1T}}
  \end{subfigure}
  \\
  \begin{subfigure}[b]{0.48\linewidth}
  \includegraphics[width =\linewidth]{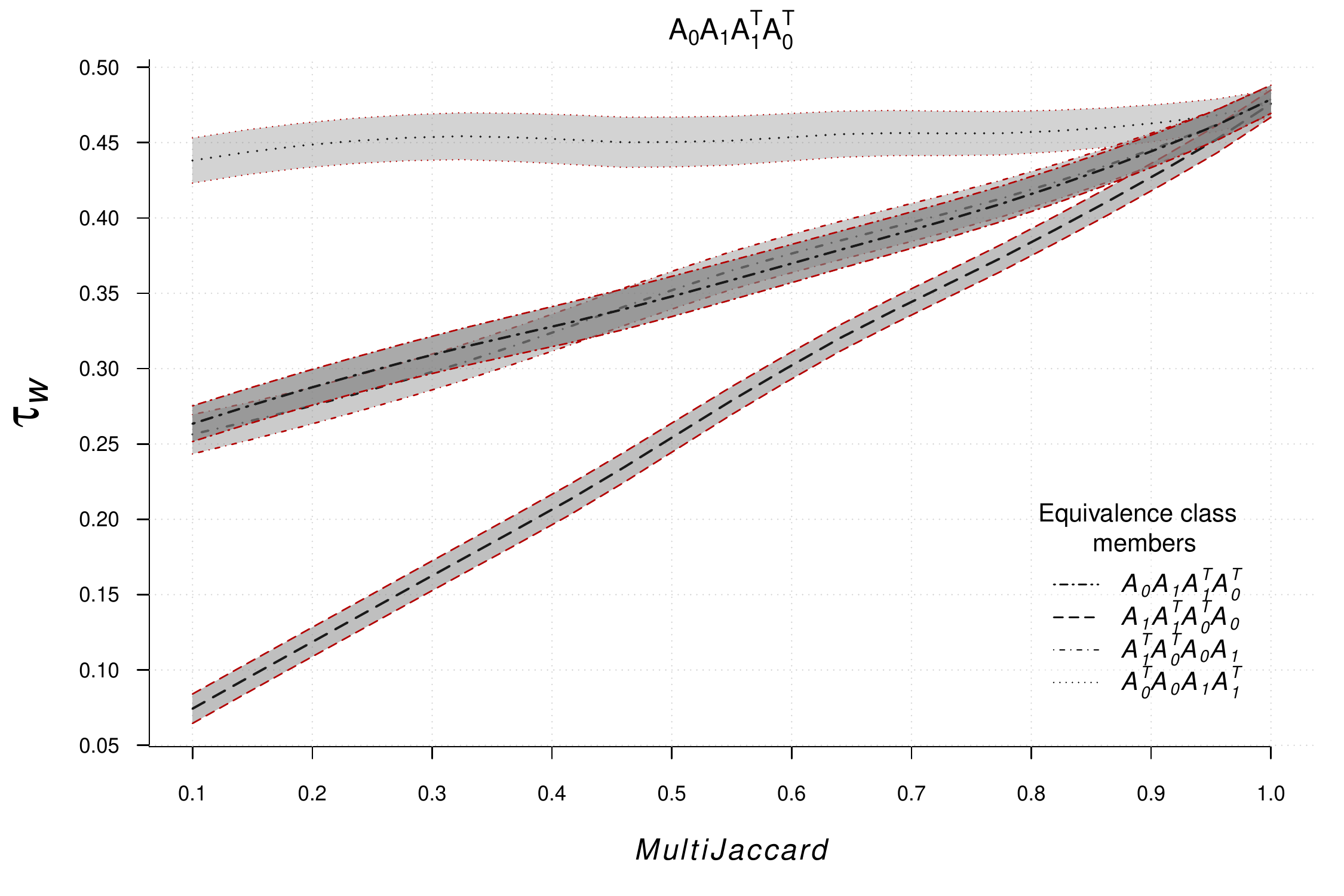}
  \caption{\label{fig:e2_erdos_c_A0A1A1TA0T}}
  \end{subfigure}
  \begin{subfigure}[b]{0.48\linewidth}
  \includegraphics[width =\linewidth]{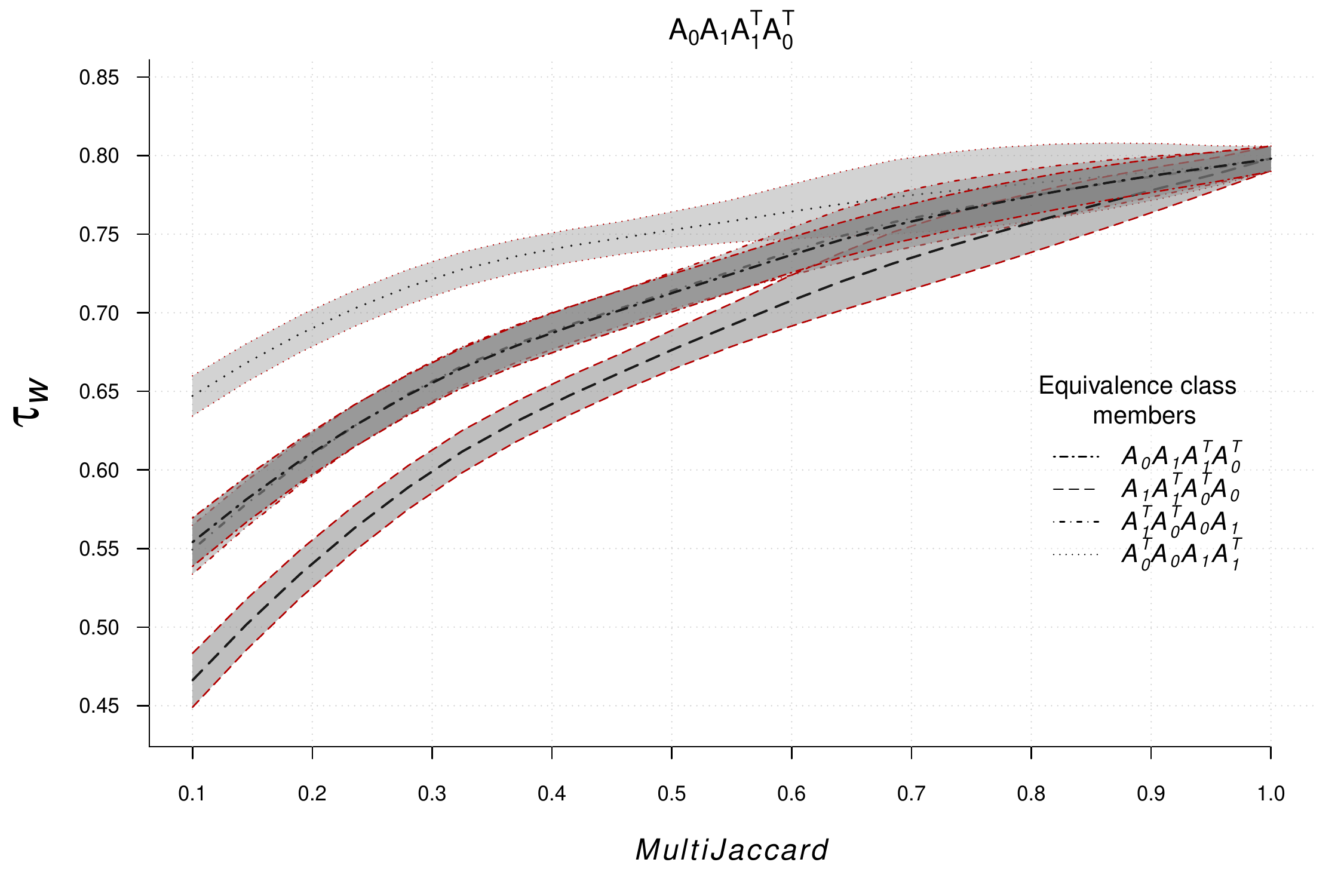}
  \caption{\label{fig:e2_sbm_c_A0A1A1TA0T}}
  \end{subfigure}
  
  \caption{Similarity between rankings produced by \textit{HITS} and \MRF   for a multiplex network of two layers considering all the members of the equivalence class. (a) and (b): $A_0A_0^TA_1^TA_1$; (c) and (d): $A_0A_1A_0^TA_1^T$; (e) and (f): $A_0A_1A_1^TA_0^T$.
 (a), (c), and (e) use an Erdos-Renyi graph as generator; (b), (d), and (f) use stochastic block model graph as generator.}
  \label{fig:e2_conf_A}
\end{figure*}
\end{center}

\begin{center}
\begin{figure*}[ht]
  \begin{subfigure}[b]{0.48\linewidth}
  \includegraphics[width =\linewidth]{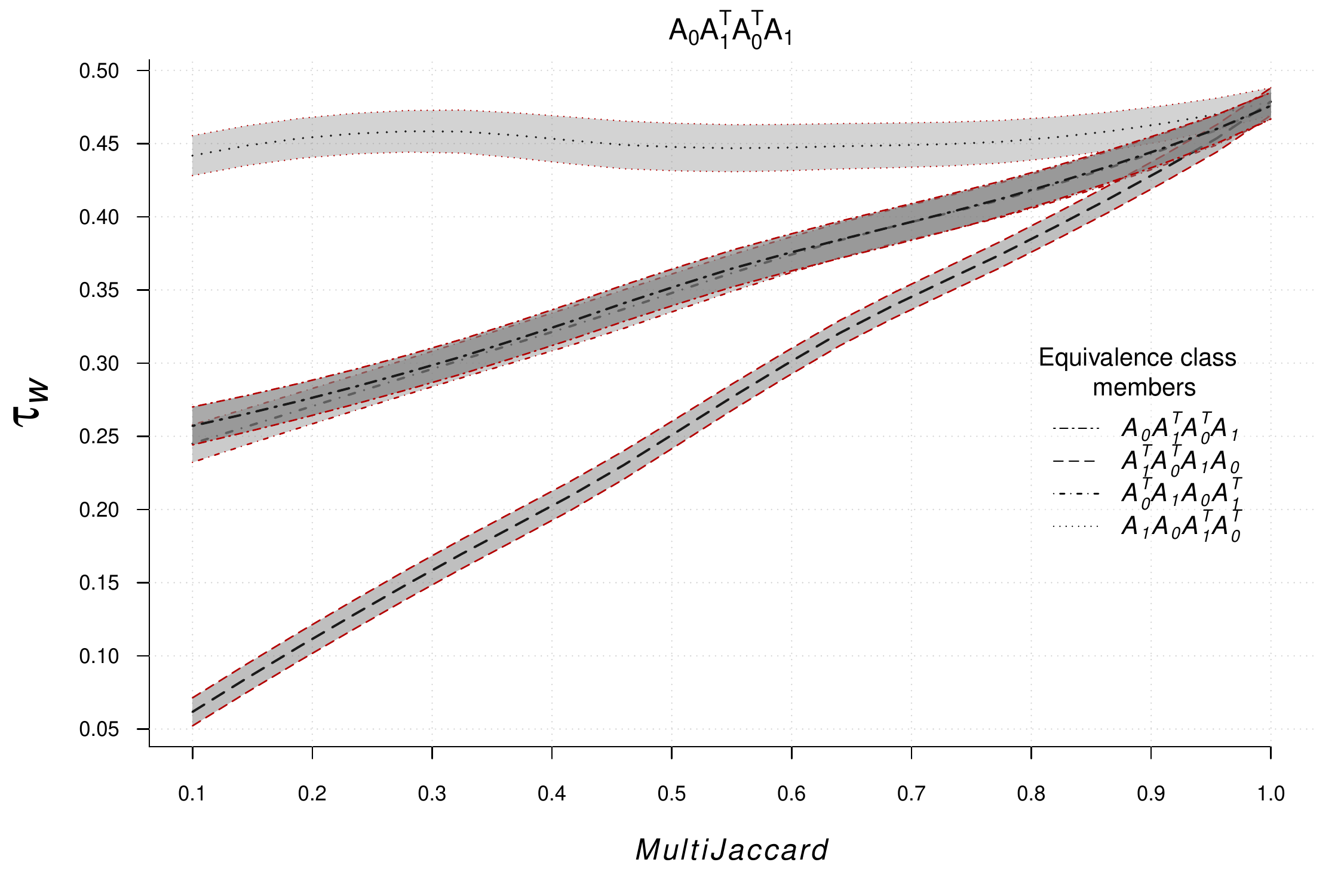}
  \caption{\label{fig:e2_erdos_c_A0A1TA0TA1}}
  \end{subfigure}
  \begin{subfigure}[b]{0.48\linewidth}
  \includegraphics[width =\linewidth]{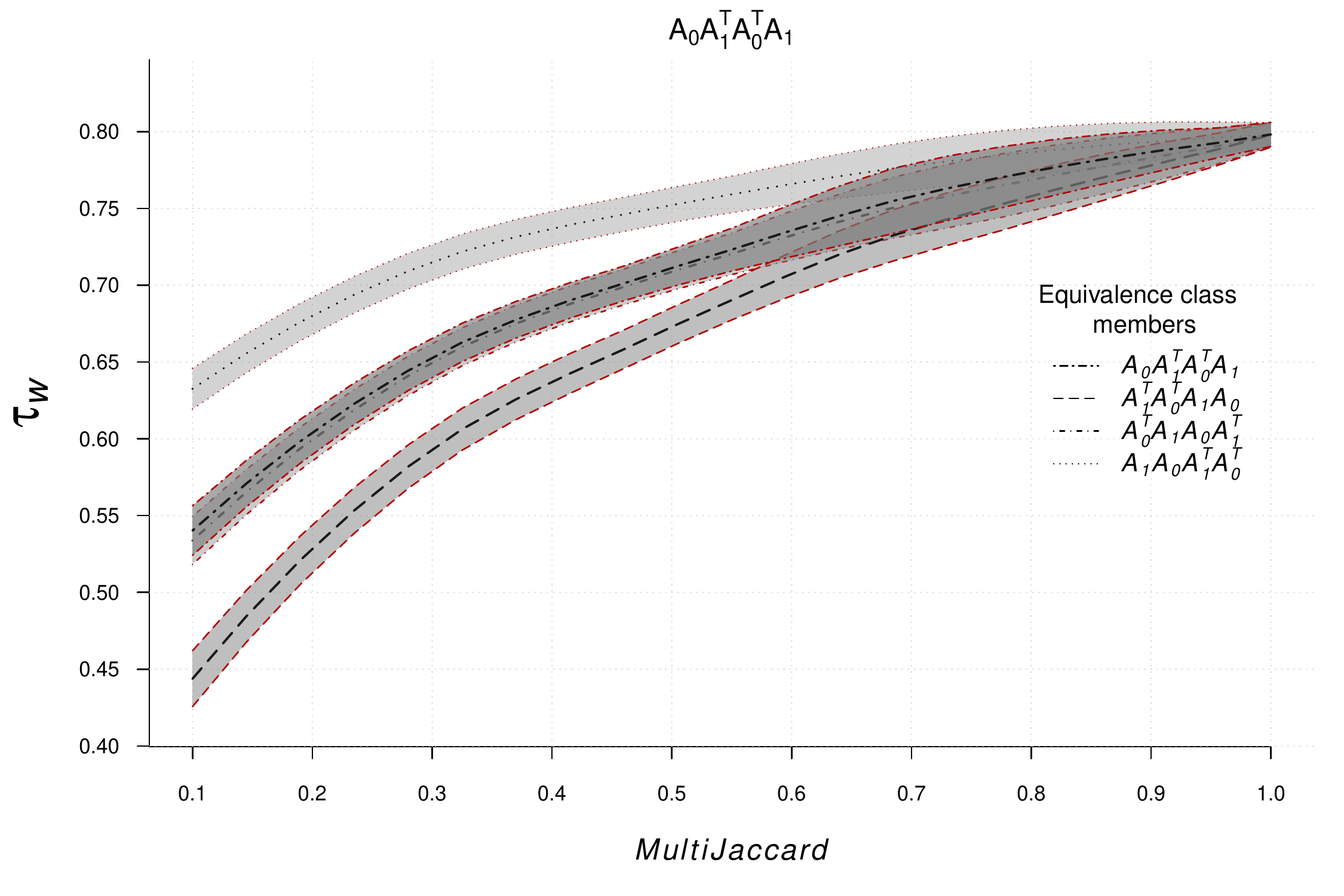}
  \caption{\label{fig:e2_sbm_c_A0A1TA0TA1}}
  \end{subfigure}
  \\
  \begin{subfigure}[b]{0.48\linewidth}
  \includegraphics[width =\linewidth]{exp2/e2_erdos_config_A_0A_1_TA_1A_0_T.pdf}
  \caption{\label{fig:e2_erdos_c_A0A1TA1A0T}}
  \end{subfigure}
  \begin{subfigure}[b]{0.48\linewidth}
  \includegraphics[width =\linewidth]{exp2/e2_sbm_config_A_0A_1_TA_1A_0_T.pdf}
  \caption{\label{fig:e2_sbm_c_A0A1TA1A0T}}
  \end{subfigure}
  \\
  \begin{subfigure}[b]{0.48\linewidth}
  \includegraphics[width =\linewidth]{exp2/e2_erdos_config_A_0A_0_TA_1A_1_T.pdf}
  \caption{\label{fig:e2_erdos_c_A0A0TA1A1T}}
  \end{subfigure}
  \begin{subfigure}[b]{0.48\linewidth}
  \includegraphics[width =\linewidth]{exp2/e2_sbm_config_A_0A_0_TA_1A_1_T.pdf}
  \caption{\label{fig:e2_sbm_c_A0A0TA1A1T}}
  \end{subfigure}
 
 \caption{Similarity between rankings produced by \textit{HITS} and \MRF  for a multiplex network of two layers considering all the members of the equivalence class. (a) and (b): $A_0A_1^TA_0^TA_1$; (c) and (d): $A_0A_1^TA_1A_0^T$; (e) and (f): $A_0A_0^TA_1A_1^T$. (a), (c), and (e) use an Erdos-Renyi graph as generator; (b), (d), and (f) use stochastic block model graph as generator.   \label{fig:e2_conf_B}}
\end{figure*}
\end{center}

\begin{center}
\begin{figure*}[ht]
  \begin{subfigure}[b]{0.38\linewidth}
  \includegraphics[width =\linewidth]{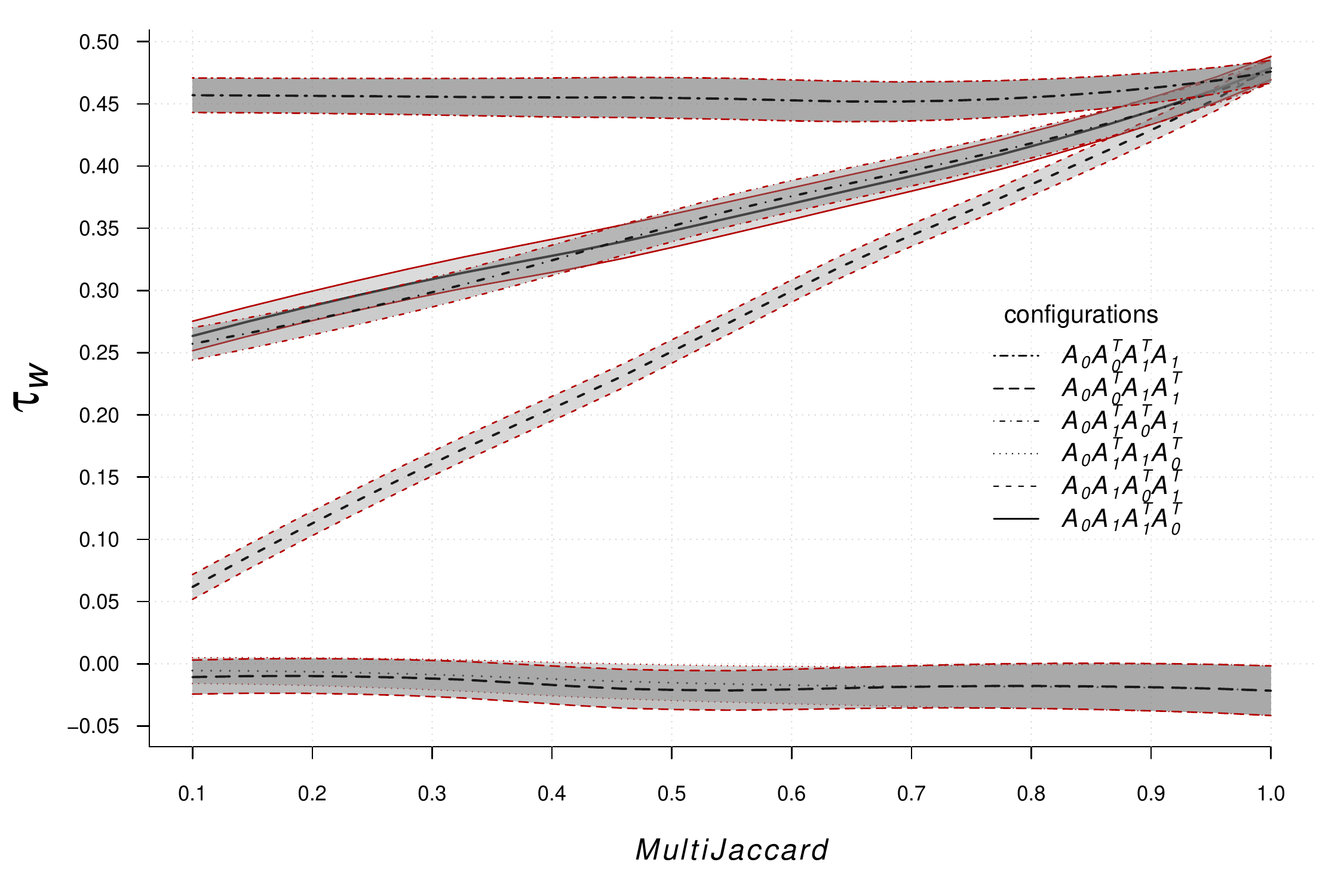}
  \caption{\label{fig:e2_erdos_shift_0}}
  \end{subfigure}
  \begin{subfigure}[b]{0.38\linewidth}
  \includegraphics[width =\linewidth]{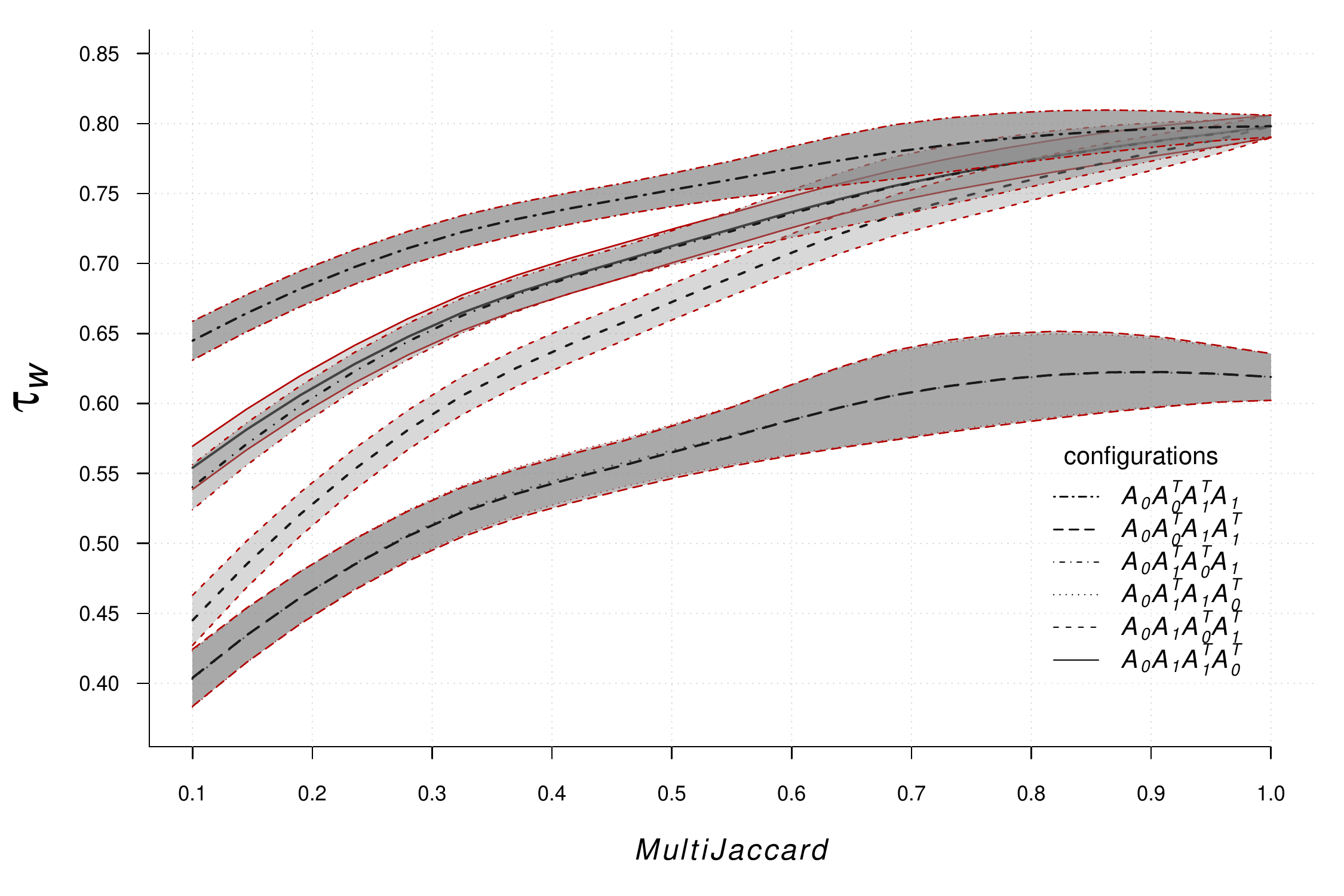}
  \caption{\label{fig:e2_sbm_shift_0}}
  \end{subfigure}
  \\
  \begin{subfigure}[b]{0.38\linewidth}
  \includegraphics[width =\linewidth]{exp2/e2_erdos_shift_1.pdf}
  \caption{\label{fig:e2_erdos_shift_1}}
  \end{subfigure}
  \begin{subfigure}[b]{0.38\linewidth}
  \includegraphics[width =\linewidth]{exp2/e2_sbm_shift_1.pdf}
  \caption{\label{fig:e2_sbm_shift_1}}
  \end{subfigure}
  \\
  \begin{subfigure}[b]{0.38\linewidth}
  \includegraphics[width =\linewidth]{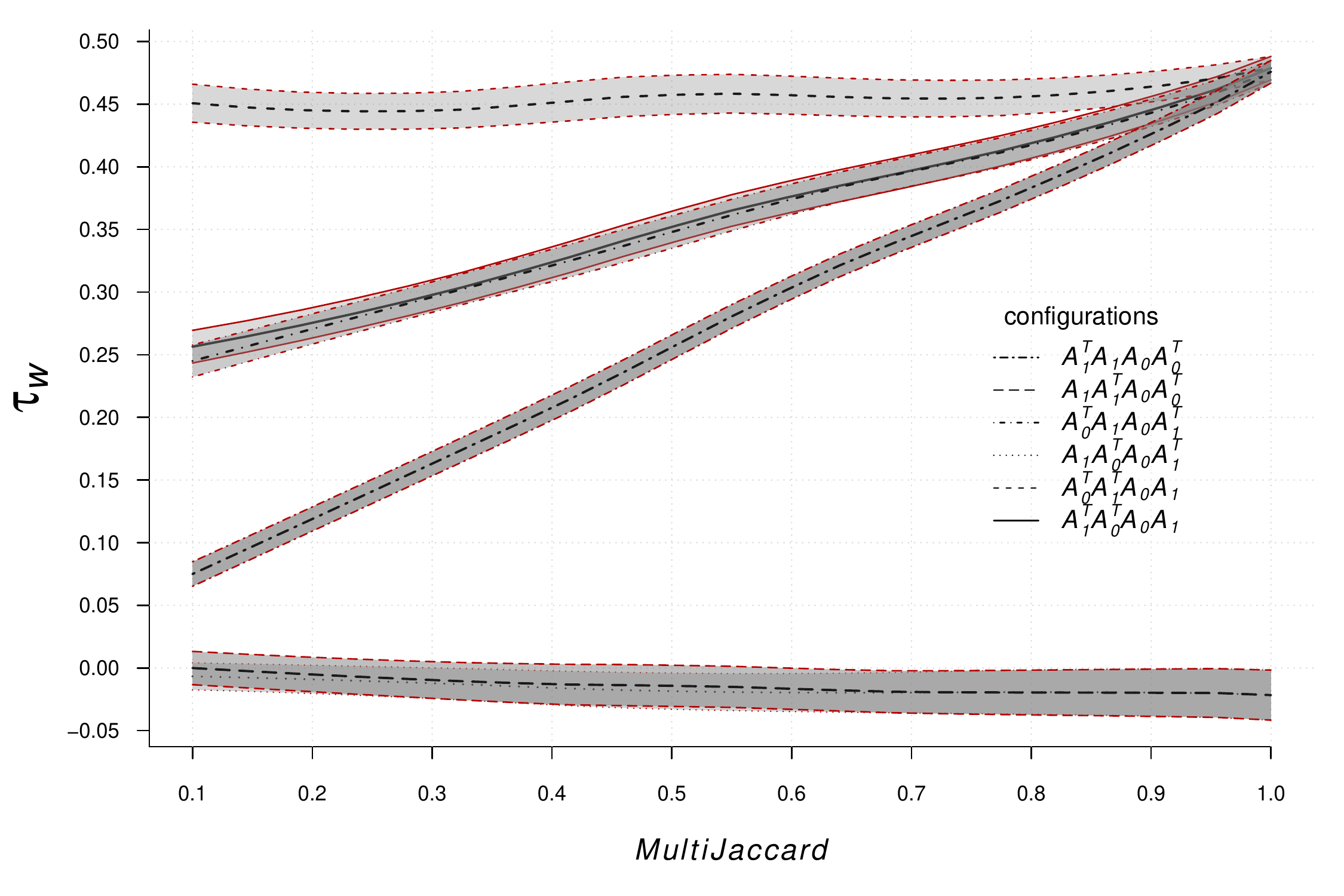}
  \caption{\label{fig:e2_erdos_shift_2}}
  \end{subfigure}
  \begin{subfigure}[b]{0.38\linewidth}
  \includegraphics[width =\linewidth]{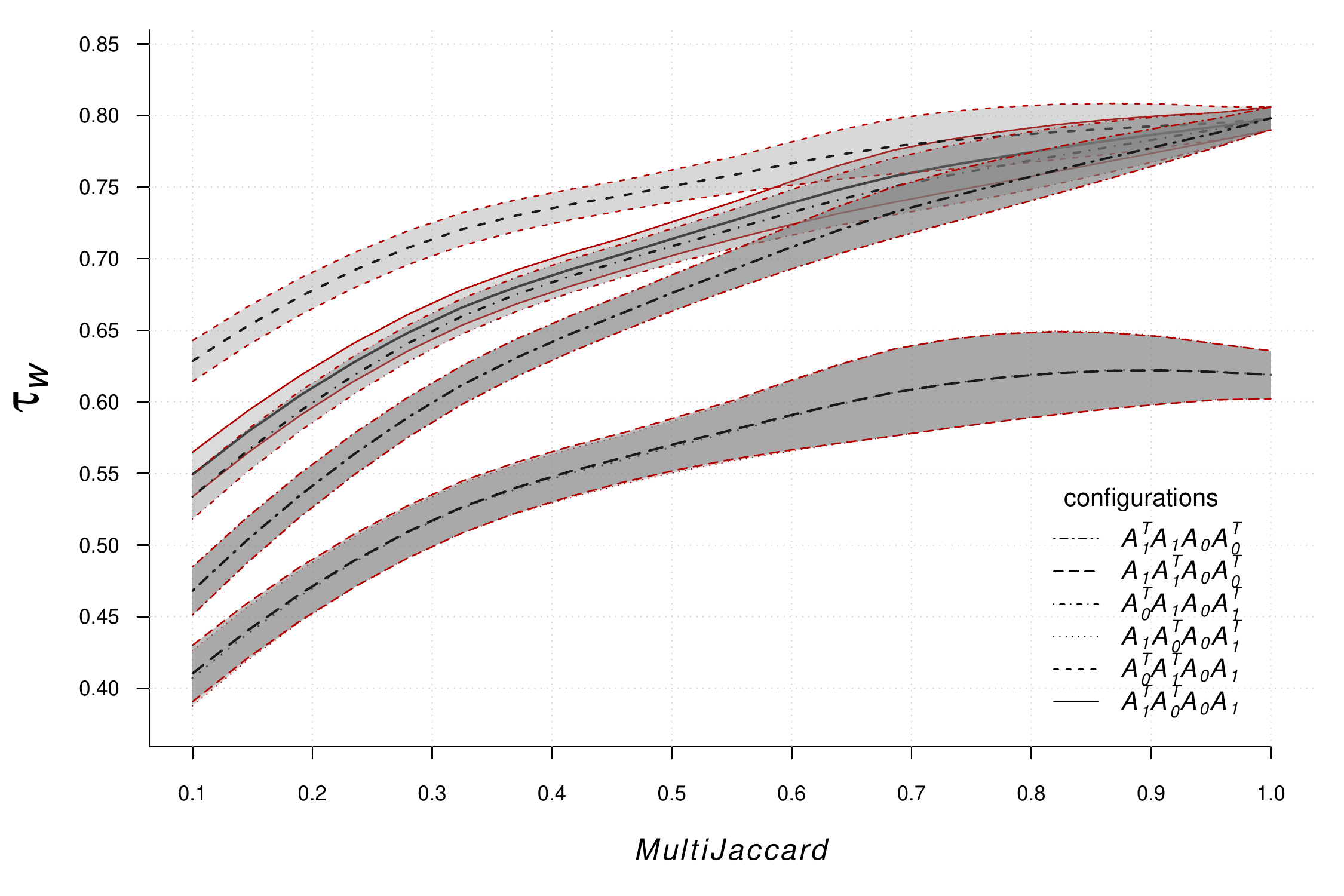}
  \caption{\label{fig:e2_sbm_shift_2}}
  \end{subfigure}
  \\
  \begin{subfigure}[b]{0.38\linewidth}
  \includegraphics[width =\linewidth]{exp2/e2_erdos_shift_3.pdf}
  \caption{\label{fig:e2_erdos_shift_3}}
  \end{subfigure}
  \begin{subfigure}[b]{0.38\linewidth}
  \includegraphics[width =\linewidth]{exp2/e2_sbm_shift_3.pdf}
  \caption{\label{fig:e2_sbm_shift_3}}
  \end{subfigure}

 \caption{Similarity between rankings produced by \textit{HITS} and \MRF for a multiplex network of two layers using all possible configurations. (a) and (b): $shift_0$; (c) and (d):  $shift_1$; (e) and (f):  $shift_2$; (g) and (h):  $shift_3$. (a) and (c) and (e) and (g) use  an Erdos-Renyi graph as generator; (b) and (d) and (f) and (h) use  a stochastic block model graph as generator.}
  \label{fig:e2_shift}
\end{figure*}
\end{center}

\begin{center}
\begin{figure*}[ht]
    \begin{subfigure}[b]{0.45\linewidth}
  \includegraphics[width =\linewidth]{exp3/conv_erdos_nolog.pdf}
  \caption{\label{fig:e3_erdos_conv}}
  \end{subfigure}
  \begin{subfigure}[b]{0.45\linewidth}
  \includegraphics[width =\linewidth]{exp3/conv_sbm_nolog.pdf}
  \caption{\label{fig:e3_sbm_conv}}
  \end{subfigure}  
  \caption{\label{fig:e3_conv} Convergence to the limit ranking vector using an Erdos-Renyi graph and a stochastic block model graph. The plots show $||v(\tau)-v(0)||_1$ against the number of times $\tau$ has been halved.}  
\end{figure*}
\end{center}

\begin{center}
\begin{figure*}[ht]
    \begin{subfigure}[b]{0.45\linewidth}
  \includegraphics[width =\linewidth]{exp3/conv_erdos_tau.pdf}
  \caption{\label{fig:e3_erdos_const}}
  \end{subfigure}
  \begin{subfigure}[b]{0.45\linewidth}
  \includegraphics[width =\linewidth]{exp3/conv_sbm_tau.pdf}
  \caption{\label{fig:e3_sbm_const}}
  \end{subfigure}  
  \caption{\label{fig:e3_const} The quantity $||v(\tau) - v(0)||_1/\tau$ is plotted against $-\log_2\tau$, thus highlighting the (asymptotical) linear behavior of $||v(\tau)- v(0)||_1$ as a function of $\tau$.}  
\end{figure*}
\end{center}

\begin{center}
\begin{figure*}[ht]
    \begin{subfigure}[b]{0.45\linewidth}
  \includegraphics[width =\linewidth]{exp3/layers_erdos.pdf}
  \caption{\label{fig:e3_erdos_lay}}
  \end{subfigure}
  \begin{subfigure}[b]{0.45\linewidth}
  \includegraphics[width =\linewidth]{exp3/layers_sbm.pdf}
  \caption{\label{fig:e3_sbm_lay}}
  \end{subfigure}  
  \caption{\label{fig:e3_lay} Total number of iterations performed in each Erdos-Renyi and SBM experiment for different values of the number of layers.}  
\end{figure*}
\end{center}

\end{document}